\tikzset{->-/.style={decoration={
  markings,
  mark=at position #1 with {\arrow{>}}},postaction={decorate}}}
\tikzset{->-/.default=0.5}
\newcommand\wt{\widetilde}
\newcommand\wh{\widehat}
\newcommand{\CC}{{\mathbb C}}
\newcommand{\RR}{{\mathbb R}}
\newcommand{\TT}{{\mathbb T}}
\newcommand{\ZZ}{{\mathbb Z}}
\newcommand{\EE}{{\mathbb E}}
\newcommand{\C}{\mathbb{C}}
\newcommand{\R}{\mathbb{R}}
\newcommand{\Z}{\mathbb{Z}}
\newcommand{\cK}{{\mathcal K}}
\newcommand{\cB}{{\mathcal B}}
\newcommand{\sG}{{\mathscr G}}
\newcommand{\fS}{{\mathsf S}}
\newcommand{\vect}[1]{\mathbf{#1}}
\newcommand{\im}{\mathrm{i}}
\newcommand{\cE}{\mathcal{E}}
\newcommand{\vI}{\raisebox{\depth}{\scalebox{1}[-1]{$\Finv$}}}
\newcommand{\vG}{\Finv}
\newcommand{\vC}{\color{brown}{\raisebox{\depth}{\scalebox{1}[-1]{$\Finv$}}}}
\newcommand{\vCG}{\color{brown}{\Finv}}
\newcommand{\pg}{\mathsf{pg}}
\newcommand{\ztwo}{\{\pm 1\}}
\newtheorem{theorem}{Theorem}[section]
\newtheorem{lemma}[theorem]{Lemma}
\newtheorem{proposition}[theorem]{Proposition}
\newtheorem{corollary}[theorem]{Corollary}
\newtheorem{thm}{Theorem}[section]
\newtheorem{prop}[thm]{Proposition}
\newtheorem{lem}[thm]{Lemma}
\newtheorem{cor}[thm]{Corollary}
\theoremstyle{definition}
\theoremstyle{remark}
\theoremstyle{remark}
\newtheorem{definition}[theorem]{Definition}
\newtheorem{remark}[theorem]{Remark}
\newtheorem{example}[theorem]{Example}
\begin{document}

\title{Crystallographic bulk-edge correspondence: glide reflections and twisted mod 2 indices}

\author[1]{Kiyonori Gomi}
\affil[1]{Department of Mathematical Sciences, Shinshu University, Matsumoto, Nagano 390-8621, Japan}

\author[2]{Guo Chuan Thiang}
\affil[2]{School of Mathematical Sciences, University of Adelaide, SA 5000, Australia}


\maketitle

\begin{abstract}
A 2-torsion topological phase exists for Hamiltonians symmetric under the wallpaper group with glide reflection symmetry, corresponding to the unorientable cycle of the Klein bottle fundamental domain. We prove a mod 2 twisted Toeplitz index theorem, which implies a bulk-edge correspondence between this bulk phase and the exotic topological zero modes that it acquires along a boundary glide axis. 
\end{abstract}


\tableofcontents

\section{Introduction}
In the field of topological phases of matter, the \emph{bulk-boundary correspondence} is of paramount importance, allowing topological invariants of spectrally-gapped physical systems in the bulk to be detected at a boundary as certain types of exotic gapless modes. Heuristically, the interface between two bulk gapped phases having different topological invariants must violate the gapped condition in order for ``continuous interpolation'' between the invariants to occur. Furthermore, the gapless modes at the interface reflect the difference in the bulk invariants on either side, thereby enjoying ``topological protection''. The possibility of combining, almost paradoxically, an insulating (gapped) bulk and a conducting (gapless) boundary in a single ``topological material'' opens the path for numerous applications.

The experimentally discovered topological phases exhibit such a correspondence --- indeed it is often difficult to probe the bulk invariants directly, and it is through detection of topological boundary modes that the topological non-triviality of a bulk phase can be inferred. Recently discovered topological boundary modes include the Dirac cones of 3D topological insulators \cite{Hsieh}, and Fermi arcs of Weyl semimetals \cite{WTVS, Xu}. Going further back, the role of edge states in relation to the quantised bulk Hall conductance for the Integer Quantum Hall Effect (IQHE) was already studied theoretically in \cite{Halperin, Hatsugai}. 

Nowadays, topological phases subject to \emph{crystallographic} symmetry constraints are also under intense study \cite{SSG2,Bradlyn,Kruthoff,Wieder,GT}. Most existing works address the classification problem --- identifying and computing the invariants that classify bulk topological phases with given crystallographic symmetry. Following the ideas of \cite{Kitaev}, the relevant topological invariants in the absence of crystallographic point group symmetries can be viewed as certain $K$-theory classes for (quasi)momentum space. Consideration of crystallographic symmetries in this approach led to the introduction of twisted equivariant $K$-theory into the subject \cite{FM, Thiang, SSG2}, and also motivated the study of new generalised notions of $K$-theory twists \cite{FM, Gomi2}, with \emph{graded} twists of particular interest in this paper.

Given the expected universality of the bulk-boundary correspondence, it is desirable to exhibit it as a mathematical theorem. This has been achieved for the IQHE \cite{KRS}, as well as other non-interacting electron systems in the so-called standard ``complex symmetry classes'' \cite{PSB} (no antiunitary or crystallographic point group symmetries), both in the clean limit and in certain disordered regimes. The general idea is that a certain topological boundary (or index) homomorphism maps $K$-theoretic bulk invariants to boundary invariants, and that via an index theorem, this map is also an analytic (Fredholm) index constructed out of a concrete physical model (with Hilbert spaces, Hamiltonians, boundary conditions etc.). As an illustrative example, we explain in detail how the bulk-edge correspondence for ``chiral AIII class'' topological invariants, realised in the Su--Schrieffer--Heeger (SSH) model, can be understood as a classical index theorem for Toeplitz operators. Much less is known mathematically when general crystallographic symmetries are present. 

We will show how this story can be run in reverse --- instead of using a known index theorem to formulate and prove a correspondence of bulk and boundary invariants, we motivate and apply the \emph{crystallographic bulk-boundary correspondence} heuristic to formulate new mathematical index theorems in twisted $K$-theory, which we proceed to prove. The index theorem then justifies the bulk-boundary correspondence. For example, the classical Gohberg--Krein index theorem can be anticipated simply by analysing the bulk-boundary correspondence in the SSH model pictorially (see Fig.\ \ref{fig:SSH}). We focus on the wallpaper group $\pg$, which is physically simple --- the $\ZZ^2$ lattice translations are enhanced by glide reflections --- but rich enough that a twisted $K$-theory group is needed to classify $\pg$-symmetric topological phases. When chiral/sublattice symmetry is also present, there is in fact a 2-torsion bulk phase which we call the \emph{Klein bottle phase}, and it should be detectable through boundary modes along a 1D glide axis. We verify this hypothesis by writing down a concrete 2D tight-binding Hamiltonian which realises the Klein bottle phase, and which indeed has an exotic zero mode on a glide axis boundary. 

An important physical insight for crystallographic bulk-boundary correspondences is that the boundary can have symmetry group a \emph{frieze group} (or more generally a \emph{subperiodic group} \cite{Kopsky}) rather than an ordinary (lower-dimensional) crystallographic space group. In our $\sf{pg}$ example with boundary (an edge) taken to be a glide axis, the symmetry group for the edge is the frieze group $\sf{p11g}$, which is abstractly $\ZZ$ but whose generator is geometrically a glide reflection effecting an exchange of the label ``above/below the axis''. The distinction of $\sf{p11g}$ from the group generated by an ordinary translation is achieved by the data of a nontrivial $\ZZ_2$-grading on $\sf{p11g}\cong\ZZ$ recording the ``above/below'' exchange. The importance of such graded frieze/subperiodic groups in formulating crystallographic bulk-boundary correspondences is further explored in \cite{GT}. Our 2D $\sf{pg}$ example is a basic toy model in which a correspondence of crystallographic $\mathbb{Z}/2$ bulk/edge phases can be formulated rigorously. In 3D, some proposals for material realisation of $\mathbb{Z}/2$ topological crystalline insulators protected by nonsymmorphic symmetry have been made, e.g.\ \cite{FF} pp.\ 3, and \cite{SSG1}. 

We mention in passing earlier work \cite{RH,Hatsugai2}, related to the SSH model, where the role of chiral symmetry in producing edge states in graphene and 2D $d$-wave superconductors is explained. There, the Berry phase angle is constrained to be multiples of $\pi$ so that the exponentiated phase (${\rm U}(1)$ holonomy) is not arbitrary but $\ZZ_2=\{\pm1\}$ valued. Those edge states are not the same as the intrinsically $\ZZ/2$ edge modes in our chiral $\pg$-symmetric model where the glide reflections (together with chiral symmetry) play a fundamental role. A discussion of crystallographic bulk-boundary correspondence appears in \cite{Kubota2}, but our setup is very different; for instance, the geometric choice of boundary picks out an appropriate ``boundary symmetry group'' which the index should respect.

Mathematically, our bulk and edge topological invariants are defined in twisted $K$-theory, with respect to the bulk $\pg$ symmetries and the graded frieze group symmetry of the edge glide axis. There is a natural topological index map (an equivariant Gysin push-forward map) which ``integrates out'' the quasimomenta transverse to the glide axis, and facilitates the computation of the bulk and edge $K$-theory groups (Proposition \ref{prop:twistedK1}). We prove a mod 2 index theorem (Theorem \ref{thm:twistedindextheorem}) which states that the analytic index map taking the ($K$-theory class) of a Hamiltonian to its edge zero modes, is equal to the topological index map. This demonstrates that the Klein bottle phase is indeed topologically non-trivial and is associated to a new type of topologically protected edge state, characterised by a subtle mod 2 index.

\subsection*{Outline}
We begin with a careful analysis of the bulk-boundary correspondence in the SSH model in Section \ref{sec:SSHsection}, and explain how it is a consequence of a classical index theorem. In Section \ref{sec:pgtight}, we construct a tight-binding model with $\pg$ and chiral symmetry, illustrate some ``dimerised'' Hamiltonians, and argue in Section \ref{sec:1Dedge} that one of them (the ``Klein bottle phase'') exhibits 2-torsion topological zero modes when truncated along a glide axis. We start the detailed mathematical analysis in Section \ref{sec:twistedindices}, introducing twisted (Bloch) vector bundles and their classification by twisted $K$-theory, and deducing that for $\pg$ symmetry, the bulk invariant can have 2-torsion. In Section \ref{sec:edgeinvariants}, we introduce graded symmetry groups (relevant for glide axes) which induce graded $K$-theory twists for the edge topological invariant, and clarify connections between several models for $K$-theory with such twists. In Section \ref{sec:genToeplitz}, we construct the bulk-edge homomorphisms topologically and analytically, and prove a mod 2 index theorem that shows their equality. We also outline a crystallographic T-duality which is suggested by our results. Detailed computations of the bulk and edge twisted $K$-theory groups are provided in an Appendix.
\medskip

{\bf Notation}: $\ZZ_2$ denotes the 2-element group $\ztwo$ written multiplicatively, while $\ZZ/2=\{0,1\}$ is the additive version.

\section{Warm up: 1D Su--Schrieffer--Heeger model}\label{sec:SSHsection}
\subsection{1D chiral symmetric Hamiltonians}
Consider Hamiltonians $H$ with chiral/sublattice symmetry $\fS$ (i.e.\ $H\fS=-\fS H$) as well as a 1D lattice $\ZZ$ of translation symmetries. The Brillouin zone is a circle $S^1$, with coordinate $k\in [0,2\pi]/_{\sim 0\equiv 2\pi }$. A basic model begins with a tight-binding Hilbert space $l^2(\ZZ) \otimes \CC^2$, representing a degree of freedom at each site of an infinite chain, which is split by a sublattice operator $\fS=\fS^\dagger, \fS^2=1$ which commutes with translations by $\ZZ$. This means that the chain comprises two translation invariant sublattices labelled by $A$ and $B$ (Fig.\ \ref{fig:SSH}), corresponding respectively to the $+1$ and $-1$ eigenspaces of $\fS$. With a choice of unit cell, each containing one A and one B site, and a basis for its two degrees of freedom so that $\CC[\binom{1}{0}],\CC[\binom{0}{1}]$ corresponds to the A and B sites respectively, we can write the Hilbert space as $l^2(\ZZ)\oplus l^2(\ZZ)$ which Fourier transforms to $L^2(S^1)\oplus L^2(S^1)$. Then translation by $n\in\ZZ$ becomes pointwise multiplication, for each $k\in S^1$, by $e^{\im nk}$ on each direct sum factor, while $\fS$ acts as
$\fS(k)=\begin{pmatrix}1 & 0 \\ 0 & -1\end{pmatrix}$. 

By definition of chiral/sublattice symmetry, $H$ anticommutes with $\fS$, so 
\begin{equation}
H(k)=\begin{pmatrix}0 & U(k) \\ U(k)^* & 0 \end{pmatrix}, \quad U(k)\in\CC.\label{SSHmatrix}
\end{equation} 
We assume that $k\mapsto H(k)$ is continuous, so that $k\mapsto U(k)$ is continuous\footnote{This is related to decay of the hopping terms as the hopping range goes to infinity.}. If $H$ is furthermore gapped, $(H(k))^2>0$ so we need $U(k)U(k)^*\neq 0$, so that $U(k)\in\CC^*\cong \text{GL}(1)$. We can homotope $H$ to $\text{sgn}(H)$, which corresponds to replacing $U(k)$ by $U(k)/|U(k))|\in\text{U}(1)$. There is an obvious topological invariant, which is the winding number of $U:S^1\rightarrow \text{U}(1)$. A typical $H$ with such a symmetry is the Hamiltonian of Su--Schrieffer--Heeger used to model the polymer polyacetylene.

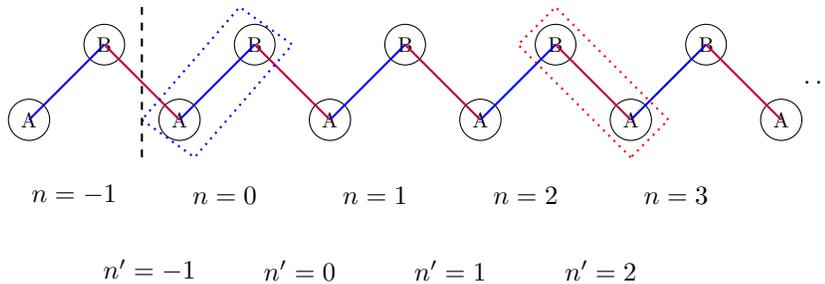
\begin{figure}
\begin{tikzpicture}[every node/.style={scale=1}]
\foreach \p in {(0,0),(2,0),(4,0),(6,0),(8,0),(10,0)}{\draw \p node[circle,draw,scale=0.8] {A};}
\foreach \p in {(1,1),(3,1),(5,1),(7,1),(9,1)}{\draw \p node[circle,draw,scale=0.8] {B};}

\draw[thick,blue] (0,0) -- (1,1);
\draw[thick,blue] (2,0) -- (3,1);
\draw[thick,blue] (4,0) -- (5,1);
\draw[thick,blue] (6,0) -- (7,1);
\draw[thick,blue] (8,0) -- (9,1);

\draw[thick,purple] (1,1) -- (2,0);
\draw[thick,purple] (3,1) -- (4,0);
\draw[thick,purple] (5,1) -- (6,0);
\draw[thick,purple] (7,1) -- (8,0);
\draw[thick,purple] (9,1) -- (10,0);

\node at (10.5,0.5) {$\ldots$};
\draw[dotted,blue,thick] (1.5,0) -- (2.2,-0.5) -- (3.5,1) -- (2.8,1.5) -- (1.5,0);

\draw[dotted,red,thick] (8,-0.5) -- (6.5,1) -- (7,1.5) -- (8.5,0) -- (8,-0.5);

\node at (0.6,-1) {$n=-1$};
\node at (2.6,-1) {$n=0$};
\node at (4.6,-1) {$n=1$};
\node at (6.6,-1) {$n=2$};
\node at (8.6,-1) {$n=3$};

\node at (1.6,-2) {$n'=-1$};
\node at (3.6,-2) {$n'=0$};
\node at (5.6,-2) {$n'=1$};
\node at (7.6,-2) {$n'=2$};

\draw[dashed,thick] (1.5,-0.5)--(1.5,1.5);

\end{tikzpicture}
\caption{An infinite chain with two sublattices. Blue and red dotted rectangles indicate two different choices of unit cells, with respective labels $n$ and $n'$. Blue/red connections represent two dimerised limits $H_{\rm blue}, H_{\rm red}$ of chiral symmetric Hamiltonians. Only $H_{\rm red}$ acquires a dangling zero A-mode when truncated to $n\geq 0$.}\label{fig:SSH}
\end{figure}

\subsection{Toeplitz index theorem and dangling boundary modes}
Let us try to understand the position space meaning of $U$ and its winding number $\text{Wind}(U)$. The ``hopping term'' $U_{\rm blue}$ taking A to B rightwards within a unit cell is represented, after Fourier transform, by $\begin{pmatrix} 0 & 0 \\ 1 & 0\end{pmatrix}$, whereas the term $U_{\rm red}$ taking B to A rightwards \emph{changes} unit cell and is represented by $\begin{pmatrix} 0 & e^{\im k} \\ 0 & 0\end{pmatrix}$. The general Hamiltonian is a self-adjoint combination of powers of $U_{\rm blue}, U_{\rm red}$ which is also required to be gapped and compatible with $\fS$, so that after Fourier transform, it has the form in Eq.\ \eqref{SSHmatrix}. 

Consider the ``fully-dimerised'' Hamiltonian $H_{\rm blue}=U_{\rm blue}+U^\dagger_{\rm blue}=\begin{pmatrix} 0 & 1 \\ 1 & 0\end{pmatrix}$ which has winding number $0$. Another fully-dimerised Hamiltonian is $H_{\rm red}=U_{\rm red}+U^\dagger_{\rm red}=\begin{pmatrix} 0 & e^{\im k} \\ e^{-\im k} & 0\end{pmatrix}$ has winding number $1$. More generally, if a $\fS$-compatible gapped Hamiltonian has Wind($U$)$=w$, this means that its ``dimerised limit'' is given by a $B\rightarrow A$ hopping term which crosses $w$ unit cells to the right.

It is clear that $H_{\rm blue}$ and $H_{\rm red}$ are unitarily equivalent by translating the $A$ sublattice by one unit, or equivalently, choosing a different unit cell convention such that $T_{\rm red}$ is an intra-cell hopping term rather than an inter-cell one\footnote{In Fourier space, this is effected by the large gauge transformation \cite{Thiang2}
$\begin{pmatrix} e^{\im k} & 0 \\ 0 & 1\end{pmatrix}$ corresponding to shifting the origin of the $A$ lattice by one unit.}, see the difference in site labels $n, n'$ in Fig.\ \ref{fig:SSH}. Thus the absolute winding number has some inherent ambiguity, although the \emph{change} in winding number does not --- this is already familiar from the notion of \emph{polarisation}. The difference between the winding numbers for $H_{\rm blue}$ and $H_{\rm red}$ (computed using either unit cell convention) means that they cannot be deformed into one another without violating the gapped condition.

Nevertheless, when a boundary of the chain is specified, the unit cells are uniquely defined under the convention that only complete unit cells are allowed on the half-line --- this corresponds to a ``gauge-fixing''. With respect to a boundary, the bulk topological invariant of $H$ (the winding number) is well-defined, and corresponds to a boundary invariant (number of dangling zero modes) through a Toeplitz index theorem, as we will now explain. In particular, there is a well-defined notion of a ``zero/trivial phase'' which will have no such boundary zero modes.

The Hilbert subspace of $L^2(S^1)$ spanned by the $A$ (resp.\ $B$) degrees of freedom in the right-half-line comprises the functions with non-negative Fourier coefficients --- this is also called the \emph{Hardy space} $\mathcal{H}^2_A$ (resp.\ $\mathcal{H}^2_B$). After truncating $L^2(S^1)\oplus L^2(S^1)$ to $\mathcal{H}^2_A\oplus \mathcal{H}^2_B$, the hopping operator $U_{\rm blue}$ remains unitary, whereas $U_{\rm red}$ becomes only an isometry $T_{U_{\rm red}}$ since $T_{U_{\rm red}}T_{U_{\rm red}}^\dagger=1-p_{n_A=0}$ where $p_{n_A=0}$ is the projection onto the A-site at $n=0$. In fact, $T_{U_{\rm red}}$ is a \emph{Toeplitz} operator with \emph{symbol} the invertible function $U_{\rm red}(k)=e^{\im k}$, which is Fredholm with index equal to $-\text{Wind}(U_{\rm red})=-1$. 

Recall that a bounded operator is \emph{Fredholm} if its kernel and cokernel are finite-dimensional. Its \emph{Fredholm index} is the difference of these dimensions, so for example, the index of $T_{U_{\rm red}}$ is the number of zero modes of $T_{U_{\rm red}}$ minus the number of zero modes of $T_{U_{\rm red}}^\dagger$. The product of two Fredholm operators is again Fredholm, and the index of a product is the sum of the indices. The \emph{Toeplitz operator} $T_U$ with continuous symbol $U\in C(S^1)$ is the compression
 of the multiplication operator $M_U$ on $L^2(S^1)$ to the Hardy space,
\begin{equation*}
\begin{tikzcd}
\mathcal{H}^2\ar[r,"\iota"]\ar[rrr,out=-30,in=210,swap,"T_U"] & L^2(S^1)\ar[r,"M_U"] & L^2(S^1) \ar[r,"p"] & \mathcal{H}^2.
\end{tikzcd}
\end{equation*}
Here, $\iota$ is the inclusion and $p$ is the orthogonal projection onto $\mathcal{H}^2$ so that $p\circ\iota$ is the identity map. The operator $T_U$ is Fredholm iff its symbol $U$ is invertible everywhere \cite{Coburn}. The Gohberg--Krein index theorem identifies the (analytic) Fredholm index of an invertible Toeplitz operator with the (topological) winding number index of its symbol, up to a sign \cite{GK}. 

The SSH model provides a physical interpretation of this index theorem as a bulk-edge correspondence identifying the bulk index/winding number with the number of unpaired edge modes\footnote{More precisely, the number of unpaired $A$ modes minus the number of unpaired $B$ modes.}. Intuitively, the bulk-with boundary Hamiltonian $\breve{H}_{\rm red}$ ($\text{Wind}=1$) acquires an unpaired zero-energy $A$ mode because the $B\rightarrow A$ hopping term from unit cell $n=-1$ to $n=0$ is ``cut off'' by the boundary and replaced by the zero operator on the $A$ site at $n=0$, On the other hand, the hopping terms on the right half-line remain unchanged for $\breve{H}_{\rm blue}$ ($\text{Wind}=0$) so that all sites stay paired up.

\subsection{Cancellation of winding numbers and zero modes}
Higher winding numbers may be obtained by ``dimerising'' across more unit cells, which causes more bonds to be intercepted by the boundary, leaving behind more unpaired zero modes. This may be unnatural from the energetics point of view, so an alternative is to consider direct sums of the basic SSH model, so that there are $2N$ sites per unit cell. Then a $\fS$-symmetric Hamiltonian has the form 
\begin{equation*}
H(k)=\begin{pmatrix}0 & U(k) \\ U(k)^\dagger & 0 \end{pmatrix}, \quad U(k)\in {\rm GL}(N,\CC)
\end{equation*} 
which still has a topological invariant given by the ordinary winding number of det($U$). Allowing $N$ to be finite but arbitrarily large, we can form direct sums $H(k)\oplus H'(k)$ whose off-diagonal term is $U(k)\oplus U'(k)\in {\rm GL}(N+N',\CC)$. 

Let us analyse what happens to the boundary zero modes under direct sum. Fig.\ \ref{fig:SSH2} represents a dimerised Hamiltonian $H_{\rm green}$ with winding number $-1$. We see that the truncated $\breve{H}_{\rm green}$ acquires a dangling $B$ zero mode at the boundary. If we consider $H_{\rm red}\oplus H_{\rm green}$, their combined winding number is zero. In terms of the boundary edge modes, this says that the $A$ and $B$ boundary zero modes at position $n=0$ can be paired up and gapped out by turning on a boundary term which is compatible (anticommutes) with $\fS |_{n=0}$,
$$H_{\rm bdry}=\begin{pmatrix} 0 &a \\ \overline a & 0\end{pmatrix},$$
which has spectrum $\pm |a|$, cf.\ Fig.\ \ref{fig:SSH3}. Thus the two zero modes in combination may \emph{not} be topologically protected, in accordance with the cancelling winding numbers.

\begin{figure}
\begin{tikzpicture}[every node/.style={scale=1}]
\foreach \p in {(0,0),(2,0),(4,0),(6,0),(8,0)}{\draw \p node[circle,draw,scale=0.8] {A};}
\foreach \p in {(1,1),(3,1),(5,1),(7,1),(9,1)}{\draw \p node[circle,draw,scale=0.8] {B};}

\draw[thick,green] (-2,0) -- (1,1);
\draw[thick,green] (0,0) -- (3,1);
\draw[thick,green] (2,0) -- (5,1);
\draw[thick,green] (4,0) -- (7,1);
\draw[thick,green] (6,0) -- (9,1);
\draw[thick,green] (8,0) -- (11,1);

\node at (10.5,0.5) {$\ldots$};
\draw[dotted,thick] (1.5,0) -- (2.2,-0.5) -- (3.5,1) -- (2.8,1.5) -- (1.5,0);


\node at (0.6,-1) {$n=-1$};
\node at (2.6,-1) {$n=0$};
\node at (4.6,-1) {$n=1$};
\node at (6.6,-1) {$n=2$};
\node at (8.6,-1) {$n=3$};

\draw[dashed,thick] (1.5,-0.5)--(1.5,1.5);

\end{tikzpicture}
\caption{A dimerised Hamiltonian $H_{\rm green}$ with winding number $-1$. It acquires a dangling zero $B$-mode when truncated to $n\geq 0$.}\label{fig:SSH2}
\end{figure}
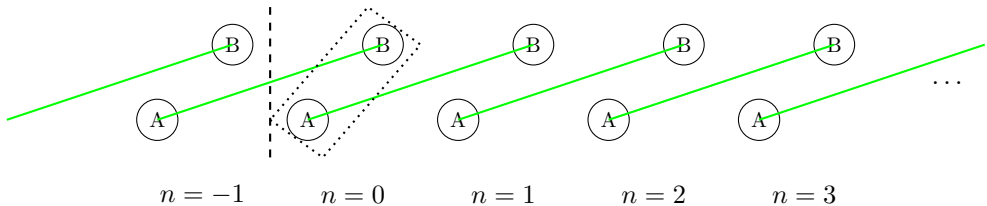

\begin{figure}
\begin{tikzpicture}[every node/.style={scale=1}]
\foreach \p in {(2,0),(4,0),(6,0),(8,0)}{\draw \p node[circle,draw,scale=0.8] {A};}
\foreach \p in {(3,1),(5,1),(7,1),(9,1)}{\draw \p node[circle,draw,scale=0.8] {B};}
\foreach \p in {(2,2),(4,2),(6,2),(8,2)}{\draw \p node[circle,draw,scale=0.8] {A};}
\foreach \p in {(3,3),(5,3),(7,3),(9,3)}{\draw \p node[circle,draw,scale=0.8] {B};}

\draw[thick,red] (4,2) -- (3,3);
\draw[thick,red] (6,2) -- (5,3);
\draw[thick,red] (8,2) -- (7,3);
\draw[thick,red] (10,2) -- (9,3);

\draw[thick,green] (2,0) -- (5,1);
\draw[thick,green] (4,0) -- (7,1);
\draw[thick,green] (6,0) -- (9,1);
\draw[thick,green] (8,0) -- (11,1);

\draw[thick,black] (2,2) -- (3,1);

\node at (10.5,0.5) {$\ldots$};
\draw[dotted,thick] (1.5,-0.5) -- (3.5,-0.5) -- (3.5,3.5) -- (1.5,3.5) -- (1.5,-0.5);


\node at (0.6,-1) {$n=-1$};
\node at (2.6,-1) {$n=0$};
\node at (4.6,-1) {$n=1$};
\node at (6.6,-1) {$n=2$};
\node at (8.6,-1) {$n=3$};

\draw[dashed,thick] (1.5,-0.7)--(1.5,3.7);

\end{tikzpicture}
\caption{The zero A-mode of $\breve{H}_{\rm red}$ and zero B-mode of $\breve{H}_{\rm green}$ at the boundary can be paired up (black line) and gapped out.}\label{fig:SSH3}
\end{figure}
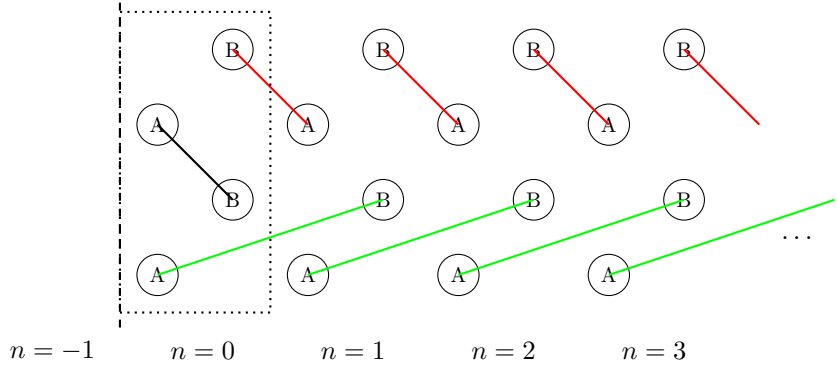

To handle direct sums, we need to be able to (1) consider Hamiltonians with $N$ finite but arbitrarily large, and (2) extract a topological invariant for $U\oplus U'$ which is additive with respect to $\oplus$, in accordance with the addition of boundary zero modes. This is precisely what $K$-theory allows us to do. In more detail, allowing for direct sums and noting that invertible matrices can be retracted to unitary ones, the $\fS$-symmetric Hamiltonians are classified by the homotopy classes of maps from $S^1$ into the infinite unitary group $\mathscr{U}$. Since $\mathscr{U}$ is a classifying space for the odd-degree complex $K$-theory functor \cite{Karoubi}, this classification group is just $K^{-1}(S^1)$.

For each $N$, the winding number of a map $S^1\rightarrow {\rm U}(N)$ can be defined as the ordinary winding number of its determinant, and this remains an invariant of its homotopy class. It is also a homomorphism with respect to matrix multiplication, in that Wind($UV$)$=$Wind($U$)+Wind($V$). A remarkable property of unitary matrices in $\mathscr{U}$ is that $U\oplus V=(U\oplus 1)(1\oplus V)$ becomes \emph{homotopic} to $UV\oplus 1$ in $\mathscr{U}$ (we may assume that $U$ and $V$ have the same size by appending 1 along diagonals, which does not change their winding numbers). Thus $\oplus$ and matrix multiplication are on the same footing with regards to topological invariants of maps $S^1\rightarrow\mathscr{U}$. In particular, Wind($\cdot$) is well-defined on $K$-theory classes $[U]$, giving a homomorphism $K^{-1}(S^1)\rightarrow \ZZ\cong K^0({\rm pt})$, which is in fact an isomorphism due to Bott periodicity applied to $K^{-1}(S^1)\cong \wt{K}^0(S^2)\cong K^{-2}({\rm pt})\cong K^0({\rm pt})$. 

On the analytic side, we can extend the discussion of Toeplitz operators to that on $(\mathcal{H}^2)^{\oplus N}$ for any finite $N$. A continuous $N\times N$ matrix-valued function $U$ on $S^1$ defines a multiplication operator by $U$ on $(L^2(S^1))^{\oplus N}$, whose truncation to $(\mathcal{H}^2)^{\oplus N}$ is the Toeplitz operator $T_U$ with symbol $U$. As before, $T_U$ is Fredholm iff $U$ is invertible, and the index of $T_U$ is equal to $-$Wind($U$). The index is of course unchanged if $T_U$ is modified into $T_U\oplus 1_{N'}$ acting on $N'$ extra copies of $\mathcal{H}^2$. Thus we can consider the map $[U]\mapsto$ Index($T_U$) as a homomorphism $K^{-1}(S^1)\rightarrow\ZZ$.
\begin{theorem}[Toeplitz index theorem, e.g.\ \cite{Benameur}]\label{thm:Toeplitz}
Let $U\in C(S^1,{\rm GL}(N,\CC))$ represent a class $[U]$ in $K^{-1}(S^1)$. The analytic index map $[U]\mapsto {\rm Index}(T_U)$ is equal to the topological index $-{\rm Wind}(U)$; both are isomorphisms into $\ZZ\cong K^0({\rm pt})$.
\end{theorem}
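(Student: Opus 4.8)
The plan is to reduce the statement to a comparison of two group homomorphisms $K^{-1}(S^1)\to\ZZ$ that I then match on a single generator. Both the analytic assignment $[U]\mapsto\mathrm{Index}(T_U)$ and the topological assignment $[U]\mapsto-\mathrm{Wind}(U)$ are, as the discussion preceding the theorem already indicates, additive and homotopy-invariant; the content of the theorem is their equality, together with the claim that they are isomorphisms. Since $K^{-1}(S^1)\cong\ZZ$ is cyclic, a homomorphism out of it is pinned down by its value on a generator, so it suffices to check agreement there.

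First I would make precise that $[U]\mapsto\mathrm{Index}(T_U)$ descends to $K^{-1}(S^1)$. Homotopy invariance follows because $U\mapsto M_U$ is continuous from the sup-norm on $C(S^1,\mathrm{GL}(N,\CC))$ to the operator norm, hence so is the compression $U\mapsto T_U$; a norm-continuous path of Fredholm operators has constant index. Stabilisation is immediate from $T_{U\oplus 1_{N'}}=T_U\oplus\mathrm{id}$, which leaves the index unchanged, so the index depends only on the stable homotopy class $[U]$. For additivity I use the direct-sum model of the group law, for which $T_{U\oplus V}=T_U\oplus T_V$ and indices simply add. Equivalently, in the multiplicative model one invokes the compactness of the semicommutator $T_{UV}-T_U T_V$ together with multiplicativity of the Fredholm index; the homotopy $U\oplus V\simeq UV\oplus 1$ recalled above reconciles the two pictures.

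Next I would recall from the Bott-periodicity computation in the text that $K^{-1}(S^1)\cong\ZZ$, generated by the class of the scalar loop $u(k)=e^{\im k}$, and then evaluate both homomorphisms on $[u]$. On the topological side $\mathrm{Wind}(u)=1$. On the analytic side $T_u$ is the unilateral shift $z^n\mapsto z^{n+1}$ on $\mathcal{H}^2$: it is injective, with cokernel the one-dimensional space $\CC\cdot 1$ of constants, so $\mathrm{Index}(T_u)=0-1=-1=-\mathrm{Wind}(u)$. Two homomorphisms $\ZZ\to\ZZ$ that agree on a generator coincide, giving the claimed equality on all of $K^{-1}(S^1)$; and since a generator is sent to $-1$, each map is an isomorphism onto $\ZZ\cong K^0(\mathrm{pt})$.

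The genuine obstacle is purely analytic and lives entirely in the first step: establishing that $\mathrm{Index}(T_{(\cdot)})$ is a well-defined homomorphism requires stability of the Fredholm index under norm-continuous deformation together with the compactness of the semicommutator, i.e.\ the Toeplitz extension $0\to\mathcal{K}\to\mathcal{T}\to C(S^1)\to 0$, which is what makes $U\mapsto T_U$ an index-preserving homomorphism modulo compacts. Once that structural input is in place, the identification with $-\mathrm{Wind}$ is the elementary shift computation above, and no further hard work is needed.
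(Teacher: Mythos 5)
Your proof is correct. Note that the paper itself does not prove this statement: it quotes it as a classical result (citing Gohberg--Krein and Benameur), and the surrounding discussion only sets up the skeleton you use --- namely that $\mathrm{Wind}$ and $[U]\mapsto\mathrm{Index}(T_U)$ are well-defined homomorphisms on $K^{-1}(S^1)$, with $K^{-1}(S^1)\cong\ZZ$ via Bott periodicity. Your argument fills in exactly the missing content in the standard way: the Toeplitz extension $0\to\mathcal{K}\to\mathcal{T}\to C(S^1)\to 0$ (compact semicommutators) gives Fredholmness and the homomorphism property, and then equality of the two homomorphisms is checked on the single generator $[u]$, $u(k)=e^{\im k}$, where $T_u$ is the unilateral shift of index $-1=-\mathrm{Wind}(u)$; since a generator is sent to a generator of $\ZZ$, both maps are isomorphisms. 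This is the same route the cited literature takes, so there is no divergence from the paper's intent --- your write-up simply supplies the proof the paper delegates to a reference.
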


\begin{remark}\label{rem:cancellingzeroes}
Since we have seen that the matrix symbol $U_{\rm red}\oplus U_{\rm green}$ is homotopic to $U_{\rm blue}\oplus U_{\rm blue}$, the operator $\breve{H}_{\rm red}\oplus \breve{H}_{\rm green}$ is homotopic to $\breve{H}_{\rm blue}\oplus \breve{H}_{\rm blue}$ which has no dangling zero modes at all. This shows that boundary zero modes may be created or destroyed (``gapped out'') in A-B pairs, as $\breve{H}$ undergoes a homotopy through Toeplitz operators. 
\end{remark}

\begin{remark}
By appending to $T_U$ the identity operator on the orthogonal complement of $(\mathcal{H}^2)^{\oplus N}$ in $(L^2(S^1))^{\oplus N}$, one obtains an elliptic psuedo-differential operator of order 0 with the same analytic index as $T_U$. Then the Toeplitz index theorem may be viewed in the sense of Atiyah--Singer \cite{AS1}, and this applies to more general notions of Toeplitz operators \cite{dM}. In particular, if $U$ is smooth, the index may be computed by the local formula
$${\rm Index}(T_U)=-\frac{1}{2\pi\im}\int_{S^1}{\rm tr}(U^{-1}dU). $$
\end{remark}

\begin{remark}
Bott periodicity $K^{-2}(X)\cong K^0(X)$ may in fact be proved using the notion of the index bundle for a family of (Toeplitz) Fredholm operators \cite{A1, A2}, which we will utilise in Section \ref{sec:genToeplitz}. This approach is especially useful for noncommutative generalisations, see \cite{Cuntz, Phillips}.
\end{remark}

\subsection{Families of Toeplitz operators}
We can also consider a continuous family of Toeplitz operators $x\mapsto T_U(x)$ parameterised by a topological space $X$. For instance, take $\ZZ^2=\ZZ_x\oplus\ZZ_y$ where $\ZZ_x, \ZZ_y$ denote the subgroups of lattice translations in the $x$ and $y$ directions respectively. We can Fourier transform only with respect to the $\ZZ_x$ factor, to get a family $\cB_x\ni k_x\mapsto M_{U(k_x,\cdot)}$ of invertible operators on $(l^2(\ZZ_y))^{\oplus N}\cong (L^2(\cB_y))^{\oplus N}$ corresponding at each $k_x$ to multiplication on $(L^2(\cB_y))^{\oplus N}$ by the function $U(k_x,\cdot):k_y\mapsto U(k_x, k_y)$. Truncating $L^2(\cB_y)$ to Hardy space, we obtain a family $k_x\mapsto T_U(k_x)$ of Toeplitz operators, whose symbol at $k_x$ is the function $U(k_x,\cdot)$. If $U(k_x,\cdot)$ is invertible for all $k_x$, then $k_x\mapsto T_U(k_x)$ is a family of Fredholm operators, which has a (virtual) index \emph{bundle} well-defined as an element of $K^0(\cB_x)$. Because $\cB_x=S^1$, the class of the index bundle is determined by its (virtual) rank.

More interesting index bundles can appear for higher dimensional $X$. For a general compact connected $X$, a continuous function $U$ on $X\times S^1$ gives a family $x\mapsto T_U(x)$ of Toeplitz operators with virtual index bundle an element of $K^0(X)$. We may also associate a topological index as follows. The function $U$ determines the clutching data for a vector bundle over $X\times S^2$ and a $K$-theory class in $K^0(X\times S^2)$. Composing with the inverse of the Bott isomorphism gives an element in $K^0(X)$, which is in fact represented by the analytic index bundle \cite{A2}. 

By considering a wallpaper group which extends $\ZZ^2$ translations by glide reflection symmetries, we will construct a \emph{twisted} family of Toeplitz operators over $\cB_x$, whose index lives in a twisted $K$-theory of $\cB_x$.

\section{2D tight-binding model with glide reflection symmetry}\label{sec:pgtight}
\subsection{Space groups, unit cells, and orbifolds} A \emph{space group} (or \emph{crystallographic group}) $\sG$ in $d$-dimensions is a discrete cocompact subgroup of the group $\mathrm{Euc}(d)\cong\RR^d\rtimes{\mathrm O}(d)$ of isometries of Euclidean space $\EE^d$. Euclidean space $\EE^d$ is an affine version of the normal subgroup $\RR^d$ of translations, and ${\mathrm O}(d)$ is the orthogonal group fixing an origin. 

A space group $\sG$ has a maximal abelian subgroup $N=\mathscr{G}\cap\RR^d\cong\ZZ^d$ generated by $d$ independent translations, called the (translation) \emph{lattice}, which is also normal with quotient $F=\sG/\ZZ^d\subset {\rm O}(d)$ the finite \emph{point group}. Thus, there is a diagram
\begin{equation}
\begin{CD}
0@>>>\RR^d@>>>\RR^d\rtimes{\rm O}(d)@>>>{\rm O}(d)@>>>1
 \\
@.@AAA @ AAA @ AAA \\
0@>>>N=\ZZ^d@>>> \mathscr{G} @>>> F @ >>> 1\end{CD}\label{spacegroupsequence}
\end{equation}
where the vertical maps are inclusions. In Eq.\ \eqref{spacegroupsequence}, we have written $0=\{0\}$ for the trivial subgroup of $N\subset\RR^d$ written additively, and also $1=\{1\}$ for the trivial quotient group written multiplicatively. The top row expresses the Euclidean group as a semidirect product, while the bottom row expresses the space group $\mathscr{G}$ as an extension of $F$ by $N$.

A lift $\breve{g}\in\sG$ of an element $g\in F$ may be written as a Euclidean transformation $[t_g;O_g]\in \RR^d\rtimes{\mathrm O}(d)$ where $t_g$ is the translational part and $O_g$ is an orthogonal transformation fixing some origin. If $F$ can be lifted homomorphically back into $\sG$ as a subgroup, then $\mathscr{G}$ is isomorphic to a semidirect product $N\rtimes F$ and we say that $\sG$ is \emph{symmorphic}; the lifts $\breve{g}$ can then be chosen to have no translational part. 

An example of a \emph{nonsymmorphic} group is the torsion-free wallpaper group $\pg$,
$$ 0\longrightarrow \ZZ^2\longrightarrow \pg\longrightarrow \ZZ_2\longrightarrow 1,$$
in which any lift of the non-trivial point group element is a \emph{glide reflection} of infinite order. We will discuss the $\pg$ group in detail in the next Subsection.

By convention, a (primitive) \emph{unit cell} refers to a fundamental domain with respect to only the discrete translational subgroup $\ZZ^d$ of a space group. This is not unique but is always topologically a torus $\TT^d=\EE^d/\ZZ^d$, and can be thought of as the orbit space under lattice translations. It is important to distinguish $\TT^d$ from the \emph{Brillouin torus} $\cB$.

Since $\ZZ^d$ is normal in $\sG$, there is an induced action of $F$ on $\TT^d$, so that the unit cell $\TT^d$ comes equipped with a finite group action by $F$. The quotient $\TT^d/F \cong \EE^d/\sG$ is a fundamental domain for the \emph{full} space group $\sG$, and can be used as a basic ``tile'' for Euclidean space. 

The group $\pg$ provides a particularly nice example where $F=\ZZ_2$ acts \emph{freely} (i.e.\ without fixed points) so that the quotient $\TT^2/\ZZ_2$ is a \emph{manifold} --- in fact it is just the Klein bottle $\cK$, see Fig.\ \ref{fig:pgunitcell}. In general, $F$ acts with fixed points, so that $\TT^d/F$ is an \emph{orbifold} --- the orbifold approach to space groups can be found in \cite{CFHT}.

\subsection{Tight-binding model with $\pg$ symmetry}
The group $\pg$ is one of the two \emph{torsion-free} wallpaper groups --- the other being $\ZZ^2=\ZZ\oplus\ZZ$. Abstractly, it is isomorphic to a semidirect product 
$\ZZ\rtimes\ZZ$ in which the second copy of $\ZZ$ acts on the first by reflection. In the short exact sequence
\begin{equation}
0\longrightarrow N=\ZZ\oplus\ZZ\overset{i}{\hookrightarrow} \pg=\ZZ\rtimes\ZZ\overset{q}{\twoheadrightarrow}F=\ZZ_2\longrightarrow 1,\label{pgextension}
\end{equation}
the injection is\footnote{We use the semicolon for elements $(n_y;m_x)\in\pg$ to avoid confusion with elements $(n_y,n_x)\in N$. We also use $n_y$ instead of $n_x$ for the first coordinate so that the diagrams that follow are more convenient to draw.} $i:(n_y,n_x)\mapsto(n_y;2n_x)$ and the surjection is $q:(n_y;m_x)\mapsto (-1)^{m_x}$. The multiplication law in $\pg$ is 
$$(n_y;m_x)(n_y';m_x')=(n_y+(-1)^{m_x}n_y';m_x+m_x').$$
Since $\pg$ has no elements of order 2, it is not a split extension of $F$ by $N$, and is thus nonsymmorphic. The elements $(n_y;1)\in\pg$ are lifts of $-1\in F$, and they all have the property that they square to $(n_y;1)^2=(0;2)=i(0,1)$, i.e.\ a unit translation in the $n_x$ variable.

Concretely, we can identify $\pg$ as a subgroup of the Euclidean group ${\rm E}(2)$, where the latter can be written as $(\RR_y\oplus\RR_x)\rtimes {\rm O}(2)$ upon picking a base point and an orthonormal basis of translations. We use square brackets $[(t_y,t_x);O]$ to denote a Euclidean transformation. Then $-1\in F$ is identified with the reflection matrix $R_y=\begin{pmatrix} -1 & 0 \\ 0 & 1\end{pmatrix}$, and the general element $(n_y;m_x)\in\pg$ is identified with the Euclidean transformation $[(n_y,\frac{m_x}{2});R_y^{m_x}]$. Thus, for instance, $(0;1)$ is a \emph{glide reflection} about the horizontal $x$-axis, i.e.\ reflection of $y$-coordinate followed by half-translation of the $x$-coordinate. More generally, $(n_y;1)$ is a glide reflection about some horizontal \emph{glide axis} which is left invariant under $(n_y;1)$. We can then write Eq.\ \eqref{pgextension} more concretely as 
\begin{equation*}
0\longrightarrow \ZZ_y\oplus\ZZ_x\overset{i}{\hookrightarrow} \pg=\ZZ_y\rtimes\ZZ_g\overset{q}{\twoheadrightarrow}F=\ZZ_2\longrightarrow 1,
\end{equation*}
where $\ZZ_g$ is the subgroup of $\pg$ generated by a glide reflection about the $x$-axis. The quotient $\EE^2/\pg$ is a Klein bottle, see Fig.\ \ref{fig:pgunitcell}, which is a flat closed manifold. In reverse, the fundamental group of the Klein bottle is isomorphic to $\pg$.

{\bf Convention for labelling unit cells, glide axes, and atomic sites.}
Formally, the Cayley graph for $\pg$ embeds in $\EE^2$ as a set of ``atomic sites'', e.g.\ indicated by Fig.\ \ref{fig:pgunitcell} or the black symbols in Fig.\ \ref{fig:pglattice}. Pick one atomic site (Wyckoff position) just below (or possibly on) the glide axis for $(0;1)$ to be labelled by the identity element $(0;0)\in\pg$, then the other atomic sites can be labelled uniquely by $(n_y;m_x)\in\pg$ according to the transformation needed to reach there from $(0;0)$. We choose unit cells such that the unit cell labelled by $(n_y,n_x)\in N$ contains the two atomic sites $(n_y;2n_x), (n_y;2n_x+1)\in\pg$, and lies between the two glide axes labelled by $l=2n_y$ and $l=2(n_y+1)$, as in Fig.\ \ref{fig:pgunitcell}. Then the glide axis $l=2n_y+1=2(n_y+\frac{1}{2})$ passes through the middle of the unit cells at $(n_y,n_x), n_x\in\ZZ_x$.

\begin{figure}
\subfigure{
\begin{tikzpicture}[scale=0.9,every node/.style={scale=1.3}]
\draw[help lines,dotted,thick,step=2] (-2,0) grid (4,6);

\foreach \p in {(-1.8,0.5),(0.2,0.5),(2.2,0.5)}{\draw \p node[scale=0.6] {$\vI$};}
\foreach \p in {(-1.8,2.5),(0.2,2.5),(2.2,2.5)}{\draw \p node[scale=0.6] {$\vI$};}
\foreach \p in {(-1.8,4.5),(0.2,4.5),(2.2,4.5)}{\draw \p node[scale=0.6] {$\vI$};}

\foreach \p in {(-0.8,1.5),(1.2,1.5),(3.2,1.5)}{\draw \p node[scale=0.6] {$\vG$};}
\foreach \p in {(-0.8,3.5),(1.2,3.5),(3.2,3.5)}{\draw \p node[scale=0.6] {$\vG$};}
\foreach \p in {(-0.8,5.5),(1.2,5.5),(3.2,5.5)}{\draw \p node[scale=0.6] {$\vG$};}


\draw[dashed] (-3,0) -- (4,0);
\draw[dashed] (-2,1) -- (4,1);
\draw[dashed] (-3,2) -- (4,2);
\draw[dashed] (-2,3) -- (4,3);
\draw[dashed] (-3,4) -- (4,4);
\draw[dashed] (-2,5) -- (4,5);
\draw[dashed] (-3,6) -- (4,6);

\draw (0,2) -- (0,4) -- (2,4) -- (2,2) -- (0,2);
\draw[red] (1,1) -- (1,3) -- (3,3) -- (3,1) -- (1,1);

\node[scale=0.7] at (-1,-1) {$n_x=-1$};
\node[scale=0.7]  at (1,-1) {$n_x=0$};
\node[scale=0.7]  at (3,-1) {$n_x=1$};

\node[scale=0.7] at (-3,1) {$n_y=-1$};
\node[scale=0.7]  at (-3,3) {$n_y=0$};
\node[scale=0.7]  at (-3,5) {$n_y=1$};

\node[scale=0.7] at (5,0) {$l=-2$};
\node[scale=0.7] at (5,1) {$l=-1$};
\node[scale=0.7] at (5,2) {$l=0$};
\node[scale=0.7] at (5,3) {$l=1$};
\node[scale=0.7] at (5,4) {$l=2$};
\node[scale=0.7] at (5,5) {$l=3$};
\node[scale=0.7] at (5,6) {$l=4$};

\node[scale=0.5] at (0.27,2.27) {$(0;0)$};
\node[scale=0.5] at (1.27,3.26) {$(0;1)$};
\node[scale=0.5] at (2.27,2.27) {$(0;2)$};
\node[scale=0.5] at (3.27,3.26) {$(0;3)$};
\node[scale=0.5] at (0.27,0.27) {$(-1;0)$};
\node[scale=0.5] at (1.27,1.26) {$(-1;1)$};

\node[scale=0.7] at (0,3) {$\times$};


\end{tikzpicture}
}
\hspace{1em}
\subfigure{
\begin{tikzpicture}[scale=1.2,every node/.style={scale=1.5}]

\foreach \p in {(0.2,2.5)}{\draw \p node[scale=0.6] {$\vI$};}

\foreach \p in {(1.2,3.5)}{\draw \p node[scale=0.6] {$\vG$};}
\draw[->,>=latex,ultra thick,black]   (0,2) --(0.6,2);
\draw[->,>=latex,ultra thick,black]   (0,4) --(0.6,4);
\draw[->>,>=latex,ultra thick,black]   (0,2) --(0,3.3);
\draw[->>,>=latex,ultra thick,black]   (1,4) --(1,2.7);

\draw[->,>=latex,thick,black]   (1,2) --(1.6,2);
\draw[->,>=latex,thick,black]   (1,4) --(1.6,4);
\draw[->>,>=latex,thick,black]   (2,2) --(2,3.2);

\draw[dashed] (0,3) -- (2,3);

\draw[ultra thick] (0,2) -- (0,4) -- (1,4) -- (1,2) -- (0,2);
\draw[thin] (1,4) -- (2,4) -- (2,2) -- (1,2);

\node[white] at (0,0) {$.$};

\end{tikzpicture}
}
\caption{(L) An embedding of $\pg$ in Euclidean space with origin labelled $\times$. Dotted lines divide Euclidean space into unit cells labelled by $N=\ZZ^2$, each containing two atomic sites. Black solid lines enclose the unit cell labelled by $(n_y,n_x)=(0,0)$. For each $n_y\in\ZZ_y$, there are two glide axes (dashed horizontal lines), labelled by $l=2n_y, 2n_y+1$. Odd $l$ glide axes are intra-cell with respect to the black unit cell convention, while even $l$ ones are ``shared'' between unit cells with different $n_y$. The red square encloses another possible choice of unit cell. (R) Identifying pairs of glide-equivalent points in a unit cell gives a Klein bottle as the quotient space (fundamental domain of $\pg$).} \label{fig:pgunitcell}
\end{figure}
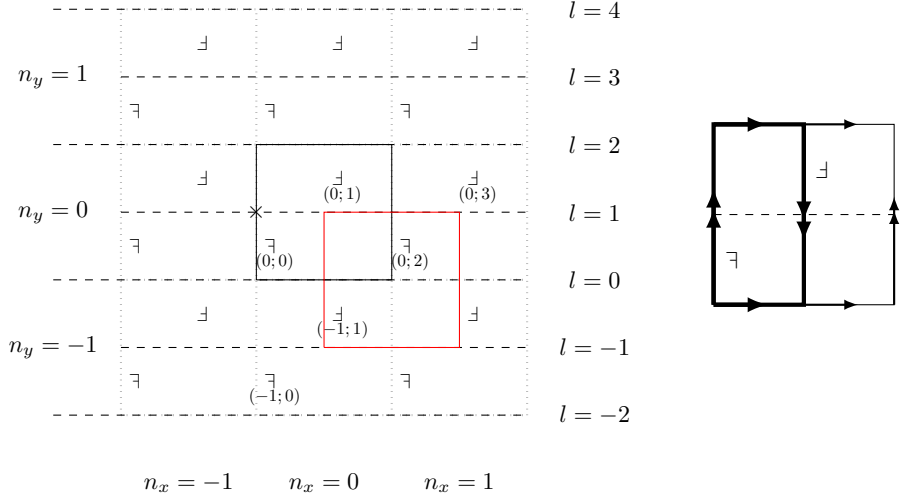

The glide axis bounding the bottom of the $n_y=0$ unit cells, i.e.\ $l=0$, will be taken to be the 1D boundary (henceforth called an \emph{edge}). \emph{Note that with this labelling of glides axes, the $x$-axis is at $l=1$, and the glide reflection $L^{(l)}$ with glide axis $l$ is effected by $(l-1;1)\in\pg$.}

Finally, choose a basis for the intra-cell $\CC\oplus\CC$ degrees of freedom, such that $\begin{pmatrix}1\\0\end{pmatrix}$ corresponds to the sites with even $m_x$ (depicted with the symbol $\vI$), and $\begin{pmatrix}0\\1\end{pmatrix}$ corresponds to the sites with odd $m_x$ (depicted with the reflected symbol $\vG$).

With the above labelling and basis conventions, the tight-binding Hilbert space is $l^2(\pg)\cong\l^2(\ZZ^2)\oplus\l^2(\ZZ^2)\cong L^2(\cB)\oplus L^2(\cB)$ where the latter isomorphism is the Fourier transform. Lattice translations $T_{\vect{n}}$ by $\vect{n}=(n_y,n_x)\in N$ act on $f\in L^2(\cB)\oplus L^2(\cB)$ as pointwise multiplication. For example, the generating horizontal and vertical translation operators $T_x\equiv T_{(0,1)}, T_y\equiv T_{(1,0)}$ are
\begin{equation*}
(T_x\cdot f)(\vect{k})=M_{u_x}f(\vect{k})\equiv\begin{pmatrix} u_x & 0 \\ 0 & u_x \end{pmatrix}f(\vect{k}), \qquad\qquad\qquad\qquad\;\;
\end{equation*}
\begin{equation*}
(T_y\cdot f)(\vect{k})=M_{u_y}f(\vect{k})\equiv\begin{pmatrix} u_y & 0 \\ 0 & u_y \end{pmatrix}f(\vect{k}), \quad \vect{k}=(k_x,k_y)\in\cB,
\end{equation*}
where $u_x:\vect k\rightarrow e^{\im k_x}$ and $u_y:\vect k\rightarrow e^{\im k_y}$ are the functions generating the Fourier algebra on $\cB$. 

Glide reflection $L^{(1)}$ about the $x$-axis $l=1$ takes
$$(n_y;m_x)\mapsto (0;1)(n_y;m_x)=(-n_y;1+m_x)\in\pg,$$
so
\begin{equation*}
(L^{(1)}\cdot f)(\vect{k})=\begin{pmatrix} 0 & u_x(\vect{k}) \\ 1 & 0\end{pmatrix}f(\vect{k}')\equiv M_{V}f(\vect{k}')\equiv M_VIf(\vect{k}),
\end{equation*}
where $V(\vect{k})\coloneqq\begin{pmatrix} 0 & u_x(\vect{k}) \\ 1 & 0\end{pmatrix}=\begin{pmatrix} 0 & e^{\im k_x}\\ 1 & 0\end{pmatrix}$, and $If(\vect{k})\coloneqq f(\vect{k}')\equiv f(k_x,-k_y)$. Here, we have written $\vect{k}'$ for $(k_x,-k_y)$. We can think of $M_V$ as ``horizontal half-translation'' and $I$ as vertical reflection about $l=1$. 

We also need glide reflection $L^{(0)}$ about the 1D edge $l=0$, which takes
$$(n_y;m_x)\mapsto (-1;1)(n_y;m_x)=(-1;0)(0;1)(n_y;m_x)=(-1-n_y;1+m_x)\in\pg,$$
so
\begin{equation*}
(L^{(0)}\cdot f)(\vect{k})=T_y^{-1}M_VIf(\vect{k})=M_{\overline{u_y}V}If(\vect{k})=\begin{pmatrix}0 & e^{\im (k_x-k_y)}\\ e^{-\im k_y} & 0\end{pmatrix}f(\vect{k}'),
\end{equation*}
It is easy to verify that $(L^{(0)})^2=T_x=(L^{(1)})^2$. 

What we have described is the Fourier transformed version of the regular representation of $\pg$ on $l^2(\pg)\cong L^2(\cB)\oplus L^2(\cB)$ (upon making certain choices as explained above). The Hilbert space $L^2(\cB)\oplus L^2(\cB)$, together with operators $T_y$ and $L^{(0)}$ generating a $\pg$ action, may thus be thought of as the section space of the ``regular bundle'' $\cE_{\rm reg}\cong\cB\times\CC^2$. In general, there may be internal degrees of freedom $\CC^n$ at each atomic site, and we would then have $n$ copies of $\cE_{\rm reg}$.

\subsubsection{Tight-binding model with $\pg$ and chiral symmetry}\label{sec:tightbindingmodel}
We wish to study tight-binding Hamiltonians $H$ with $\pg$ symmetry \emph{and} chiral (also called sublattice) symmetry represented by a unitary operator $\fS$. The latter symmetry means that $H\fS=-\fS H$ and $\fS^2=1$, and we also assume that $\fS$ commutes with $\pg$. Thus the $\pm 1$ eigenspaces of $\fS$ (Hilbert subspaces for the two sublattices) must each host the $\pg$ symmetry, so that there are \emph{four} degrees of freedom per unit cell. This is illustrated in Fig.\ \ref{fig:pglattice} where the brown $\vC, \vCG$ are chiral partners of the black $\vI, \vG$ respectively.

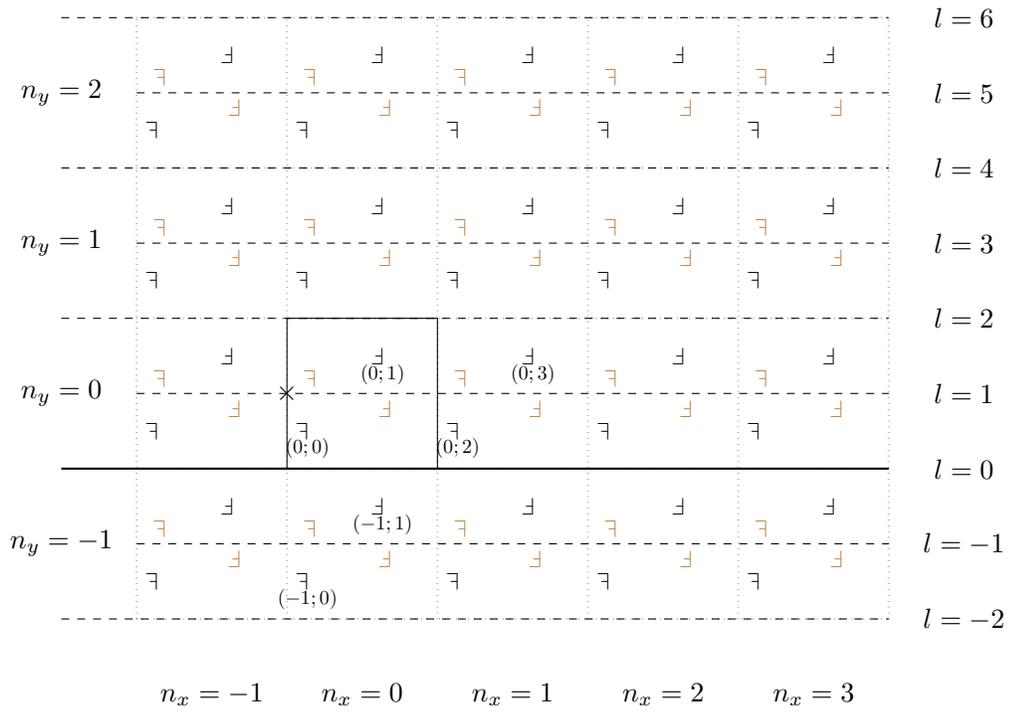
\begin{figure}
\begin{tikzpicture}[every node/.style={scale=1.5}]
\draw[help lines,dotted,thick,step=2] (-2,0) grid (8,8);

\foreach \p in {(-1.8,0.5),(0.2,0.5),(2.2,0.5),(4.2,0.5),(6.2,0.5)}{\draw \p node[scale=0.6] {$\vI$};}
\foreach \p in {(-1.8,2.5),(0.2,2.5),(2.2,2.5),(4.2,2.5),(6.2,2.5)}{\draw \p node[scale=0.6] {$\vI$};}
\foreach \p in {(-1.8,4.5),(0.2,4.5),(2.2,4.5),(4.2,4.5),(6.2,4.5)}{\draw \p node[scale=0.6] {$\vI$};}
\foreach \p in {(-1.8,6.5),(0.2,6.5),(2.2,6.5),(4.2,6.5),(6.2,6.5)}{\draw \p node[scale=0.6] {$\vI$};}

\foreach \p in {(-0.8,1.5),(1.2,1.5),(3.2,1.5),(5.2,1.5),(7.2,1.5)}{\draw \p node[scale=0.6] {$\vG$};}
\foreach \p in {(-0.8,3.5),(1.2,3.5),(3.2,3.5),(5.2,3.5),(7.2,3.5)}{\draw \p node[scale=0.6] {$\vG$};}
\foreach \p in {(-0.8,5.5),(1.2,5.5),(3.2,5.5),(5.2,5.5),(7.2,5.5)}{\draw \p node[scale=0.6] {$\vG$};}
\foreach \p in {(-0.8,7.5),(1.2,7.5),(3.2,7.5),(5.2,7.5),(7.2,7.5)}{\draw \p node[scale=0.6] {$\vG$};}

\foreach \p in {(-1.7,1.2),(0.3,1.2),(2.3,1.2),(4.3,1.2),(6.3,1.2)}{\draw \p node[scale=0.6] {$\vC$};}
\foreach \p in {(-1.7,3.2),(0.3,3.2),(2.3,3.2),(4.3,3.2),(6.3,3.2)}{\draw \p node[scale=0.6] {$\vC$};}
\foreach \p in {(-1.7,5.2),(0.3,5.2),(2.3,5.2),(4.3,5.2),(6.3,5.2)}{\draw \p node[scale=0.6] {$\vC$};}
\foreach \p in {(-1.7,7.2),(0.3,7.2),(2.3,7.2),(4.3,7.2),(6.3,7.2)}{\draw \p node[scale=0.6] {$\vC$};}

\foreach \p in {(-0.7,0.8),(1.3,0.8),(3.3,0.8),(5.3,0.8),(7.3,0.8)}{\draw \p node[scale=0.6] {$\vCG$};}
\foreach \p in {(-0.7,2.8),(1.3,2.8),(3.3,2.8),(5.3,2.8),(7.3,2.8)}{\draw \p node[scale=0.6] {$\vCG$};}
\foreach \p in {(-0.7,4.8),(1.3,4.8),(3.3,4.8),(5.3,4.8),(7.3,4.8)}{\draw \p node[scale=0.6] {$\vCG$};}
\foreach \p in {(-0.7,6.8),(1.3,6.8),(3.3,6.8),(5.3,6.8),(7.3,6.8)}{\draw \p node[scale=0.6] {$\vCG$};}


\draw[dashed] (-3,0) -- (8,0);
\draw[dashed] (-2,1) -- (8,1);
\draw[thick] (-3,2) -- (8,2);
\draw[dashed] (-2,3) -- (8,3);
\draw[dashed] (-3,4) -- (8,4);
\draw[dashed] (-2,5) -- (8,5);
\draw[dashed] (-3,6) -- (8,6);
\draw[dashed] (-2,7) -- (8,7);
\draw[dashed] (-3,8) -- (8,8);

\draw (0,2) -- (0,4) -- (2,4) -- (2,2) -- (0,2);




\node[scale=0.7] at (0,3) {$\times$};

\node[scale=0.7] at (-1,-1) {$n_x=-1$};
\node[scale=0.7]  at (1,-1) {$n_x=0$};
\node[scale=0.7]  at (3,-1) {$n_x=1$};
\node[scale=0.7]  at (5,-1) {$n_x=2$};
\node[scale=0.7]  at (7,-1) {$n_x=3$};

\node[scale=0.7] at (-3,1) {$n_y=-1$};
\node[scale=0.7]  at (-3,3) {$n_y=0$};
\node[scale=0.7]  at (-3,5) {$n_y=1$};
\node[scale=0.7]  at (-3,7) {$n_y=2$};

\node[scale=0.7] at (9,0) {$l=-2$};
\node[scale=0.7] at (9,1) {$l=-1$};
\node[scale=0.7] at (9,2) {$l=0$};
\node[scale=0.7] at (9,3) {$l=1$};
\node[scale=0.7] at (9,4) {$l=2$};
\node[scale=0.7] at (9,5) {$l=3$};
\node[scale=0.7] at (9,6) {$l=4$};
\node[scale=0.7] at (9,7) {$l=5$};
\node[scale=0.7] at (9,8) {$l=6$};

\node[scale=0.5] at (0.27,2.27) {$(0;0)$};
\node[scale=0.5] at (1.27,3.26) {$(0;1)$};
\node[scale=0.5] at (2.27,2.27) {$(0;2)$};
\node[scale=0.5] at (3.27,3.26) {$(0;3)$};
\node[scale=0.5] at (0.27,0.27) {$(-1;0)$};
\node[scale=0.5] at (1.27,1.26) {$(-1;1)$};


\end{tikzpicture}
\caption{Black and brown sublattices each corresponds to an embedding of $\pg$ in Euclidean space (origin $\times$). Each unit cell (e.g.\ the black square) contains two black and two brown atomic sites. The 1D edge will be taken to be the glide axis at $l=0$ (solid line).}\label{fig:pglattice}
\end{figure}

The Hilbert space is $\mathscr{H}_{\rm bulk}=l^2(\pg)\oplus l^2(\pg)\cong l^2(\ZZ^2)\oplus l^2(\ZZ^2)\oplus l^2(\ZZ^2)\oplus l^2(\ZZ^2)$, and our convention for the four degrees of freedom per unit cell is such that
\begin{equation*}
{\raisebox{\depth}{\scalebox{1}[-1]{$\Finv$}}}=\begin{pmatrix}1\\0\\0\\0\end{pmatrix},\quad 
\Finv=\begin{pmatrix}0\\1\\0\\0\end{pmatrix},\quad 
{\color{brown}{\raisebox{\depth}{\scalebox{1}[-1]{$\Finv$}}}}=\begin{pmatrix}0\\0\\1\\0\end{pmatrix},\quad 
{\color{brown}{\Finv}}=\begin{pmatrix}0\\0\\0\\1\end{pmatrix}.
\end{equation*}
With respect to the above, there is a sublattice exchange operator
$$\mathsf{X}=\begin{pmatrix} 0 & 0 & 1 & 0 \\ 0 & 0 & 0 & 1 \\ 1 & 0 & 0 & 0 \\ 0 & 1 & 0 & 0\end{pmatrix}$$
which implements $\vI\leftrightarrow\vC$ and $\vG\leftrightarrow\vCG$ within each unit cell. 

On the Fourier transformed space $f\in L^2(\cB)\otimes\CC^4$, the symmetry operators for $\pg$ and $\fS$ are generated by
\begin{equation*}
(\fS\cdot f)(\vect{k})=\begin{pmatrix}1 & 0 & 0& 0 \\ 0 & 1 & 0 & 0\\ 0 & 0 & -1 & 0 \\ 0 & 0 & 0 & -1\end{pmatrix}f(\vect{k}),\qquad (T_y\cdot f)(\vect{k})=e^{\im k_y}f(\vect{k}),
\end{equation*}
\begin{equation*}
(L^{(1)}\cdot f)(\vect{k})=\begin{pmatrix} L^{(1)} & 0 \\ 0 & L^{(1)}\end{pmatrix}f(\vect{k})=\begin{pmatrix} M_VI & 0 \\ 0 & M_VI\end{pmatrix}f(\vect{k}).
\end{equation*}

A Hamiltonian $H$ which anticommutes with $\fS$ and commutes with $\pg$ must, after Fourier transform with respect to $N\subset\pg$, be of the form
$$H(\vect{k})=\begin{pmatrix} 0 & U(\vect{k}) \\ U^\dagger(\vect{k}) & 0\end{pmatrix} $$
where $U$ is a $2\times 2$ matrix-valued function which satisfies 
\begin{equation}
U(\vect{k})V(\vect{k})=V(\vect{k})U(\vect{k}'),\quad
U^\dagger(\vect{k})V(\vect{k})=V(\vect{k})U^\dagger(\vect{k}'),\label{compatibilityconditions}
\end{equation}
where $\vect{k}'=(k_x,-k_y)$. The two conditions in Eq.\ \eqref{compatibilityconditions} actually imply each other because $V^\dagger(\vect{k})=V^{-1}(\vect{k})$. We will also assume that $U$ is continuous. 

We can think of $L^2(\cB)\otimes\CC^4$ as the section space of $\cE_{\rm reg, black}\oplus \cE_{\rm reg, brown}$ and $\mathsf{X}$ as a reference isomorphism identifying $\cE_{\rm reg, brown}\cong\cE_{\rm reg, black}$. Then $M_U$ may be viewed either as an(other) isomorphism between $\cE_{\rm reg, brown}$ and $\cE_{\rm reg, black}$, or as an automorphism of $\cE_{\rm reg}=\cE_{\rm reg, brown}=\cE_{\rm reg, black}$. In the latter point of view, the condition $U(\vect{k})V(\vect{k})=V(\vect{k})U(\vect{k}')$ is the statement that
\begin{equation*}
M_U(L^{(1)})M_U^\dagger\equiv M_U(M_VI)M_U^\dagger=(M_VI)\equiv L^{(1)},
\end{equation*}
i.e.\ $M_U$ commutes with glide reflections about $l=1$. Equivalently,
\begin{equation*}
M_U L^{(0)} M_U^\dagger\equiv M_U(M_{\overline{u_y}V}I)M_U^\dagger=(M_{\overline{u_y}V}I)=L^{(0)},
\end{equation*}
i.e.\ $M_U$ commutes with glide reflections about $l=0$.

The general form of $U(\vect{k})$ satisfying Eq.\ \eqref{compatibilityconditions} is
\begin{equation*}
U(\vect{k})=\begin{pmatrix}
 a(\vect{k}) & {u_x}(\vect{k})b(\vect{k}') \\ b(\vect{k}) & a(\vect{k}')\label{generalform}
\end{pmatrix}
\end{equation*}
for some complex-valued functions. If the Hamiltonian is gapped at 0 (thus invertible), $U(\vect{k})$ must be invertible for all $\vect{k}$. In particular, Eq.\ \eqref{compatibilityconditions} is satisfied by the matrix functions
\begin{equation}
U_r=\begin{pmatrix} 0 & u_x \\  1 & 0\end{pmatrix},\quad U_g=U_r^{-1}=\begin{pmatrix} 0 & 1 \\  \overline{u_x} & 0\end{pmatrix},\quad 
U_p=\begin{pmatrix} u_y & 0 \\ 0 & \overline{u_y}\end{pmatrix},\quad U_b=\begin{pmatrix} 1 & 0 \\  0 & 1\end{pmatrix}\label{compatibleunitaries}
\end{equation}
which give rise to respective compatible gapped Hamiltonians $H_r, H_g, H_p, H_b$. Note that
\begin{equation}
H_b=\begin{pmatrix}
0 & U_b \\ U_b^\dagger & 0
\end{pmatrix}=\mathsf{X},\nonumber
\end{equation}
in accordance with $U_b$ being the identity automorphism of $\cE_{\rm reg}$.

The unitary part of $U$ (in its polar decomposition) implements an isomorphism between the Hilbert subspaces for the two sublattices. Thus it comprises ``hopping terms'' between the black sublattice and the brown sublattice which are compatible with the $\pg$ action. A hopping term (plus its adjoint) between two sites may be represented by a line (with some complex coefficient). It is easy to see that in position space, $H_r$ (resp.\  $H_g$) corresponds to a ``dimerised'' Hamiltonian indicated by the red (resp.\ green) links in Fig.\ \ref{fig:Hrg}, while $H_p$ (resp.\ $H_b$) corresponds to the purple (resp.\ blue) links in Fig.\ \ref{fig:Hpb}.

\begin{figure}
\begin{tikzpicture}[every node/.style={scale=1.5}]
\draw[help lines,dotted,thick,step=2] (-2,0) grid (6,8);

\foreach \p in {(-1.8,0.5),(0.2,0.5),(2.2,0.5),(4.2,0.5)}{\draw \p node[scale=0.6] {$\vI$};}
\foreach \p in {(-1.8,2.5),(0.2,2.5),(2.2,2.5),(4.2,2.5)}{\draw \p node[scale=0.6] {$\vI$};}
\foreach \p in {(-1.8,4.5),(0.2,4.5),(2.2,4.5),(4.2,4.5)}{\draw \p node[scale=0.6] {$\vI$};}
\foreach \p in {(-1.8,6.5),(0.2,6.5),(2.2,6.5),(4.2,6.5)}{\draw \p node[scale=0.6] {$\vI$};}

\foreach \p in {(-0.8,1.5),(1.2,1.5),(3.2,1.5),(5.2,1.5)}{\draw \p node[scale=0.6] {$\vG$};}
\foreach \p in {(-0.8,3.5),(1.2,3.5),(3.2,3.5),(5.2,3.5)}{\draw \p node[scale=0.6] {$\vG$};}
\foreach \p in {(-0.8,5.5),(1.2,5.5),(3.2,5.5),(5.2,5.5)}{\draw \p node[scale=0.6] {$\vG$};}
\foreach \p in {(-0.8,7.5),(1.2,7.5),(3.2,7.5),(5.2,7.5)}{\draw \p node[scale=0.6] {$\vG$};}

\foreach \p in {(-1.7,1.2),(0.3,1.2),(2.3,1.2),(4.3,1.2)}{\draw \p node[scale=0.6] {$\vC$};}
\foreach \p in {(-1.7,3.2),(0.3,3.2),(2.3,3.2),(4.3,3.2)}{\draw \p node[scale=0.6] {$\vC$};}
\foreach \p in {(-1.7,5.2),(0.3,5.2),(2.3,5.2),(4.3,5.2)}{\draw \p node[scale=0.6] {$\vC$};}
\foreach \p in {(-1.7,7.2),(0.3,7.2),(2.3,7.2),(4.3,7.2)}{\draw \p node[scale=0.6] {$\vC$};}

\foreach \p in {(-0.7,0.8),(1.3,0.8),(3.3,0.8),(5.3,0.8)}{\draw \p node[scale=0.6] {$\vCG$};}
\foreach \p in {(-0.7,2.8),(1.3,2.8),(3.3,2.8),(5.3,2.8)}{\draw \p node[scale=0.6] {$\vCG$};}
\foreach \p in {(-0.7,4.8),(1.3,4.8),(3.3,4.8),(5.3,4.8)}{\draw \p node[scale=0.6] {$\vCG$};}
\foreach \p in {(-0.7,6.8),(1.3,6.8),(3.3,6.8),(5.3,6.8)}{\draw \p node[scale=0.6] {$\vCG$};}

\draw[thick,green] (-1.8,0.5) -- (-0.7,0.8);
\draw[thick,green] (0.2,0.5) -- (1.3,0.8);
\draw[thick,green] (2.2,0.5) -- (3.3,0.8);
\draw[thick,green] (4.2,0.5) -- (5.3,0.8);
\draw[thick,green] (-1.8,2.5) -- (-0.7,2.8);
\draw[thick,green] (0.2,2.5) -- (1.3,2.8);
\draw[thick,green] (2.2,2.5) -- (3.3,2.8);
\draw[thick,green] (4.2,2.5) -- (5.3,2.8);
\draw[thick,green] (-1.8,4.5) -- (-0.7,4.8);
\draw[thick,green] (0.2,4.5) -- (1.3,4.8);
\draw[thick,green] (2.2,4.5) -- (3.3,4.8);
\draw[thick,green] (4.2,4.5) -- (5.3,4.8);
\draw[thick,green] (-1.8,6.5) -- (-0.7,6.8);
\draw[thick,green] (0.2,6.5) -- (1.3,6.8);
\draw[thick,green] (2.2,6.5) -- (3.3,6.8);
\draw[thick,green] (4.2,6.5) -- (5.3,6.8);

\draw[thick,green] (-2.8,1.5) -- (-1.7,1.2);
\draw[thick,green] (-0.8,1.5) -- (0.3,1.2);
\draw[thick,green] (1.2,1.5) -- (2.3,1.2);
\draw[thick,green] (3.2,1.5) -- (4.3,1.2);
\draw[thick,green] (5.2,1.5) -- (6.3,1.2);
\draw[thick,green] (-2.8,3.5) -- (-1.7,3.2);
\draw[thick,green] (-0.8,3.5) -- (0.3,3.2);
\draw[thick,green] (1.2,3.5) -- (2.3,3.2);
\draw[thick,green] (3.2,3.5) -- (4.3,3.2);
\draw[thick,green] (5.2,3.5) -- (6.3,3.2);
\draw[thick,green] (-2.8,5.5) -- (-1.7,5.2);
\draw[thick,green] (-0.8,5.5) -- (0.3,5.2);
\draw[thick,green] (1.2,5.5) -- (2.3,5.2);
\draw[thick,green] (3.2,5.5) -- (4.3,5.2);
\draw[thick,green] (5.2,5.5) -- (6.3,5.2);
\draw[thick,green] (-2.8,7.5) -- (-1.7,7.2);
\draw[thick,green] (-0.8,7.5) -- (0.3,7.2);
\draw[thick,green] (1.2,7.5) -- (2.3,7.2);
\draw[thick,green] (3.2,7.5) -- (4.3,7.2);
\draw[thick,green] (5.2,7.5) -- (6.3,7.2);

\draw[thick,red] (-1.8,0.5) -- (-2.7,0.8);
\draw[thick,red] (0.2,0.5) -- (-0.7,0.8);
\draw[thick,red] (2.2,0.5) -- (1.3,0.8);
\draw[thick,red] (4.2,0.5) -- (3.3,0.8);
\draw[thick,red] (6.2,0.5) -- (5.3,0.8);
\draw[thick,red] (-1.8,2.5) -- (-2.7,2.8);
\draw[thick,red] (0.2,2.5) -- (-0.7,2.8);
\draw[thick,red] (2.2,2.5) -- (1.3,2.8);
\draw[thick,red] (4.2,2.5) -- (3.3,2.8);
\draw[thick,red] (6.2,2.5) -- (5.3,2.8);
\draw[thick,red] (-1.8,4.5) -- (-2.7,4.8);
\draw[thick,red] (0.2,4.5) -- (-0.7,4.8);
\draw[thick,red] (2.2,4.5) -- (1.3,4.8);
\draw[thick,red] (4.2,4.5) -- (3.3,4.8);
\draw[thick,red] (6.2,4.5) -- (5.3,4.8);
\draw[thick,red] (-1.8,6.5) -- (-2.7,6.8);
\draw[thick,red] (0.2,6.5) -- (-0.7,6.8);
\draw[thick,red] (2.2,6.5) -- (1.3,6.8);
\draw[thick,red] (4.2,6.5) -- (3.3,6.8);
\draw[thick,red] (6.2,6.5) -- (5.3,6.8);

\draw[thick,red] (-0.8,1.5) -- (-1.7,1.2);
\draw[thick,red] (1.2,1.5) -- (0.3,1.2);
\draw[thick,red] (3.2,1.5) -- (2.3,1.2);
\draw[thick,red] (5.2,1.5) -- (4.3,1.2);
\draw[thick,red] (-0.8,3.5) -- (-1.7,3.2);
\draw[thick,red] (1.2,3.5) -- (0.3,3.2);
\draw[thick,red] (3.2,3.5) -- (2.3,3.2);
\draw[thick,red] (5.2,3.5) -- (4.3,3.2);
\draw[thick,red] (-0.8,5.5) -- (-1.7,5.2);
\draw[thick,red] (1.2,5.5) -- (0.3,5.2);
\draw[thick,red] (3.2,5.5) -- (2.3,5.2);
\draw[thick,red] (5.2,5.5) -- (4.3,5.2);
\draw[thick,red] (-0.8,7.5) -- (-1.7,7.2);
\draw[thick,red] (1.2,7.5) -- (0.3,7.2);
\draw[thick,red] (3.2,7.5) -- (2.3,7.2);
\draw[thick,red] (5.2,7.5) -- (4.3,7.2);

\node[scale=0.7] at (-1,-1) {$n_x=-1$};
\node[scale=0.7]  at (1,-1) {$n_x=0$};
\node[scale=0.7]  at (3,-1) {$n_x=1$};
\node[scale=0.7]  at (5,-1) {$n_x=2$};

\node[scale=0.7] at (-3,1) {$n_y=-1$};
\node[scale=0.7]  at (-3,3) {$n_y=0$};
\node[scale=0.7]  at (-3,5) {$n_y=1$};
\node[scale=0.7]  at (-3,7) {$n_y=2$};

\end{tikzpicture}
\caption{``Dimerised'' Hamiltonian $H_r$ (resp.\  $H_g$), indicated by the red (resp.\ green) links.}\label{fig:Hrg}
\end{figure}
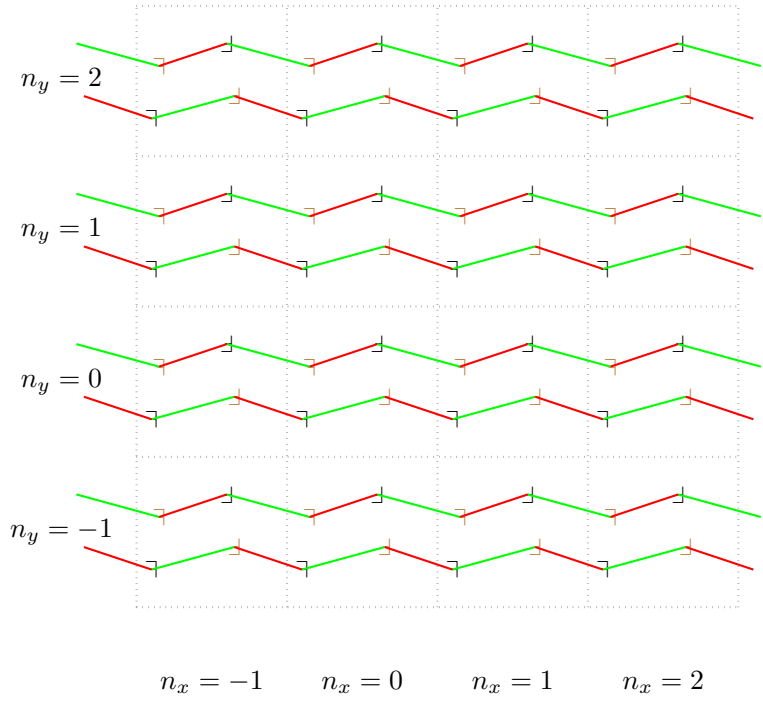

\begin{figure}
\begin{tikzpicture}[every node/.style={scale=1.5}]
\draw[help lines,dotted,thick,step=2] (-2,1) grid (6,7);

\foreach \p in {(-1.8,2.5),(0.2,2.5),(2.2,2.5),(4.2,2.5)}{\draw \p node[scale=0.6] {$\vI$};}
\foreach \p in {(-1.8,4.5),(0.2,4.5),(2.2,4.5),(4.2,4.5)}{\draw \p node[scale=0.6] {$\vI$};}
\foreach \p in {(-1.8,6.5),(0.2,6.5),(2.2,6.5),(4.2,6.5)}{\draw \p node[scale=0.6] {$\vI$};}

\foreach \p in {(-0.8,1.5),(1.2,1.5),(3.2,1.5),(5.2,1.5)}{\draw \p node[scale=0.6] {$\vG$};}
\foreach \p in {(-0.8,3.5),(1.2,3.5),(3.2,3.5),(5.2,3.5)}{\draw \p node[scale=0.6] {$\vG$};}
\foreach \p in {(-0.8,5.5),(1.2,5.5),(3.2,5.5),(5.2,5.5)}{\draw \p node[scale=0.6] {$\vG$};}

\foreach \p in {(-1.7,1.2),(0.3,1.2),(2.3,1.2),(4.3,1.2)}{\draw \p node[scale=0.6] {$\vC$};}
\foreach \p in {(-1.7,3.2),(0.3,3.2),(2.3,3.2),(4.3,3.2)}{\draw \p node[scale=0.6] {$\vC$};}
\foreach \p in {(-1.7,5.2),(0.3,5.2),(2.3,5.2),(4.3,5.2)}{\draw \p node[scale=0.6] {$\vC$};}

\foreach \p in {(-0.7,2.8),(1.3,2.8),(3.3,2.8),(5.3,2.8)}{\draw \p node[scale=0.6] {$\vCG$};}
\foreach \p in {(-0.7,4.8),(1.3,4.8),(3.3,4.8),(5.3,4.8)}{\draw \p node[scale=0.6] {$\vCG$};}
\foreach \p in {(-0.7,6.8),(1.3,6.8),(3.3,6.8),(5.3,6.8)}{\draw \p node[scale=0.6] {$\vCG$};}

\draw[thick,purple] (-1.7,1.2) -- (-1.8,2.5);
\draw[thick,purple] (0.3,1.2) -- (0.2,2.5);
\draw[thick,purple] (2.3,1.2) -- (2.2,2.5);
\draw[thick,purple] (4.3,1.2) -- (4.2,2.5);
\draw[thick,purple] (-1.7,3.2) -- (-1.8,4.5);
\draw[thick,purple] (0.3,3.2) -- (0.2,4.5);
\draw[thick,purple] (2.3,3.2) -- (2.2,4.5);
\draw[thick,purple] (4.3,3.2) -- (4.2,4.5);
\draw[thick,purple] (-1.7,5.2) -- (-1.8,6.5);
\draw[thick,purple] (0.3,5.2) -- (0.2,6.5);
\draw[thick,purple] (2.3,5.2) -- (2.2,6.5);
\draw[thick,purple] (4.3,5.2) -- (4.2,6.5);

\draw[thick,purple] (-0.8,1.5) -- (-0.7,2.8);
\draw[thick,purple] (1.2,1.5) -- (1.3,2.8);
\draw[thick,purple] (3.2,1.5) -- (3.3,2.8);
\draw[thick,purple] (5.2,1.5) -- (5.3,2.8);
\draw[thick,purple] (-0.8,3.5) -- (-0.7,4.8);
\draw[thick,purple] (1.2,3.5) -- (1.3,4.8);
\draw[thick,purple] (3.2,3.5) -- (3.3,4.8);
\draw[thick,purple] (5.2,3.5) -- (5.3,4.8);
\draw[thick,purple] (-0.8,5.5) -- (-0.7,6.8);
\draw[thick,purple] (1.2,5.5) -- (1.3,6.8);
\draw[thick,purple] (3.2,5.5) -- (3.3,6.8);
\draw[thick,purple] (5.2,5.5) -- (5.3,6.8);

\draw[thick,blue] (-1.7,1.2) -- (-1.8,0.5);
\draw[thick,blue] (0.3,1.2) -- (0.2,0.5);
\draw[thick,blue] (2.3,1.2) -- (2.2,0.5);
\draw[thick,blue] (4.3,1.2) -- (4.2,0.5);
\draw[thick,blue] (-1.7,3.2) -- (-1.8,2.5);
\draw[thick,blue] (0.3,3.2) -- (0.2,2.5);
\draw[thick,blue] (2.3,3.2) -- (2.2,2.5);
\draw[thick,blue] (4.3,3.2) -- (4.2,2.5);
\draw[thick,blue] (-1.7,5.2) -- (-1.8,4.5);
\draw[thick,blue] (0.3,5.2) -- (0.2,4.5);
\draw[thick,blue] (2.3,5.2) -- (2.2,4.5);
\draw[thick,blue] (4.3,5.2) -- (4.2,4.5);
\draw[thick,blue] (-1.7,7.2) -- (-1.8,6.5);
\draw[thick,blue] (0.3,7.2) -- (0.2,6.5);
\draw[thick,blue] (2.3,7.2) -- (2.2,6.5);
\draw[thick,blue] (4.3,7.2) -- (4.2,6.5);

\draw[thick,blue] (-0.8,1.5) -- (-0.7,0.8);
\draw[thick,blue] (1.2,1.5) -- (1.3,0.8);
\draw[thick,blue] (3.2,1.5) -- (3.3,0.8);
\draw[thick,blue] (5.2,1.5) -- (5.3,0.8);
\draw[thick,blue] (-0.8,3.5) -- (-0.7,2.8);
\draw[thick,blue] (1.2,3.5) -- (1.3,2.8);
\draw[thick,blue] (3.2,3.5) -- (3.3,2.8);
\draw[thick,blue] (5.2,3.5) -- (5.3,2.8);
\draw[thick,blue] (-0.8,5.5) -- (-0.7,4.8);
\draw[thick,blue] (1.2,5.5) -- (1.3,4.8);
\draw[thick,blue] (3.2,5.5) -- (3.3,4.8);
\draw[thick,blue] (5.2,5.5) -- (5.3,4.8);

\node[scale=0.7] at (-1,0) {$n_x=-1$};
\node[scale=0.7]  at (1,0) {$n_x=0$};
\node[scale=0.7]  at (3,0) {$n_x=1$};
\node[scale=0.7]  at (5,0) {$n_x=2$};

\node[scale=0.7] at (-3,1) {$n_y=-1$};
\node[scale=0.7]  at (-3,3) {$n_y=0$};
\node[scale=0.7]  at (-3,5) {$n_y=1$};
\node[scale=0.7]  at (-3,7) {$n_y=2$};

\end{tikzpicture}
\caption{``Dimerised'' $H_p$ (resp.\  $H_b$), indicated by the purple (resp.\ blue) links.}\label{fig:Hpb}
\end{figure}
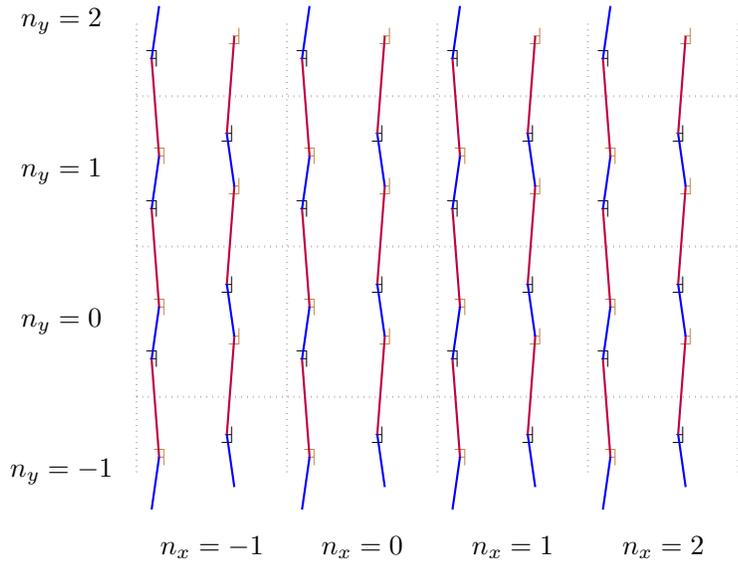

\section{Topological zero modes on 1D edge: heuristics}\label{sec:1Dedge}
\subsection{Integer index for zero modes transverse to glide axis}
Suppose we truncate the model to the half-plane with $n_x\geq 0$ so that $H\mapsto\breve{H}$, then $\breve{H}_p$, $\breve{H}_r$ remain fully dimerised, corresponding to  Wind${}_x$($U_p$)$=$ Wind${}_x$($U_b$)$=0$ where Wind${}_x$ refers to the winding number around $k_x$. However, $\breve{H}_r$ (resp.\ $\breve{H}_g$) acquires a black (resp.\ brown) unpaired zero mode, for each $n_y\in \ZZ$, which corresponds to Wind${}_x$($U_r$)$=+1$ (resp.\ Wind${}_x$($U_g$)$=-1$). This bulk-edge correspondence of integer indices is discussed in more detail in Appendix \ref{appendix:integerbec}.

Clearly, Wind${}_y$ of $U_r, U_g, U_b$ are all zero, and also Wind${}_y$($U_p$)$=0$ due to cancelling contributions from $u_y$ and $\overline{u_y}$, so we could not hope to detect non-triviality of $H_p$ by the ordinary winding number of $U$ along $k_y$. As we will see, truncating $H_p$ along a horizontal glide axis causes it to acquire an interesting collection of edge zero modes which ``detects'' a certain non-trivial bulk invariant of $H_p$. 

\subsection{Mod 2 index for zero modes along glide axis}\label{sec:edgezeroesheuristic}
Consider the horizontal 1D glide axis $l=0$ as an edge in our model, and truncate along this edge in the sense of killing any hopping terms that go across the edge. Thus the upper and lower half-planes are separated. Na\"{i}vely, it appears that only the subgroup generated by $T_x$ remain symmetries of the truncated model, and so the edge zero modes should be characterised by a topological invariant associated to 1D systems with only $\ZZ\cong\ZZ_x$ translation symmetry. It turns out that the appropriate edge topological invariant is more subtle and associated to a ``graded edge symmetry group'', and that the invariant is a mod 2 number rather than an integer. Before explaining this claim, we analyse the edge zero modes for the basic dimerised Hamiltonians $H_p, H_r, H_g, H_b$ in order to gain a heuristic understanding of the general case.

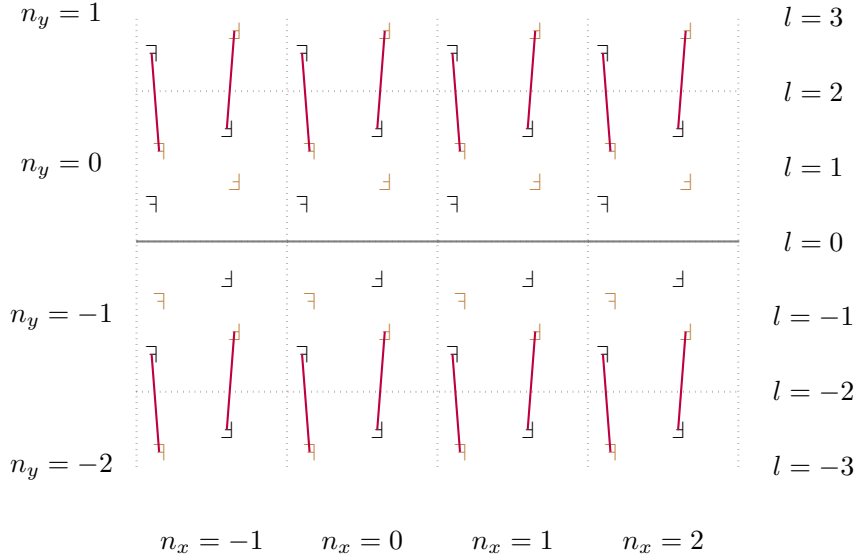
\begin{figure}
\begin{tikzpicture}[every node/.style={scale=1.5}]
\draw[help lines,dotted,thick,step=2] (-2,-3) grid (6,3);
\draw[help lines,thick,step=2] (-2,0) grid (6,0);

\foreach \p in {(-1.8,0.5),(0.2,0.5),(2.2,0.5),(4.2,0.5)}{\draw \p node[scale=0.6] {$\vI$};}
\foreach \p in {(-1.8,2.5),(0.2,2.5),(2.2,2.5),(4.2,2.5)}{\draw \p node[scale=0.6] {$\vI$};}
\foreach \p in {(-1.8,-1.5),(0.2,-1.5),(2.2,-1.5),(4.2,-1.5)}{\draw \p node[scale=0.6] {$\vI$};}

\foreach \p in {(-0.8,1.5),(1.2,1.5),(3.2,1.5),(5.2,1.5)}{\draw \p node[scale=0.6] {$\vG$};}
\foreach \p in {(-0.8,-0.5),(1.2,-0.5),(3.2,-0.5),(5.2,-0.5)}{\draw \p node[scale=0.6] {$\vG$};}
\foreach \p in {(-0.8,-2.5),(1.2,-2.5),(3.2,-2.5),(5.2,-2.5)}{\draw \p node[scale=0.6] {$\vG$};}

\foreach \p in {(-1.7,1.2),(0.3,1.2),(2.3,1.2),(4.3,1.2)}{\draw \p node[scale=0.6] {$\vC$};}
\foreach \p in {(-1.7,-0.8),(0.3,-0.8),(2.3,-0.8),(4.3,-0.8)}{\draw \p node[scale=0.6] {$\vC$};}
\foreach \p in {(-1.7,-2.8),(0.3,-2.8),(2.3,-2.8),(4.3,-2.8)}{\draw \p node[scale=0.6] {$\vC$};}

\foreach \p in {(-0.7,0.8),(1.3,0.8),(3.3,0.8),(5.3,0.8)}{\draw \p node[scale=0.6] {$\vCG$};}
\foreach \p in {(-0.7,2.8),(1.3,2.8),(3.3,2.8),(5.3,2.8)}{\draw \p node[scale=0.6] {$\vCG$};}
\foreach \p in {(-0.7,-1.2),(1.3,-1.2),(3.3,-1.2),(5.3,-1.2)}{\draw \p node[scale=0.6] {$\vCG$};}

\draw[thick,purple] (-1.7,1.2) -- (-1.8,2.5);
\draw[thick,purple] (0.3,1.2) -- (0.2,2.5);
\draw[thick,purple] (2.3,1.2) -- (2.2,2.5);
\draw[thick,purple] (4.3,1.2) -- (4.2,2.5);
\draw[thick,purple] (-1.7,-2.8) -- (-1.8,-1.5);
\draw[thick,purple] (0.3,-2.8) -- (0.2,-1.5);
\draw[thick,purple] (2.3,-2.8) -- (2.2,-1.5);
\draw[thick,purple] (4.3,-2.8) -- (4.2,-1.5);

\draw[thick,purple] (-0.8,1.5) -- (-0.7,2.8);
\draw[thick,purple] (1.2,1.5) -- (1.3,2.8);
\draw[thick,purple] (3.2,1.5) -- (3.3,2.8);
\draw[thick,purple] (5.2,1.5) -- (5.3,2.8);
\draw[thick,purple] (-0.8,-2.5) -- (-0.7,-1.2);
\draw[thick,purple] (1.2,-2.5) -- (1.3,-1.2);
\draw[thick,purple] (3.2,-2.5) -- (3.3,-1.2);
\draw[thick,purple] (5.2,-2.5) -- (5.3,-1.2);

\node[scale=0.7] at (-1,-4) {$n_x=-1$};
\node[scale=0.7]  at (1,-4) {$n_x=0$};
\node[scale=0.7]  at (3,-4) {$n_x=1$};
\node[scale=0.7]  at (5,-4) {$n_x=2$};

\node[scale=0.7]  at (-3,1) {$n_y=0$};
\node[scale=0.7]  at (-3,3) {$n_y=1$};
\node[scale=0.7]  at (-3,-1) {$n_y=-1$};
\node[scale=0.7]  at (-3,-3) {$n_y=-2$};

\node[scale=0.7]  at (7,-3) {$l=-3$};
\node[scale=0.7]  at (7,-2) {$l=-2$};
\node[scale=0.7]  at (7,-1) {$l=-1$};
\node[scale=0.7]  at (7,0) {$l=0$};
\node[scale=0.7]  at (7,1) {$l=1$};
\node[scale=0.7]  at (7,2) {$l=2$};
\node[scale=0.7]  at (7,3) {$l=3$};

\end{tikzpicture}
\caption{Unpaired chain of zero edge modes left behind when the purple bonds of Fig.\ \ref{fig:Hpb} passing through $n_y=0$ are intercepted by the edge glide axis.}\label{fig:horizontaledge}
\end{figure}

Clearly, for Hamiltonians $H_r, H_g, H_b$ there is no dangling edge mode as no bond is cut by the edge. For $H_p$, however, an entire Hilbert subspace along and just above $l=0$, namely $\mathscr{H}_e^{\rm upper}=l^2(\ZZ_{\rm black}^{\rm upper})\oplus l^2(\ZZ_{\rm brown}^{\rm upper})$, is left dangling, together with its counterpart just below $l=0$, namely $\mathscr{H}_e^{\rm lower}=l^2(\ZZ_{\rm black}^{\rm lower})\oplus l^2(\ZZ_{\rm brown}^{\rm lower})$ (the `$e$' subscript stands for `edge'), see Fig.\ \ref{fig:horizontaledge}.

The truncated Hamiltonian $\breve{H}=\breve{H}^{\rm upper}\oplus\breve{H}^{\rm lower}$ (acting on the direct sum of upper and lower half-space Hilbert subspaces) is consistent not just with $T_x$ symmetry, but also with $L^{(0)}$ glide symmetry, although $T_y$ symmetry is broken. The edge zero mode Hilbert space $\mathscr{H}_e^{\rm upper}\oplus\mathscr{H}_e^{\rm lower}$ is naturally $\ZZ_2$-graded by the upper/lower label\footnote{Not to be confused with the black/brown sublattice grading by the operator $\fS$.}, and is an invariant subspace for the action of $L^{(0)}$ which is furthermore \emph{odd} with respect to the upper/lower grading. The black edge subspace $\mathscr{H}_{e, \rm black}=\mathscr{H}_{e, \rm black}^{\rm upper}\oplus\mathscr{H}_{e, \rm black}^{\rm lower}$ is also invariant under $L^{(0)}$, as is the brown edge subspace $\mathscr{H}_{e, \rm brown}$, and these are intertwined by $\mathsf{X}$. We see that $L^{(0)}$ generates an unusual ``graded symmetry group'', or ``nonsymmorphic chiral symmetry'', explained below.

\subsubsection{Graded edge symmetry group}
Quite generally, under a homotopy of $U_p$ respecting $\pg$ symmetry, the truncated operator $\breve{H}_p^{\rm upper}\oplus\breve{H}_p^{\rm lower}$ is no longer symmetric with respect to vertical translation $T_y$, but nevertheless remains symmetric with respect to glide reflection $L^{(0)}$ (and also $T_x=(L^{(0)})^2$). Thus we say that the truncated bulk-with-edge system is compatible with a $\ZZ$ symmetry generated by $L^{(0)}$. However, the upper/lower grading is part of the data, and only the ``even'' subgroup generated by $T_x=(L^{(0)})^2$ preserves this grading whereas $L^{(0)}$ reverses it. Note that the other glide reflections $L^{(l)}, l\neq 0$ are \emph{not} symmetries of the truncated operators (irrespective of the grading).

\begin{figure}
\centering
\begin{tikzpicture}[every node/.style={scale=1.5}]
\draw[help lines,thick,step=2] (-2,0) grid (6,0);

\foreach \p in {(-1.8,0.2),(0.2,0.2),(2.2,0.2),(4.2,0.2)}{\draw \p node[scale=0.6] {$\bullet$};}

\foreach \p in {(-0.8,-0.2),(1.2,-0.2),(3.2,-0.2),(5.2,-0.2)}{\draw \p node[scale=0.6] {$\bullet$};}

\end{tikzpicture}
\caption{The generating translation along a glide axis also effects an exchange of the internal label ``above/below the axis''.}\label{fig:glideaxis}
\end{figure}
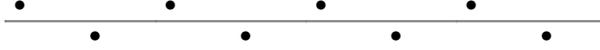

Therefore, when there is an edge at $l=0$, we should be considering the \emph{graded} symmetry group $\ZZ[L]$ with \emph{odd} generator $L$, which is to be represented in a graded sense (e.g.\ as the glide reflection operator $L^{(0)}$ above). Specifically, we have the following non-split exact sequence
\begin{equation} 
1\longrightarrow \ZZ[T_x]\overset{\times 2}{\longrightarrow}\ZZ[L]\overset{(-1)^{(\cdot)}}{\longrightarrow} \ZZ_2\longrightarrow 1.\label{1Dnonsymmorphicgroup}
\end{equation}
A graded representation of $\ZZ[L]$ is a \emph{graded} Hilbert space $\mathscr{H}_+\oplus\mathscr{H}_-$ with grading operator $\epsilon=\epsilon^\dagger$, $\epsilon^2=1$, such that $L\epsilon=-\epsilon L$ and $L^2=T_x$ (retaining the same symbols for the abstract group elements and their operator representatives).

In \cite{SSG1}, the sequence Eq.\ \eqref{1Dnonsymmorphicgroup} was studied, and the grading operator $\epsilon$ has the interpretation of a spectrally flattened gapped 1D Hamiltonian which anticommutes with the so-called ``nonsymmorphic chiral symmetry'' $L$. We do not interpret 
$\epsilon$ as a 1D Hamiltonian in this paper but as a grading operator on the edge mode Hilbert space.

It was found in \cite{SSG1} that there is a $\ZZ/2$ topological classification of Hamiltonians with nonsymmorphic chiral symmetry, using the language of twisted $K$-theory, which we will employ later. In our context, this suggests a $\ZZ/2$ classification of the edge modes for truncated $\pg$-symmetric Hamiltonians, and we shall make this precise via an index theorem.

\section{Twisted $K$-theory indices in the bulk}\label{sec:twistedindices}
In the subsequent sections, we define the bulk and edge topological indices in the twisted $K$-theory of bulk and edge Brillouin tori, and link them via a twisted version of the Toeplitz index theorem. Thus an analytic index of $\breve{H}_{\rm bulk}$ (a certain counting of the zero modes on an edge) will be identified with a $K$-theoretic index for the edge Brillouin torus (a circle), obtained by ``integrating'' over the transverse momenta. Unlike the situation in the SSH model, the relevant indices are much more subtle mod 2 numbers.

\subsection{Bloch vector bundle and $K$-theory}
Let us recall the construction of $K$-theoretic indices for topological phases with space group symmetries \cite{FM,Thiang}, which generalises the ideas put forth in \cite{Kitaev} for the case where the point group is trivial. The basic idea is that under the Bloch--Floquet transform with respect to a lattice $\ZZ^d$ of translation symmetries, the valence states of an insulator form a Hermitian vector bundle $\cE$ over the Brillouin $d$-torus $\cB$, and $\cE$ may be non-trivial as a complex vector bundle. The Chern insulator \cite{Haldane} is an example where $\cE$ has non-trivial Chern number, and leads to gapless edge states on a 1D edge. More abstractly, $\cB$ is the Pontryagin dual of irreducible characters for $\ZZ^d$. Every character $\chi\in\cB$ has the form $\chi_{\vect{k}}:\vect{n}\mapsto e^{\im \vect{n}\cdot\vect{k}}$ for some $\vect{k}\in [0,2\pi]^d/_{0\equiv 2\pi}$.

For a Hamiltonian invariant under the action of a space group $\sG$, the valence states must also host an action of $\sG$. In the decomposition of $\sG$ in Eq.\ \eqref{spacegroupsequence}, the translation part $N=\ZZ^d$ acts as multiplication by $\chi(\vect{n})$, as usual under the Fourier transform. Each $a\in F$ acts on the normal abelian subgroup $\ZZ^d$ by conjugation by any lift $\breve{a}\in\sG$, and there is a canonical dual action of $F$ on $\cB$, defined by 
\begin{equation*}
(a\cdot\chi)(\vect{n})=\chi(a^{-1}\cdot\vect{n}).\label{dualaction}
\end{equation*}
We may also regard Eq.\ \eqref{dualaction} as giving $C(\cB)$ the structure of a left $F$-module, with action $\alpha$ by precomposition,
\begin{equation*}
(\alpha(a)f)(\chi)=f(a^{-1}\cdot\chi),\qquad f\in C(\cB).
\end{equation*}

If $\sG$ is symmorphic, $F$ may be regarded as a subgroup of $\sG$, and there should be a linear action of $F$ on the sections $\Gamma(\cE)$ of the valence bundle $\cE$. Note that $\Gamma(\cE)$ has an action of $\vect{n}\in\ZZ^d$ via pointwise multiplication by the Fourier transformed function $\wh{\vect{n}}:\chi\mapsto \chi(\vect{n})$.
Because of Eq.\ \eqref{dualaction}, the action of $F$ on $\Gamma(\cE)$ is not fibrewise, but is instead a lift of the dual $F$-action on the base $\cB$ --- in other words, $\cE$ is an $F$-equivariant vector bundle over the $F$-space $\cB$. This ensures that the $F$ and $\ZZ^d$ actions on $\Gamma(\cE)$ together furnish an action of $\sG=\ZZ^d\rtimes F$.

\subsection{Twisted vector bundles from nonsymmorphic space groups}\label{sec:twistedvbnonsymmorphic}
In the general nonsymmorphic case, there is a group 2-cocycle $\nu:F\times F\rightarrow \ZZ^d$ defined by
\begin{equation*}
\nu(a,b)(\breve{ab})=\breve{a}\,\breve{b},\qquad a,b,\in F
\end{equation*}
where $\breve{(\cdot)}$ is a lift of $(\cdot)$ in $\sG$. Thus $\nu$ measures the failure of the lifting map $a\mapsto \breve{a}$ to be a homomorphism. By associativity in $\sG$, the map $\nu$ satisfies a cocycle condition
\begin{equation*}
\nu(a,b)+\nu(ab,c)=a\cdot\nu(b,c)+\nu(a,bc).
\end{equation*}
A different choice of lifting map modifies $\nu$, but only in such a way as to maintain its group cohomology class $[\nu]\in H^2(F,\ZZ^d)$. The Fourier transform of $\nu$ defines a $C(\cB,{\rm U}(1))$-valued 2-cocycle by the formula
\begin{equation}
\wh{\nu}(a,b)(\chi)=(ab\cdot\chi)(\nu(a,b))\in{\rm U}(1).\label{dualcocycle}
\end{equation}

The interpretation of $\wh{\nu}$ is that it measures the failure of the linear action of $a,b\in F$ on $\Gamma(\cE)$ to compose homomorphically. In more detail, if 
$$
\begin{tikzcd}
\chi\ar[r,"b"]\ar[rr,out=-30,in=210,swap,"ab"] & b\cdot\chi\ar[r,"a"] & (ab)\cdot\chi
\end{tikzcd}
$$
on the base, then for composable linear maps
$$
\begin{tikzcd}
\cE_\chi\ar[r,"L_b"]\ar[rr,out=-30,in=210,swap,"L_{ab}"] & \cE_{b\cdot \chi}\ar[r,"L_a"] & \cE_{(ab)\cdot\chi}
\end{tikzcd}
$$
a phase factor is acquired,
\begin{equation}
L_a L_b=\wh{\nu}(a,b)((ab)\cdot\chi) L_{ab}.\label{phaseassignments}
\end{equation}
In particular, when $\chi$ is a fixed point for the $F$-action on $\cB$, there is a \emph{projective} representation of $F$ on the fibre $\cE_\chi$. The assignment of phases in Eq.\ \eqref{phaseassignments} globally is needed for there to be a genuine action of $\sG$ on $\Gamma(\cE)$, and has important physical consequences. For instance, if there is a whole cycle of fixed points in $\cB$, the variation of the phase factors $\wh{\nu}(a,b)((ab)\cdot\chi)$ as $\chi$ varies along the cycle can lead to ``monodromy of representations''. This phenomenon was pointed out in the context of band theory of solid state physics as a kind of degeneracy and connectivity of energy bands enforced by non-symmorphic space group symmetry \cite{MZ}.

\begin{example}
For $\pg$, the circle $k_y=0$ (and also $k_y=\pi$) is a set of fixed points for the dual action of the point group $F=\ZZ_2$ on $\cB$. The fibre over such a point hosts a projective representation of $F$ with $(-1)\in F$ represented by an operator $L_{k_x}$ such that $L_{k_x}^2=e^{\im k_x}$, so on one-dimensional invariant subspaces we can have $L_{k_x}=\pm e^{\im k_x/2}$. If we start from the trivial (ordinary) representation at $(k_x,k_y)=(0,0)$ and move along fixed points by increasing $k_x$, only the option $L_{k_x}=+e^{\im k_x/2}$ satisfies continuity with respect to $k_x$, but this means that we end up at the \emph{sign} representation $L_{2\pi}=e^{\im\pi}=-1$ when we finally reach $k_x=2\pi\sim 0$. Thus a ``doubling'' of dimension is enforced by the family of projective phase factors coming from nonsymmorphicity of $\pg$.
\end{example}

Another way of thinking about $\cE$ is that there is a transformation groupoid $\cB//F$ associated to the $F$-action on $\cB$, centrally extended by $\wh{\nu}$, and $\cE$ is a ``linearisation'' or a $\wh{\nu}$-twisted equivariant vector bundle, cf.\ \cite{FM,Kubota}. In the symmorphic case, $\cE$ is an ordinary linearisation of the transformation groupoid, i.e.\ an equivariant vector bundle over $\cB$, or a bundle over the groupoid $\cB//F$.

Given an action of $F$ on $\cB$ and an equivariant twist $\widehat{\nu}$ as above, a $\widehat{\nu}$-twisted $F$-equivariant vector bundle on $\mathcal{B}$ (a twisted bundle for short) can be defined as a complex vector bundle $\pi : \mathcal{E} \to \mathcal{B}$ equipped with a $\widehat{\nu}$-twisted $F$-action, namely, a set of vector bundle maps $\{ L_a \}_{a \in F}$,
$$
\begin{CD}
\mathcal{E} @>{L_a}>> \mathcal{E} \\
@V{\pi}VV @VV{\pi}V \\
\mathcal{B} @>{a}>> \mathcal{B},
\end{CD}
$$
which satisfy 
$$
L_aL_b = \widehat{\nu}(a, b) L_{ab}.
$$
When we make the $\chi$-dependence explicit, the above formula is 
$$L_aL_b\psi = \widehat{\nu}(a, b)(ab \cdot \chi) L_{ab} \psi$$
for $a, b \in F$, $\chi \in \mathcal{B}$ and $\psi \in \mathcal{E}|_{\chi} = \pi^{-1}(\chi)$, in accordance with Eq.\ \eqref{phaseassignments}.

As with ordinary vector bundles, there is a category of finite-rank $\wh{\nu}$-twisted $F$-equivariant vector bundles, and the Grothendieck group of classes of such bundles gives the finite-rank twisted equivariant $K$-theory group $K^{0+\wh{\nu}}_F(\cB)_{\rm fin}$, see Appendix E of \cite{FM} and \cite{Karoubi2, Gomi2}.

\begin{remark}
For $K$-theory twisted by a general element of $H^3(\cB,\ZZ)$, an infinite-dimensional model, e.g.\ using Fredholm operators, is needed, and the twisted $K$-theory classes are not necessarily realised by a finite-rank model. In our equivariant setting, the 2-cocycle $\wh{\nu}$ above defines a \emph{torsion} class in $H^2(F,C(\cB,{\rm U}(1)))$, which can be understood as an element of $H^3_F(\cB,\ZZ)$, i.e.\ an \emph{equivariant twist}, cf.\ \cite{Gomi1}. Non-equivariantly, this twist is trivial. This nature of $\widehat{\nu}$ ensures that the finite-rank model $K^{0 + \widehat{\nu}}_F(\mathcal{B})_{\mathrm{fin}}$ agrees with the ``true'' $K$-theory $K^{0 + \widehat{\nu}}_F(\mathcal{B})$, see (\cite{FM,Gomi2}).
\end{remark}

\subsection{Classification of twisted vector bundles for $\pg$ symmetry}
For $\sG=\pg$, the action of $(-1)\in \ZZ_2=F$ on $N$ takes $(n_y,n_x)\mapsto (-n_y,n_x)$, so the dual action on $\cB$ takes $(k_x,k_y)\mapsto(k_x,-k_y)$ (writing $\vect{k}$ for the character $\chi_{\vect{k}}$). Taking $(0;1)\in\pg$ as a lift of $(-1)\in \ZZ_2$, the group 2-cocycle $\nu$ is simply $\nu(-1,-1)=(0,1)\in N$, and trivial otherwise, so $\wh{\nu}(-1,-1)(\vect{k})=e^{\im k_x}=u_x(\vect{k})$. Thus $\wh{\nu}$-twisted vector bundles over the $\ZZ_2$-space $\cB$ are complex bundles $\cE\rightarrow\cB$ equipped with a twisted $\ZZ_2$-action, the latter being specified by a bundle map $L_{-1}$ lifting the action of $(-1)\in\ZZ_2$ on $\cB$, which squares to the multiplication operator $M_{u_x}$. Such bundles are classified $K$-theoretically by the twisted $K$-theory group $K^{0 + \widehat{\nu}}_{\Z_2}(\mathcal{B})$.

The computation of $K^{\bullet + \widehat{\nu}}_{\Z_2}(\mathcal{B}), \bullet=0,1$, is detailed in \cite{SSG2} (page 25--30): In that paper, we first compute $K^\bullet_{\Z_2}(S^1)$, where the circle $S^1 = \R/2\pi\Z = \mathcal{B}_y$ is given the flip $\Z_2$-action $k_y \mapsto -k_y$. This computation is done by applying the Mayer--Vietoris sequence to the decomposition of $S^1$ into two intervals. (This $K$-theory $K^\bullet_{\Z_2}(S^1)$ is relatively well known. Its $R(\Z_2)$-module structure is for example determined in \cite{MD-R}.) We then compute $K^{\bullet + \widehat{\nu}}_{\Z_2}(\mathcal{B})$ of our interests. This is again done by applying the Mayer--Vietoris sequence to a decomposition of the $2$-dimensional torus $\mathcal{B}$, where knowledge about the $R(\Z_2$)-module structure on $K^0_{\Z_2}(S^1)$ is needed; see Appendix \ref{appendix:computations} for details.

The result of this computation is that $K^{0 + \widehat{\nu}}_{\Z_2}(\mathcal{B}) \cong \Z$. In fact, $K^{0 + \widehat{\nu}}_{\Z_2}(\mathcal{B}) = K(\mathrm{Vect}^{\widehat{\nu}}_{\Z_2}(\mathcal{B}))$ can be realised by applying the Grothendieck construction to the monoid $\mathrm{Vect}^{\widehat{\nu}}_{\Z_2}(\mathcal{B})$ of isomorphism classes of $\widehat{\nu}$-twisted $\Z_2$-equivariant vector bundles of finite rank (see Section \ref{sec:finiterankK}). Given such a vector bundle $\mathcal{E} \to \mathcal{B}$, we can show that:

\begin{lem}\label{lem:trivialregularbundle}
$\cE$ is trivialisable as a complex bundle, and $\mathrm{dim}\, \mathcal{E} \in 2\Z$.
\end{lem}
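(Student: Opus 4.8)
The plan is to establish the two claims separately, exploiting the bundle isomorphism encoded by the twisted action. Write $a\colon\cB\to\cB$ for the involution $(k_x,k_y)\mapsto(k_x,-k_y)$, and recall that the twisted $\ZZ_2$-structure on $\cE$ is carried by a fibrewise-invertible bundle map $L_{-1}\colon\cE\to\cE$ covering $a$ and satisfying $L_{-1}^2=M_{u_x}$. Equivalently, $L_{-1}$ furnishes an isomorphism of complex vector bundles $\cE\cong a^*\cE$ over $\cB$.

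For trivialisability I would show that the first Chern class vanishes. Since $\cB$ is a two-torus, complex vector bundles over it are classified up to isomorphism by their rank together with $c_1\in H^2(\cB,\ZZ)\cong\ZZ$, so it suffices to prove $c_1(\cE)=0$. Applying naturality of Chern classes to $\cE\cong a^*\cE$ gives $c_1(\cE)=c_1(a^*\cE)=a^*c_1(\cE)$. As $a$ reverses the orientation of $\cB$ (its Jacobian has determinant $-1$), it acts on $H^2(\cB,\ZZ)\cong\ZZ$ by multiplication by $-1$, so $c_1(\cE)=-c_1(\cE)$ and hence $2c_1(\cE)=0$. Because $H^2(\cB,\ZZ)$ is torsion-free, $c_1(\cE)=0$, and $\cE$ is trivialisable. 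This part uses only that $L_{-1}$ covers the orientation-reversing involution $a$, not the detailed form of the twist.

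For evenness of the rank I would restrict to the fixed circle $C=\{k_y=0\}$ with coordinate $k_x\in[0,2\pi]$. There $a$ acts as the identity, so $L_{-1}$ restricts to a fibrewise automorphism of $\cE|_C$ with $L_{-1}^2=e^{\im k_x}$; its minimal polynomial divides $t^2-e^{\im k_x}$, whose two roots $\pm e^{\im k_x/2}$ are distinct for every $k_x$. Hence $L_{-1}$ is pointwise diagonalisable, and after cutting the circle to the interval $[0,2\pi]$ the spectral projections onto the eigenbundles $E_{\pm}$ for $\pm e^{\im k_x/2}$ vary continuously and thus have constant ranks $r_{\pm}$, with $r_++r_-=\mathrm{rank}\,\cE$. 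The crux is the monodromy: the continuous eigenvalue branch equal to $+1$ at $k_x=0$ arrives at $-1$ at $k_x=2\pi$, so under the gluing $k_x=0\sim k_x=2\pi$ that reconstitutes $C$ the sub-bundle $E_+$ is identified with $E_-$. Comparing the dimension of the $(-1)$-eigenspace of $L_{-1}$ computed from the two endpoints forces $r_+=r_-$, whence $\mathrm{rank}\,\cE=2r_+\in2\ZZ$.

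I expect the evenness claim to be the main obstacle, not for depth but for bookkeeping: one must cut the fixed circle into an interval in order to define $E_{\pm}$ globally, track how the two eigenvalue branches are interchanged upon traversing the circle once, and match the resulting dimensions across the gluing. The trivialisability, by contrast, reduces at once to the orientation-reversing action of $a$ on $H^2(\cB,\ZZ)$ and is essentially formal.
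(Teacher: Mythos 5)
Your proof is correct. The trivialisability half coincides with the paper's argument: the paper likewise notes that the $\ZZ_2$-action is orientation-reversing and concludes $c_1(\cE)=0$ (you simply spell out the step $c_1(\cE)=a^*c_1(\cE)=-c_1(\cE)$ and the torsion-freeness of $H^2(\cB,\ZZ)$). For the evenness of the rank, however, you take a genuinely different route. The paper first uses the trivialisation to write the twisted action as a map $W:\cB\to{\rm U}(r)$ with $W(k_x,-k_y)W(k_x,k_y)=e^{\im k_x}1_{\CC^r}$, then takes determinants and restricts to $k_y=0$ to get $2\deg\det W(\cdot,0)=r$, so $r$ is even in one line; this has the side benefit of producing the explicit winding-number relation $\deg\det W=r/2$, which fits the later generator computations (Corollary \ref{cor:regularbundle} and Section \ref{sec:Gysin}). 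You instead restrict $L_{-1}$ to the fixed circle $\{k_y=0\}$, diagonalise it fibrewise using that $t^2-e^{\im k_x}$ has distinct roots, and track the monodromy that exchanges the two eigenvalue branches $\pm e^{\im k_x/2}$ after one loop, forcing the two eigenbundles to have equal rank. This is a rigorous version of the ``monodromy of representations'' heuristic that the paper itself describes in the Example of Section \ref{sec:twistedvbnonsymmorphic} (the Michel--Zak band-connectivity phenomenon), and it has the minor structural advantage of not depending on the trivialisation step at all, since it only uses the restriction of $(\cE,L_{-1})$ to the fixed circle; the cost is more bookkeeping (cutting the circle, constancy of projection ranks, matching eigenspaces across the gluing) where the paper gets the same conclusion from a single determinant computation.
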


\begin{proof}
Because the $\Z_2$-action on $\mathcal{B}$ is orientation reversing, the (first) Chern class of the underlying bundle of $\mathcal{E}$ is trivial. This allows us to assume that $\mathcal{E} = \mathcal{B} \times \C^{r}$ and that its twisted $\Z_2$-action is of the form
$$
L_{-1}:((k_x, k_y), \vec{\psi}) \mapsto
((k_x, -k_y), W(k_x, k_y) \vec{\psi}),
$$
where $W : \mathcal{B} \to {\mathrm U}(r)$ is a continuous map satisfying
\begin{equation}
W(k_x, -k_y)W(k_x, k_y) = e^{\im k_x} 1_{\C^r}.\label{Wcondition}
\end{equation}
This relation leads to
$$
\deg \det W + \deg \det W = r,
$$
where $\deg \det W$ is the the winding number (degree) of $\det W(\cdot, 0) : \mathcal{B}_x \to U(1)$ around the $k_x$-circle $\mathcal{B}_x$ at $k_y = 0$.
\end{proof}
As a consequence of Lemma \ref{lem:trivialregularbundle}, we get a monoid homomorphism $\frac{1}{2}\mathrm{dim} : \mathrm{Vect}^{\widehat{\nu}}_{\Z_2}(\mathcal{B}) \to \Z$. By the universality of the Grothendieck construction, this monoid homomorphism extends to a group homomorphism $\frac{1}{2}\mathrm{dim} : K^{0 + \widehat{\nu}}_{\Z_2}(\mathcal{B}) \to \Z$. Now, the ``regular vector bundle'' $\mathcal{E}_{\mathrm{reg}}$ is the product bundle $\mathcal{E}_{\mathrm{reg}} = \mathcal{B} \times \C^2$ of rank $2$ with the twisted $\Z_2$-action
\begin{equation}
L^{(1)}:((k_x, k_y), \vec{\psi}) \mapsto
((k_x, -k_y), 
\left(
\begin{array}{cc}
0 & e^{\im k_x} \\
1 & 0
\end{array}\right)
\vec{\psi})\equiv 
((k_x, -k_y),V(k_x)
\vec{\psi}),\label{regularbundle}
\end{equation}
where $V:\cB_x\rightarrow {\mathrm U}(2)$ is the unitary matrix-valued function $\begin{pmatrix}0 & u_x \\ 1 & 0\end{pmatrix}$ specifying $W$ via $W(k_x,k_y)=V(k_x)$. As a result, the homomorphism $\frac{1}{2}\mathrm{dim} : \ZZ\cong K^{0 + \widehat{\nu}}_{\Z_2}(\mathcal{B}) \to \Z$ is \emph{surjective}, and hence is bijective. 

\begin{corollary}\label{cor:regularbundle}
$\mathcal{E}_{\mathrm{reg}}$ represents a generator of $K^{0 + \widehat{\nu}}_{\Z_2}(\mathcal{B}) \cong \Z$. 
\end{corollary}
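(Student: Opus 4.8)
The plan is to read off the Corollary directly from the structure already assembled above: we have the abstract isomorphism $K^{0 + \widehat{\nu}}_{\Z_2}(\mathcal{B}) \cong \Z$ (from the Mayer--Vietoris computation referenced to \cite{SSG2}) together with the explicit homomorphism $\frac{1}{2}\mathrm{dim}$, and it remains only to locate the image of $[\mathcal{E}_{\mathrm{reg}}]$ and argue that it generates. The guiding principle is that an isomorphism between cyclic groups carries generators to generators and, conversely, detects them.

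First I would compute $\frac{1}{2}\mathrm{dim}[\mathcal{E}_{\mathrm{reg}}]$. By construction in Eq.\ \eqref{regularbundle}, $\mathcal{E}_{\mathrm{reg}} = \mathcal{B} \times \C^2$ has complex rank $2$, so $\frac{1}{2}\mathrm{dim}[\mathcal{E}_{\mathrm{reg}}] = 1$. Here one should verify that $\mathcal{E}_{\mathrm{reg}}$ is a bona fide $\widehat{\nu}$-twisted bundle, i.e.\ that $W(k_x, k_y) = V(k_x)$ satisfies the compatibility relation \eqref{Wcondition}: since $V$ is independent of $k_y$, the relation reduces to $V(k_x)^2 = e^{\im k_x} 1_{\C^2}$, and indeed $\begin{pmatrix} 0 & e^{\im k_x} \\ 1 & 0 \end{pmatrix}^2 = e^{\im k_x} 1_{\C^2}$. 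Thus $[\mathcal{E}_{\mathrm{reg}}]$ is a legitimate class mapping to $1 \in \Z$. Next I would invoke the universal property of the Grothendieck construction, already used to extend $\frac{1}{2}\mathrm{dim}$ from the monoid $\mathrm{Vect}^{\widehat{\nu}}_{\Z_2}(\mathcal{B})$ to a group homomorphism $\frac{1}{2}\mathrm{dim} : K^{0 + \widehat{\nu}}_{\Z_2}(\mathcal{B}) \to \Z$, whose image therefore contains $1$ and so is all of $\Z$.

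The final step identifies the source with $\Z$ via the computed isomorphism, so that $\frac{1}{2}\mathrm{dim}$ becomes a surjective endomorphism of $\Z$. The only point that genuinely needs care is the elementary fact that a surjective group homomorphism $\Z \to \Z$ is automatically injective (it must be multiplication by $\pm 1$), which upgrades surjectivity to bijectivity; this is precisely what converts the computation of a single numerical invariant into a statement about generators. Being an isomorphism, $\frac{1}{2}\mathrm{dim}$ reflects generators, and since $\frac{1}{2}\mathrm{dim}[\mathcal{E}_{\mathrm{reg}}] = 1$ generates $\Z$, the class $[\mathcal{E}_{\mathrm{reg}}]$ generates $K^{0 + \widehat{\nu}}_{\Z_2}(\mathcal{B}) \cong \Z$. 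There is no substantial obstacle remaining once Lemma \ref{lem:trivialregularbundle} and the group computation are in hand; the content of the Corollary is essentially the observation that the explicitly constructed $\mathcal{E}_{\mathrm{reg}}$ realises the minimal nonzero value of the half-dimension invariant.
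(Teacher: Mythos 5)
Your proposal is correct and follows essentially the same route as the paper: both arguments combine the Mayer--Vietoris computation $K^{0+\widehat{\nu}}_{\Z_2}(\mathcal{B})\cong\Z$ with the half-dimension homomorphism $\frac{1}{2}\mathrm{dim}$ from Lemma \ref{lem:trivialregularbundle}, note that the rank-$2$ bundle $\mathcal{E}_{\mathrm{reg}}$ maps to $1\in\Z$ so the homomorphism is surjective, and conclude bijectivity from the fact that a surjective endomorphism of $\Z$ is an isomorphism, whence $[\mathcal{E}_{\mathrm{reg}}]$ is a generator. Your explicit check that $V(k_x)^2=e^{\im k_x}1_{\C^2}$ (so that $\mathcal{E}_{\mathrm{reg}}$ is genuinely a $\widehat{\nu}$-twisted bundle) is a small verification the paper leaves implicit, but it does not change the argument.
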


\begin{remark}\label{rem:choiceofglide}
We could also endow $\cE_{\rm reg}$ with the alternative twisted $\ZZ_2$-action
$$
L_{-1}=L^{(0)}:((k_x, k_y), \vec{\psi}) \mapsto
((k_x, -k_y), 
\left(
\begin{array}{cc}
0 & e^{\im(k_x-k_y)} \\
e^{-\im k_y} & 0
\end{array}
\right)
\vec{\psi}),
$$
in which $W$ is instead the function $V'=\begin{pmatrix}0 & u_x\overline{u_y} \\ \overline{u_y} & 0\end{pmatrix}=\overline{u_y}V$, which also satisfies Eq.\ \eqref{Wcondition}. In relation to the construction of Section \ref{sec:pgtight}, $L^{(1)}$ and $L^{(0)}$ are the glide reflection operators with glides axes at $l=1$ and at $l=0$ respectively. We may verify that the twisted actions $L^{(1)}$ and $L^{(0)}$ are intertwined by the multiplication operator $\begin{pmatrix}0 & u_x \\ \overline{u_y} & 0\end{pmatrix}$, which corresponds to changing the unit cell such that the glide reflection axis remains in the middle of the unit cell, see Fig.\ \ref{fig:pgunitcell}. Thus $(\cE_{\rm reg}, L^{(1)})$ and $(\cE_{\rm reg}, L^{(0)})$ define isomorphic twisted vector bundles. The choice $L^{(1)}$ is convenient because its $V$ depends only on $k_x\in\cB_x$, but the choice $L^{(0)}$ with its $V'$ is also needed for the analysis of the bulk-edge correspondence with edge at $l=0$. \end{remark}

\subsection{Twisted bundles from $\pg$ and chiral symmetry: Klein bottle phase}
The notion of twisted vector bundles and twisted $K$-theory can be generalised greatly, e.g. \cite{FM, Gomi2}, and we mention one generalisation --- degree shift--- which is relevant when chiral symmetry $\fS$ is present. By definition, $\fS$ is a grading operator which commutes with $\sG$, so $\cE$ has a ``sublattice decomposition'' into $\cE_A\oplus \cE_B$ where each factor individually hosts a $\sG$ action on its sections. If a gapped Hamiltonian $H$ not only commutes with $\sG$ but also anticommutes with $\fS$, it must have the form
$$H=\begin{pmatrix} 0 & U \\ U^\dagger & 0\end{pmatrix} $$
where $U\in{\rm End}(\cE_B,\cE_A)$ is an invertible bundle map which intertwines the $\sG$ actions. If $\cE_A, \cE_B$ are trivialisable, we get a continuous assignment $U:k\mapsto U(k)$ of invertible matrices with respect to some trivialisation. Usually, there is a reference isomorphism identifying $\cE_A\cong \cE_B\cong\cE$, then we may regard $U\in{\rm Aut}(\cE)$.

For the case $\sG=\ZZ^d$, there is no (twisted) point group action to intertwine, so $U(k)$ is arbitrary and the homotopy classes of $U$ (in the stabilised sense) give the group $K^{-1}(\cB)$. For $d=1$, this led to $K^{-1}(S^1)$ providing the bulk topological index for $\fS$-symmetric (and translation-symmetric) Hamiltonians, e.g.\ in the SSH model.

For a general space group $\mathscr{G}$, $U$ is not arbitrary, but must satisfy a compatibility condition with $\sG$. We had an example for $\mathscr{G}=\pg$ with ${\rm dim}\,\cE=2$, where the condition was Eq.\ \eqref{compatibilityconditions}. More generally, we can assume that $\mathcal{E}$ is the direct product of some copies of the regular $\widehat{\nu}$-twisted $\Z_2$-equivariant vector bundle $\mathcal{E}_{\mathrm{reg}}$ as in Lemma \ref{lem:trivialregularbundle} and Corollary \ref{cor:regularbundle}, so that it is the product bundle of rank $2n$. Accordingly, its twisted $\Z_2$-action $L$ and the commuting unitary automorphism are described as maps 
\begin{align}
V &: \mathcal{B}_x \to {\rm U}(2n), &
U &: \mathcal{B}=\mathcal{B}_x \times \mathcal{B}_y \to {\rm U}(2n)\label{inputVU}
\end{align}
subject to the relations
\begin{align}
V(k_x) V(k_x) &= e^{\im k_x} 1_{\C^{2n}}, &
V(k_x) U(k_x, k_y) &= U(k_x, -k_y) V(k_x).\label{inputVUrelations}
\end{align}
We consider such unitary automorphisms $U$ in the direct limit as $n\rightarrow\infty$, then the homotopy classes of $U$ give the \emph{twisted} degree-shifted $K$-theory group $K^{1+\wh{\nu}}_{\ZZ_2}(\cB)$. The computation of this group was given in \cite{SSG2} with the result $K^{1+\wh{\nu}}_{\ZZ_2}(\cB)\cong\ZZ\oplus\ZZ/2$.

\begin{remark}
As noted in Remark \ref{rem:choiceofglide}, other choices of twisted $\ZZ_2$ actions are possible, being specified by a gauge-transformed $V'$ that could generally depend on $(k_x,k_y)\in\cB$ instead of just $k_x\in\cB$. This applies to both the black and the brown sublattices. Under such a transformation, the matrix function $U$ needs to satisfy a modified compatibility condition from Eq.\ \eqref{inputVUrelations} in order to commute with the twisted $\ZZ_2$ action. Just as in the SSH model, the bulk-edge correspondence must be formulated with respect to a choice of edge. The edge selects for us a choice of unit cell with respect to which we take the twisted $\ZZ_2$ action $L_{-1}$ to be the glide reflection $L^{(1)}$ with glide axis passing through the middle of the unit cells with $n_y=0$; then $V$ has the above $\cB_y$-independent form.
\end{remark}

\begin{proposition}\label{prop:twistedK1}
Let $\sG$ be the nonsymmorphic wallpaper group $\pg$ with point group $F=\ZZ_2$ and 2-cocycle $\nu$. The topological phases of chiral and $\pg$ symmetric Hamiltonians are labelled by $K^{1+\wh{\nu}}_{\ZZ_2}(\cB)$ with $\ZZ_2$ acting on $\cB=\TT^2$ by $(k_x,k_y)\mapsto(k_x,-k_y)$. This group is isomorphic to $\ZZ\oplus\ZZ/2$, with the torsion phase represented by $U_p$ of Eq.\ \eqref{compatibleunitaries} and a free generator represented by $U_r$.
\end{proposition}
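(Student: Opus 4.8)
The plan is to take as given the two ingredients already assembled just before the statement: first, that after spectral flattening and passage to the direct limit $n\to\infty$, chiral and $\pg$-symmetric gapped Hamiltonians are in bijection with stable homotopy classes of the off-diagonal unitaries $U$ obeying the intertwining relations \eqref{inputVUrelations}, so that the set of phases \emph{is} $K^{1+\wh\nu}_{\ZZ_2}(\cB)$; and second, the computation $K^{1+\wh\nu}_{\ZZ_2}(\cB)\cong\ZZ\oplus\ZZ/2$ carried out by Mayer--Vietoris in \cite{SSG2} and Appendix \ref{appendix:computations}. The genuinely new content of the statement is therefore the identification of the two named representatives from \eqref{compatibleunitaries}, and I would organise the argument around a single $\ZZ$-valued homomorphism that splits off the free part.

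First I would define a winding homomorphism. Restricting any compatible $U$ to the fixed circle $k_y=0$ of the flip action gives $U(\cdot,0):\cB_x\to{\rm U}(2n)$, and I set $\mathrm{Wind}_x[U]:=\deg\det U(\cdot,0)$. This is invariant under homotopies through compatible unitaries, additive under $\oplus$, and unchanged by stabilisation $U\mapsto U\oplus 1$, so it descends to a homomorphism $\mathrm{Wind}_x:K^{1+\wh\nu}_{\ZZ_2}(\cB)\to\ZZ$. A direct computation gives $\det U_r(\cdot,0)=-u_x$ and $\det U_p(\cdot,0)=1$, hence $\mathrm{Wind}_x[U_r]=1$ and $\mathrm{Wind}_x[U_p]=0$.

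Since $\mathrm{Wind}_x$ already attains the value $1$ it is surjective, and as a homomorphism into $\ZZ$ it annihilates the torsion subgroup of $\ZZ\oplus\ZZ/2$; its kernel is therefore exactly the torsion summand $\ZZ/2$. Consequently $[U_r]$ has infinite order and generates a complement to the torsion, so $U_r$ is a free generator, establishing the second named representative. The class $[U_p]$, having $\mathrm{Wind}_x=0$, lies in the kernel and is thus $2$-torsion, so it represents either $0$ or the unique nonzero element of $\ZZ/2$. Everything then reduces to the single assertion that $[U_p]\neq 0$.

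Proving $[U_p]\neq 0$ is the main obstacle. It cannot be seen by any winding or determinant invariant --- indeed $\det U_p\equiv 1$, and $U_p$ restricts to the scalars $\pm 1$ on both fixed circles $k_y=0,\pi$, while non-equivariantly $[U_p]=[u_y]+[\overline{u_y}]=0$ in $K^{-1}(\TT^2)$ --- so the obstruction is intrinsically a twisted-equivariant torsion phenomenon. I would detect it by tracking $U_p$ through the same Mayer--Vietoris decomposition of $\cB$ used to compute the group: decomposing $\cB$ into $\ZZ_2$-tubular neighbourhoods of the two fixed circles glued along free annuli, the $\ZZ/2$ summand arises from a specific connecting map, and I would check that the explicit cocycle data of $U_p$ (its $k_y$-dependence interpolating between the scalar $+1$ at $k_y=0$ and the scalar $-1$ at $k_y=\pi$, with $V$-intertwining via \eqref{regularbundle} off the fixed circles) maps to the local generator of that summand. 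Equivalently, one may compare with the $\ZZ/2$ classification of nonsymmorphic-chiral systems of \cite{SSG1} attached to the graded group \eqref{1Dnonsymmorphicgroup}, to which $U_p$ restricts nontrivially. This bookkeeping through the connecting homomorphism --- confirming that $U_p$ is not merely torsion but the \emph{nonzero} class --- is where the real work lies; the preceding steps are formal.
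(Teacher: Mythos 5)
Your treatment of the free part is correct, and it is in substance the paper's own argument in disguise: your $\mathrm{Wind}_x$ is exactly the composite of restriction to the fixed circle $k_y=0$ (the $\ZZ_2$-equivariant section that splits the Gysin sequence of Section \ref{sec:Gysin}) with the degree isomorphism $K^{1+\wh{\nu}}_{\ZZ_2}(\cB_x)\cong\ZZ$ used in Corollary \ref{cor:freegenerator}. It correctly exhibits $[U_r]$ as a free generator and places $[U_p]$ in the torsion subgroup, so the Proposition indeed reduces to the single claim $[U_p]\neq 0$.

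That claim, however, is never proved, and it is the substance of the Proposition; neither of the two routes you sketch closes the gap. The Mayer--Vietoris route is left as pure intention: the decomposition actually used in Appendix \ref{appendix:computations} cuts $\cB$ along $k_x$ into two invariant tubes, each equivariantly equivalent to $\cB_y$, glued along two copies of $\cB_y$ (not into tubular neighbourhoods of the fixed circles glued along free annuli, as you describe), and identifying the image of $[U_p]$ under the resulting connecting map --- keeping track of the local trivialisations of $\wh{\nu}$ and the sign-representation twist $t$ on one gluing component --- is precisely the nontrivial bookkeeping you defer. The comparison route is circular: the natural map from bulk classes to the $\ZZ/2$ classification of \cite{SSG1} \emph{is} the push-forward $\pi_*:K^{1+\wh{\nu}}_{\ZZ_2}(\cB)\to K^{0+c+\wh{\nu}}_{\ZZ_2}(\cB_x)$ (there is no independently defined ``restriction'' of the $2$-dimensional symbol $U_p$ to a $1$-dimensional nonsymmorphic-chiral system), so asserting that $U_p$ ``restricts nontrivially'' is asserting $\pi_*[U_p]\neq 0$, which is what must be proven. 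The paper supplies exactly this missing mechanism: it realises $\pi_*$ analytically as the mod $2$ kernel dimension of the twisted Toeplitz family $T_U(k_x)$ (Proposition \ref{prop:z2invariant} and Theorem \ref{thm:twistedindextheorem}), whose well-definedness as a homomorphism on $K$-theory classes is the content of the Fredholm/Karoubi machinery of Sections \ref{sec:edgeinvariants}--\ref{sec:genToeplitz}; granted that, the decisive computation is one line --- $T_{U_p}(k_x)$ is the Hardy-space compression of $\mathrm{diag}(u_y,\overline{u_y})$, whose kernel is the one-dimensional space of constants in the $\overline{u_y}$ summand, so $\dim\mathrm{Ker}\,T_{U_p}(k_x)=1$ is odd and $[U_p]\neq 0$. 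Without that index theorem, or the completed Mayer--Vietoris tracking, your argument establishes only that $[U_p]$ is $2$-torsion, i.e.\ that it is either zero or the generator of $\ZZ/2$, not that it is nonzero.
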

\begin{proof}
$K^{1+\wh{\nu}}_{\ZZ_2}(\cB)\cong\ZZ\oplus\ZZ/2$ is computed in Section \ref{sec:Gysin} via a Gysin sequence, and Corollary \ref{cor:freegenerator} there shows that $[U_r]$ can be taken to be a free generator. In Section \ref{sec:twistedToeplitzindex}, the torsion part is shown to be detected by an analytic mod 2 index which is non-zero on the 2-torsion class $[U_p]$.
\end{proof}

\begin{definition}
The bulk topological phase represented by $U_p$ as in Proposition \ref{prop:twistedK1} is called the Klein bottle phase.
\end{definition}
This terminology is justified by the relationship between $[U_p]$ and the 2-torsion homology cycle of the Klein bottle, explained in Section \ref{sec:BaumConnes}.

\section{Edge symmetries and topological invariants}\label{sec:edgeinvariants}
\subsection{Graded frieze group symmetry, graded twists, and the edge index}
Chiral symmetry can be generalised, as in \cite{FM}, by considering \emph{graded} groups, i.e.\ a group $G_c$ together with a surjective grading homomorphism $c$ into the two-element group,
$$ 1\longrightarrow G\longrightarrow G_c\overset{c}{\longrightarrow}\ZZ_2\longrightarrow 1.$$
Then $G_c$ is required to act on a graded Hilbert space, or sections of a graded Hermitian vector bundle, in such a way that the even subgroup $G=$ker$(c)$ acts as even operators while the odd coset acts as odd operators. For example, the $\sG$ and chiral symmetry $\fS$ can be combined into a single graded group $\sG_c$,
$$ 0\longrightarrow \sG\longrightarrow \sG_c=\sG\times \ZZ_2[\fS]\overset{c=(0,{\rm id})}{\longrightarrow}\ZZ_2\longrightarrow 1.$$ 

In general, the grading homomorphism need not be split. A simple example is Eq.\ \eqref{1Dnonsymmorphicgroup} which we rewrite as
\begin{equation*} 
0\longrightarrow \ZZ\overset{\times 2}{\longrightarrow}\ZZ_c\overset{c=(-1)^{(\cdot)}}{\longrightarrow} \ZZ_2\longrightarrow 1.
\end{equation*}
The graded group $\ZZ_c$ is generated by a glide reflection $L$ which is given the odd grading, and can be thought of as a generalisation of a 1D space group. Indeed, a 1D glide axis for $\pg$ precisely enjoys such a generalised type of ``graded'' symmetry! A glide axis is not a purely 1D concept, but comes with a notion of having two sides which are exchanged by the glide reflection generating $\ZZ_c$ (see Fig.\ \ref{fig:glideaxis}). A glide reflection precisely generates the \emph{frieze} group $\sf{p11g}$ in the crystallography literature \cite{Kopsky}. Thus we gain the valuable insight that we need not be limited to ordinary space groups when considering symmetries of lower-dimensional \emph{boundaries} in Euclidean space. This is in contrast to the viewpoint in \cite{Wieder}, where symmetries of surfaces in 3D are required to have a wallpaper group of symmetries.

We wish to classify the possible zero modes, or representation spaces, which can be consistent with $\ZZ_c$. We can Fourier transform with respect to the even subgroup $\ZZ$ generated by $T_x=L_{-1}^2$, to get the 1D Brillouin zone $\cB_x$, on which the dual action of the quotient ``point group'' $F=\ZZ_2$ is \emph{trivial}. The group 2-cocycle $\nu$ is $\nu(-1,-1)=1\in\ZZ$ so that $\wh{\nu}(-1,-1)(k_x)=e^{\im k_x}$. Note that this cocycle is essentially the same as our earlier cocycle defining $\pg$ (which was valued in $\ZZ_x\subset\ZZ^2$), which is why we re-use the notations $\nu$ and $\wh{\nu}$ here.

We obtain a graded vector bundle on which $L_{-1}$ is an \emph{odd} bundle map and furnishes a $\wh{\nu}$-twisted $\ZZ_2$ action. The requirement of $L$ to be odd is encoded by the identity grading homomorphism $F=\ZZ_2\rightarrow\ZZ_2$, which we also denote by $c$. This is an example of a $(\wh{\nu},c)$-twisted $F$ equivariant vector bundle over $\cB_x$, which are classified by an equivariant $K$-theory group denoted $K^{0 + c + \widehat{\nu}}_{\ZZ_2}(\mathcal{B}_x)$. Strictly speaking, such finite-rank twisted vector bundles represent elements of a ``finite-rank twisted equivariant $K$-theory'' group $K^{0 + c + \widehat{\nu}}_{\ZZ_2}(\mathcal{B}_x)_{\rm fin}$, but this turns out to be isomorphic to the ``true'' $K^{0 + c + \widehat{\nu}}_{\ZZ_2}(\mathcal{B}_x)$ as explained in the next Subsection.

We will show that the zero modes of a $\pg$ and chiral symmetric Hamiltonian truncated along a glide axis, are ``topologically protected'' and have a mod 2 \emph{edge index} in $K^{0 + c + \widehat{\nu}}_{\ZZ_2}(\mathcal{B}_x)$. This requires some preparation in the notion of $K$-theory with \emph{graded} equivariant twists.

We mention in passing that groups with gradings induced by reflections in a hyperplane, their $C^*$-algebras, and equivariant $K$-theory, had been studied in \cite{Stolz} in the context of positive scalar curvature metrics, and that related notions appear in twisted equivariant $KR$-theory studied in \cite{Moutuou}.

\subsection{Finite rank twisted equivariant $K$-theory $K^{0 + c + \widehat{\nu}}_{\Z_2}(\mathcal{B}_x)_{\mathrm{fin}}$}\label{sec:finiterankK}
Let us review the finite rank formulation of twisted equivariant $K$-theory \cite{FM,Gomi2}. As in the above example, let $\mathcal{B}_x = \R/2\pi\Z$ be the circle with the trivial $\Z_2$-action, $c : \Z_2 \to \Z_2$ the identity homomorphism, and $\widehat{\nu}$ the cocycle 
\begin{align}
\widehat{\nu}(1, 1)(k_x)
&= \widehat{\nu}(1, -1)(k_x)
= \widehat{\nu}(-1, 1)(k_x) = 1, &
\widehat{\nu}(-1, -1)(k_x)
&= e^{\im k_x}.\label{Bxcocycle}
\end{align}
A \textit{$(\widehat{\nu}, c)$-twisted $\Z_2$-equivariant vector bundle} on $\mathcal{B}_x$ is a complex vector bundle $\mathcal{E} \to \mathcal{B}_x$ equipped with a $\Z_2$-grading $\epsilon : \mathcal{E} \to \mathcal{E}$ and a $\widehat{\nu}$-twisted $\Z_2$-action $L_a : \mathcal{E} \to \mathcal{E}$ such that $\epsilon L_a = c(a) L_a \epsilon$ for $a \in \Z_2$. We say that a $(\widehat{\nu}, c)$-twisted $\Z_2$-equivariant vector bundle $(\mathcal{E}, \epsilon, L)$ admits a \textit{compatible $Cl_1$-action} (or \emph{Clifford} action) if there is a vector bundle map
$$
\begin{CD}
\mathcal{E} @>{\gamma}>> \mathcal{E} \\
@V{\pi}VV @VV{\pi}V \\
\mathcal{B} @= \mathcal{B}
\end{CD}
$$
such that
\begin{align*}
\gamma^2 &= 1, &
\epsilon \gamma &= - \gamma \epsilon, &
\gamma L_a &= c(a) L_a \gamma.
\end{align*}
For such twisted bundles, there is an obvious notion of isomorphisms. 

Let us write $\mathrm{Vect}^{(\widehat{\nu}, c)}_{\Z_2}(\mathcal{B}_x)$ for the monoid of isomorphism classes of $(\widehat{\nu}, c)$-twisted $\Z_2$-equivariant vector bundles on $\mathcal{B}_x$ of finite rank, and $\mathrm{Triv}^{(\widehat{\nu}, c)}_{\Z_2}(\mathcal{B}_x)$ for the submonoid of those admitting compatible $Cl_1$-actions. It can be shown that the quotient monoid gives rise to an abelian group (reverse the grading to get the inverse), which we denote by
$$
K^{0 + c + \widehat{\nu}}_{\Z_2}(\mathcal{B}_x)_{\mathrm{fin}}
= \mathrm{Vect}^{(\widehat{\nu}, c)}_{\Z_2}(\mathcal{B}_x)/
\mathrm{Triv}^{(\widehat{\nu}, c)}_{\Z_2}(\mathcal{B}_x),
$$
and compute explicitly in Corollary \ref{cor:finiteedgeKcomputation}. As it turns out, the ``true'' twisted equivariant $K$-theory group $K^{0+c+\wh{\nu}}_{\ZZ_2}(\cB_x)$ achieves the same classification \cite{SSG1,SSG2}. To be more precise, there is a homomorphism $K^{0+c+\wh{\nu}}_{\ZZ_2}(\cB_x)_{\rm fin}\rightarrow K^{0+c+\wh{\nu}}_{\ZZ_2}(\cB_x)$ which is an isomorphism in this case.

In \cite{SSG2} (VIII, E, 2,pp.\ 29--30), the computation $K^{0+c+\wh{\nu}}_{\ZZ_2}(\cB_x)\cong\ZZ/2$ was carried out. As explained later, there is a natural \emph{topological index map} $K^{1+\wh{\nu}}_{\ZZ_2}(\cB)\rightarrow K^{0+c+\wh{\nu}}_{\ZZ_2}(\cB_x)$ which is surjective, and therefore detects the $\pg$ and $\fS$ symmetric bulk topological phase corresponding to the 2-torsion element of $K^{1+\wh{\nu}}_{\ZZ_2}(\cB)$.


\subsection{Fredholm formulation of $K^{0 + c + \widehat{\nu}}_{\Z_2}(\mathcal{B}_x)$}

Next, we review an infinite-dimensional formulation by using Fredholm operators. Let $\mathcal{E} \to \mathcal{B}_x$ be a $(\widehat{\nu}, c)$-twisted vector bundle. We assume that $\mathcal{E}$ is a Hermitian vector bundle, and the $\Z_2$-grading and the twisted group actions are unitary. We can allow infinite-dimensional $\mathcal{E}$. In this case, we assume that the fibres are separable Hilbert spaces, and the structure group is the group of unitary operators topologised by the compact-open topology in the sense of Atiyah and Segal. Now, a (self-adjoint) \textit{Fredholm family} on $\mathcal{E}$ is a vector bundle map $A : \mathcal{E} \to \mathcal{E}$ such that:
\begin{itemize}
\item
$A_{k_x} = A_{k_x}^\dagger$ is bounded for each $k_x \in \mathcal{B}_x$,

\item
$A_{k_x}^2 - 1_{\mathcal{E}|_{k_x}}$ is compact for each $k_x \in \mathcal{B}_x$,

\item
$\mathrm{Spec}A_{k_x} \subset [-1, 1]$ for each $k_x \in \mathcal{B}_x$, and

\item
$A$ is odd with respect to the $\Z_2$-grading:
\begin{align*}
A \epsilon &= - \epsilon A, &
A L_a &= c(a) L_a A.
\end{align*}
\end{itemize}
Regarding the continuity of $A$, we assume that the family of bounded operators $A_{k_x}$ is continuous in the compact-open topology in the sense of Atiyah and Segal, and that of compact operators $A^2_{k_x} - 1_{\mathcal{E}|_{k_x}}$ is continuous in the operator norm topology. From a general argument, there exists a \textit{locally universal} $(\widehat{\nu}, c)$-twisted vector bundle $\mathcal{E}_{\mathrm{univ}}$ on $\mathcal{B}_x$, and any $(\widehat{\nu}, c)$-twisted vector bundle $\mathcal{E}$ (possibly infinite rank) can be embedded into $\mathcal{E}_{\mathrm{univ}}$. The space of invertible Fredholm families on $\mathcal{E}_{\mathrm{univ}}$ is non-empty and contractible. Considering the fibrewise homotopy classes of Fredholm families on $\mathcal{E}_{\mathrm{univ}}$, we get the Fredholm formulation of the twisted $K$-theory
$$
K^{0 + c + \widehat{\nu}}_{\Z_2}(\mathcal{B}_x)
= \{ A |\ \mbox{Fredholm family on $\mathcal{E}_{\mathrm{univ}}$} \}/
\mathrm{homotopy},
$$
in which the zero element is represented by an invertible Fredholm family.

\medskip

The (local) universality of $\mathcal{E}_{\mathrm{univ}}$ leads to a homomorphism
$$
\imath : K^{0 + c + \widehat{\nu}}_{\Z_2}(\mathcal{B}_x)_{\mathrm{fin}} \to
K^{0 + c + \widehat{\nu}}_{\Z_2}(\mathcal{B}_x).
$$
Concretely, given a finite rank twisted bundle $\mathcal{E}$, we have the trivial Fredholm family $A \equiv 0$. An embedding $\mathcal{E} \to \mathcal{E}_{\mathrm{univ}}$, which is essentially unique, leads to the orthogonal decomposition $\mathcal{E}_{\mathrm{univ}} = \mathcal{E} \oplus \mathcal{E}^\perp$. The orthogonal complement $\mathcal{E}^\perp$ also has the local universality, so that there is an invertible Fredholm family $\gamma^\perp$ on $\mathcal{E}^\perp$ . Then, we get a Fredholm family $0 \oplus \gamma^\perp$ on $\mathcal{E}_{\mathrm{univ}}$. The assignment $\mathcal{E} \mapsto 0 \oplus \gamma^\perp$ induces the homomorphism $\imath$. 

\medskip

As a matter of fact \cite{SSG2} (VIII, E, 2, page 29--30), we have
$$
K^{0 + c + \widehat{\nu}}_{\Z_2}(\mathcal{B}_x) \cong \ZZ/2.
$$
This computation is also given in the Appendix. Further, as is pointed out in \cite{Gomi2} (\S\S4.4), we also have $K^{0 + c + \widehat{\nu}}_{\Z_2}(\mathcal{B}_x)_{\mathrm{fin}} \cong \ZZ/2$, and so the homomorphism $\imath$ in this case is an isomorphism. We here provide the proof of this fact, which is not detailed in \cite{Gomi2}.

\begin{lem} \label{lem:classification_twistd_bundle}
Any $(\widehat{\nu}, c)$-twisted vector bundle on $\mathcal{B}_x$ of finite rank is isomorphic to the product bundle $\mathcal{B}_x \times \C^{2n}$ with its $\Z_2$-grading $\epsilon$ and twisted action $L_{-1}$ given by
\begin{align*}
\epsilon
&=
\left(
\begin{array}{cc}
1_{\C^n} & 0 \\
0 & -1_{\C^n} 
\end{array}
\right), &
L_{-1} &: (k_x, \vec{\psi})
\mapsto
(k_x, 
\left(
\begin{array}{cc}
0 & e^{\im k_x}1_{\C^n} \\
1_{\C^n} & 0 
\end{array}
\right)
\vec{\psi}
).
\end{align*}
\end{lem}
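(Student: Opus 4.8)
The plan is to use the grading to cut $\mathcal{E}$ into two pieces of equal rank that are swapped by the twisted action, trivialise one of them using the fact that $\mathcal{B}_x$ is a circle, and then transport that trivialisation across $L_{-1}$ so that the twisting function $e^{\im k_x}$ is pushed into a single off-diagonal block.

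First I would split $\mathcal{E} = \mathcal{E}^+ \oplus \mathcal{E}^-$ into the $\pm 1$ eigenbundles of the involution $\epsilon$, using the continuous projections $\tfrac{1}{2}(1\pm\epsilon)$. The compatibility relation $\epsilon L_{-1} = c(-1) L_{-1}\epsilon = -L_{-1}\epsilon$ shows that $L_{-1}$ is odd, hence interchanges the two summands; write $L_\pm$ for its restrictions $\mathcal{E}^\pm\to\mathcal{E}^\mp$. Then I would read off from $L_{-1}L_{-1} = \widehat{\nu}(-1,-1) = e^{\im k_x}$ the identities $L_- L_+ = e^{\im k_x}1_{\mathcal{E}^+}$ and $L_+ L_- = e^{\im k_x}1_{\mathcal{E}^-}$. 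Since $e^{\im k_x}$ is nowhere zero, each $L_\pm$ is a fibrewise isomorphism with inverse $e^{-\im k_x}L_\mp$; in particular $\mathcal{E}^+\cong\mathcal{E}^-$, so their ranks agree (say $n$) and $\dim\mathcal{E}=2n$, which already accounts for both the $\C^{2n}$ and the evenness of the rank.

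Next I would invoke the standard fact that every complex vector bundle over the circle is trivialisable, since its clutching datum lives in $\pi_0(\mathrm{GL}(n,\C))=0$ by connectedness of $\mathrm{GL}(n,\C)$. Fixing a (unitary) trivialisation $\phi^+:\mathcal{E}^+\xrightarrow{\sim}\mathcal{B}_x\times\C^n$, the key move is \emph{not} to trivialise $\mathcal{E}^-$ independently but to induce its trivialisation from $\phi^+$ by setting $\phi^- := \phi^+\circ L_+^{-1}$. In the resulting global frame for $\mathcal{E}\cong\mathcal{B}_x\times\C^{2n}$ the grading is $\epsilon=\mathrm{diag}(1_{\C^n},-1_{\C^n})$ by construction, and $L_{-1}$ acquires the block form with $L_+\mapsto 1_{\C^n}$ in the lower-left (immediate from the definition of $\phi^-$) and $L_-\mapsto e^{\im k_x}1_{\C^n}$ in the upper-right (using $L_- L_+ = e^{\im k_x}$), which is exactly the asserted normal form. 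I would finish by noting that this map intertwines $\epsilon$ and $L_{-1}$, hence is an isomorphism of $(\widehat{\nu},c)$-twisted bundles, and that it may be taken unitary because $L_+$ is an isometry whenever $L_{-1}$ is unitary.

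I do not expect a genuinely hard step here: the whole content is the connectedness of the unitary group (triviality of complex bundles over $S^1$) together with the single algebraic constraint $L_{-1}^2 = e^{\im k_x}$, which simultaneously forces the rank to be even and pins the twist down to the $e^{\im k_x}$ block. The only point requiring care is that the two halves must \emph{not} be trivialised independently; the trivialisation of the odd part has to be transported through $L_{-1}$, as this is precisely what places $\epsilon$ and $L_{-1}$ into the stated forms at the same time rather than merely separately.
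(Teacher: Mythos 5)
Your proposal is correct and is essentially the paper's own argument: both rest on the evenness of the rank forced by the odd symmetry, triviality of complex vector bundles over the circle, and a change of frame that normalises $L_{-1}$ to the standard off-diagonal form. Your step of transporting the trivialisation of $\mathcal{E}^+$ through $L_+$ (i.e.\ setting $\phi^-=\phi^+\circ L_+^{-1}$) is exactly the paper's conjugation by $\left(\begin{smallmatrix}1 & 0\\ 0 & C(k_x)^{-1}\end{smallmatrix}\right)$, just phrased geometrically instead of in matrix coordinates.
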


\begin{proof}
Given such a twisted vector bundle $\mathcal{E}$, the same argument as in the previous section shows that its rank is an even number $2n$. Any complex vector bundle on a circle is topologically trivial, so that we can assume $\mathcal{E} = \mathcal{B}_x \times \C^{2n}$. Because of the anti-commutation relation $\epsilon L_{-1} = - L_{-1} \epsilon$, the eigenspaces of $\epsilon$ have the same multiplicity. Hence we can assume that the $\Z_2$-grading $\epsilon$ is as stated. We can also assume that $L_1 : \mathcal{E} \to \mathcal{E}$ is the identity. Now, let us express $L_{-1} : \mathcal{E} \to \mathcal{E}$ as
$$
L_{-1} : (k_x, \vec{\psi})
\mapsto
(k_x, U(k_x) \vec{\psi})
$$
by using a map $U : \mathcal{B}_x \to {\rm U}(2n)$. The anti-commutation relation $\epsilon L_{-1} = - L_{-1} \epsilon$ allows us to express $U(k_x)$ as
$$
U(k_x)
=
\left(
\begin{array}{cc}
0 & B(k_x) \\
C(k_x) & 0 
\end{array}
\right).
$$
Because $L$ is a twisted action, it must hold that $U(k_x)U(k_x) = e^{\im k_x}$, which is equivalent to the constraint $B(k_x)C(k_x) = C(k_x)B(k_x) = e^{\im k_x}$. Hence we have $B(k_x) = e^{\im k_x}C(k_x)^{-1}$. Since
$$
\left(
\begin{array}{cc}
1 & 0 \\
0 & C(k_x)^{-1}
\end{array}
\right)
\left(
\begin{array}{cc}
0 & e^{\im k_x}C(k_x)^{-1} \\
C(k_x) & 0
\end{array}
\right)
\left(
\begin{array}{cc}
1 & 0 \\
0 & C(k_x)
\end{array}
\right)
=
\left(
\begin{array}{cc}
0 & e^{\im k_x} \\
1 & 0 
\end{array}
\right),
$$
this twisted vector bundle is isomorphic to the one in this lemma.
\end{proof}

\begin{prop} \label{prop:classification_twisted_bundle_with_Clifford}
A $(\widehat{\nu}, c)$-twisted vector bundle $\mathcal{E} \to \mathcal{B}_x$ of finite rank admits a compatible $Cl_1$-action if and only if $\mathrm{dim}\,\mathcal{E} \in 4\mathbb{N}$.
\end{prop}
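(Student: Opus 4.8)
The plan is to reduce the problem, via Lemma~\ref{lem:classification_twistd_bundle}, to an explicit condition on a loop of unitary matrices on the circle, and then to settle both implications with a single determinant computation governed by a winding-number parity.

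First I would invoke Lemma~\ref{lem:classification_twistd_bundle} to assume, without loss of generality, that $\mathcal{E} = \mathcal{B}_x \times \C^{2n}$ with grading $\epsilon = \mathrm{diag}(1_{\C^n}, -1_{\C^n})$ and with $L_{-1}$ given by left multiplication by $V(k_x) = \begin{pmatrix} 0 & e^{\im k_x}1_{\C^n} \\ 1_{\C^n} & 0 \end{pmatrix}$. A compatible $Cl_1$-action $\gamma$ is an odd self-adjoint unitary with $\gamma^2 = 1$; oddness with respect to $\epsilon$ forces the block form $\gamma(k_x) = \begin{pmatrix} 0 & X(k_x) \\ X(k_x)^\dagger & 0 \end{pmatrix}$ with $X : \mathcal{B}_x \to {\rm U}(n)$ continuous. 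Since $c(-1) = -1$, the remaining Clifford relation $\gamma L_{-1} = c(-1) L_{-1}\gamma$ reads $\gamma V = -V\gamma$, and a direct block computation (using $X X^\dagger = 1$) shows this is equivalent to the single condition
\begin{equation*}
X(k_x)^2 = -e^{\im k_x}\, 1_{\C^n},
\end{equation*}
or equivalently $X(k_x) = -e^{\im k_x} X(k_x)^\dagger$.

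For the ``only if'' direction I would take determinants of this relation to obtain $(\det X(k_x))^2 = (-1)^n e^{\im n k_x}$. The map $\det X : \mathcal{B}_x \to {\rm U}(1)$ is continuous, so the winding number of its square is even, whereas the right-hand side has winding number $n$ around the $k_x$-circle; comparing the two forces $n$ to be even, i.e.\ $\dim\mathcal{E} = 2n \in 4\mathbb{N}$. For the converse, when $n = 2m$ I would exhibit a solution explicitly: take $X$ to be the direct sum of $m$ copies of $\begin{pmatrix} 0 & 1 \\ -e^{\im k_x} & 0 \end{pmatrix}$, which is manifestly unitary and squares to $-e^{\im k_x}1$, so the associated $\gamma$ is a compatible $Cl_1$-action.

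The computations here are short; the only point requiring genuine care is the winding-number parity step, where I must argue that the winding of $\det X$ is a well-defined integer and that squaring doubles it. A secondary, essentially cosmetic subtlety is the precise regularity imposed on $\gamma$: if one only demands $\gamma^2 = 1$ rather than self-adjointness, the off-diagonal block lies in ${\rm GL}(n,\C)$ rather than ${\rm U}(n)$ and the constraint becomes $P(k_x)^2 = -e^{\im k_x}1$ with $P \in {\rm GL}(n,\C)$, but taking determinants again yields $(\det P)^2 = (-1)^n e^{\im n k_x}$, so the parity conclusion is unaffected.
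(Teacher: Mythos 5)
Your proposal is correct and follows essentially the same route as the paper's proof: reduction to the normal form of Lemma \ref{lem:classification_twistd_bundle}, the forced off-diagonal block shape of $\gamma$, translation of the Clifford anticommutation into a functional equation for the block, the determinant/winding-parity argument for necessity, and an explicit block example for sufficiency. If anything, your sign bookkeeping is the more careful one --- the relation for the upper-right block really is $X(k_x)^2=-e^{\im k_x}1_{\C^n}$, whereas the paper states $A(k_x)^2=e^{\im k_x}$ for its lower-left block (the correct relation there is $A(k_x)^2=-e^{-\im k_x}1_{\C^n}$, and the paper's sample $A$ satisfies the stated rather than the correct relation) --- but these sign/inverse slips affect neither the parity argument nor the conclusion.
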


\begin{proof}
Without loss of generality, we can assume that $\mathcal{E}$ is as realised in Lemma \ref{lem:classification_twistd_bundle}. If such $\mathcal{E}$ admits a compatible $Cl_1$-action $\gamma$, then it is of the form
$$
\gamma : (k_x, \vec{\psi})
\mapsto
(k_x, 
\left(
\begin{array}{cc}
0 & A(k_x)^{-1} \\
A(k_x) & 0
\end{array}
\right)
\vec{\psi}),
$$
where $A : \mathcal{B}_x \to {\rm U}(n)$ is a map. The anti-commutation relation $L_{-1}\gamma = - \gamma L_{-1}$ is equivalent to $A(k_x)^2 = e^{\im k_x}$. Taking the determinant and its degree around the circle, we get
$$
2 \deg \det A = n.
$$
Hence $\mathrm{dim}\,\mathcal{E} = 2n \in 4\Z$. Conversely, if $n=2m$ is even, then we set
$$
A(k_x)
= 
\left(
\begin{array}{cc}
0 & e^{\im k_x} \\
1 & 0
\end{array}
\right).
$$
This constructs a compatible $Cl_1$-action $\gamma$.
\end{proof}

From the definition of $K^{0 + c + \widehat{\nu}}_{\Z_2}(\mathcal{B}_x)_{\mathrm{fin}}$, it follows that:

\begin{cor}\label{cor:finiteedgeKcomputation}
The assignment 
$$
[\mathcal{E}] \mapsto \frac{1}{2}\mathrm{dim}\, \mathcal{E} \mod 2
$$
realises an isomorphism $K^{0 + c + \widehat{\nu}}_{\Z_2}(\mathcal{B}_x)_{\mathrm{fin}} \cong \Z/2$.
\end{cor}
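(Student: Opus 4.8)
The plan is to leverage the two structural results just proved --- the normal form of Lemma \ref{lem:classification_twistd_bundle} and the Clifford obstruction of Proposition \ref{prop:classification_twisted_bundle_with_Clifford} --- to determine both monoids appearing in the definition $K^{0+c+\widehat{\nu}}_{\Z_2}(\mathcal{B}_x)_{\mathrm{fin}} = \mathrm{Vect}^{(\widehat{\nu},c)}_{\Z_2}(\mathcal{B}_x)/\mathrm{Triv}^{(\widehat{\nu},c)}_{\Z_2}(\mathcal{B}_x)$ completely, and then to read off the quotient. First I would observe that Lemma \ref{lem:classification_twistd_bundle} produces, for each $n \geq 0$, a single isomorphism class of finite-rank $(\widehat{\nu},c)$-twisted bundle of rank $2n$, namely the standard product bundle with the stated $\epsilon$ and $L_{-1}$. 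Since the direct sum of the standard rank-$2n$ and rank-$2m$ models is again a twisted bundle of rank $2(n+m)$, hence isomorphic to the standard rank-$2(n+m)$ model by the same Lemma, the invariant $\tfrac{1}{2}\dim$ furnishes a monoid isomorphism $\mathrm{Vect}^{(\widehat{\nu},c)}_{\Z_2}(\mathcal{B}_x) \to \mathbb{N}$.

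Next I would identify the submonoid of Clifford-trivial bundles. By Proposition \ref{prop:classification_twisted_bundle_with_Clifford}, a finite-rank twisted bundle admits a compatible $Cl_1$-action exactly when $\dim \mathcal{E} \in 4\mathbb{N}$, i.e.\ when $\tfrac{1}{2}\dim \mathcal{E}$ is even, so under the identification above $\mathrm{Triv}^{(\widehat{\nu},c)}_{\Z_2}(\mathcal{B}_x)$ corresponds to $2\mathbb{N} \subset \mathbb{N}$. The $K$-group is the quotient monoid, in which two bundles are identified precisely when they become isomorphic after adjoining Clifford-trivial summands; translated through $\tfrac{1}{2}\dim$ this is the relation $n \sim n'$ iff $n \equiv n' \pmod 2$. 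Hence $[\mathcal{E}] \mapsto \tfrac{1}{2}\dim \mathcal{E} \bmod 2$ is well defined on the quotient, additive, surjective (the standard rank-$2$ bundle maps to $1$), and injective: if $\tfrac{1}{2}\dim \mathcal{E}$ is even then $\dim \mathcal{E} \in 4\mathbb{N}$, so $\mathcal{E}$ admits a compatible $Cl_1$-action and is zero in the quotient. This yields the claimed isomorphism with $\Z/2$.

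The only point requiring genuine care --- and the nearest thing to an obstacle --- is confirming that the quotient is already the \emph{group} $\Z/2$ and not merely a quotient monoid, together with the compatibility of the grading-reversal inverses with $\tfrac{1}{2}\dim \bmod 2$. I would verify that the grading-reversed bundle $(\mathcal{E}, -\epsilon, L)$ has the same $\tfrac{1}{2}\dim$, and that $(\mathcal{E},\epsilon,L) \oplus (\mathcal{E},-\epsilon,L)$ has rank $4n \in 4\mathbb{N}$ and thus admits a compatible $Cl_1$-action (the swap of summands supplying the odd involution), so that each class is its own inverse, exactly matching the group law of $\Z/2$. Because every class already carries a canonical normal-form representative by Lemma \ref{lem:classification_twistd_bundle}, no further stabilisation subtlety intervenes, and the three displayed properties close the argument.
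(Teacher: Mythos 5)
Your argument is correct and is essentially the paper's own: the paper derives this corollary directly from Lemma \ref{lem:classification_twistd_bundle} (which, exactly as you say, identifies $\mathrm{Vect}^{(\widehat{\nu}, c)}_{\Z_2}(\mathcal{B}_x)$ with $\mathbb{N}$ via $\frac{1}{2}\mathrm{dim}$) and Proposition \ref{prop:classification_twisted_bundle_with_Clifford} (which identifies $\mathrm{Triv}^{(\widehat{\nu}, c)}_{\Z_2}(\mathcal{B}_x)$ with $2\mathbb{N}$), so that the quotient monoid is $\Z/2$ just as you compute. One minor correction to a parenthetical remark: the plain swap on $(\mathcal{E},\epsilon,L)\oplus(\mathcal{E},-\epsilon,L)$ \emph{commutes} with $L_{-1}\oplus L_{-1}$ rather than anticommuting, so it is not itself a compatible $Cl_1$-action (one can take instead $(v,w)\mapsto(\epsilon w,\epsilon v)$); this is harmless, since the existence of such an action already follows from Proposition \ref{prop:classification_twisted_bundle_with_Clifford}, which you invoke.
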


In view of the construction of $\imath$, it also follows that:

\begin{cor} \label{cor:classification_twisted_bundle_with_Clifford}
Any Fredholm family on a $(\widehat{\nu}, c)$-twisted vector bundle $\mathcal{E} \to \mathcal{B}_x$ is homotopic to a family $A$ such that the family of vector spaces $\mathrm{Ker}\,A = \bigcup_{k_x} \mathrm{Ker}\,A_{k_x}$ gives rise to a $(\widehat{\nu}, c)$-twisted vector bundle. Furthermore, an isomorphism $K^{0 + c + \widehat{\nu}}_{\Z_2}(\mathcal{B}_x) \cong \Z/2$ is realised by the assignment
$$
[A] \mapsto \frac{1}{2} \mathrm{dim} \,\mathrm{Ker}\, A_{k_x} \mod 2,
$$
where $k_x \in \mathcal{B}_x$ is any point.
\end{cor}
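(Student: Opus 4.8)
The plan is to deduce the corollary directly from the construction of $\imath$, together with the two facts already in hand: that $\imath$ is an isomorphism, and that $K^{0+c+\widehat{\nu}}_{\Z_2}(\mathcal{B}_x)_{\mathrm{fin}}\cong\Z/2$ is realised by $[\mathcal{E}]\mapsto\tfrac12\dim\mathcal{E}\bmod 2$ (Corollary \ref{cor:finiteedgeKcomputation}). First I would establish the opening assertion by the graded version of the classical index-bundle construction. Since $A_{k_x}^2-1_{\mathcal{E}|_{k_x}}$ is compact and $\mathrm{Spec}\,A_{k_x}\subset[-1,1]$, the essential spectrum of each $A_{k_x}$ lies in $\{\pm1\}$, so $0$ is uniformly separated from the essential spectrum. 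Over the compact base $\mathcal{B}_x$ I would then produce a continuous finite-rank subbundle $V\subset\mathcal{E}_{\mathrm{univ}}$ with $\ker A_{k_x}\subseteq V|_{k_x}$ and with $A_{k_x}$ invertible on $V|_{k_x}^\perp$ (the spectral subspace for a small symmetric interval $(-\delta,\delta)$, with $\delta$ chosen locally and patched using compactness). The crucial observation is that oddness forces $\epsilon A=-A\epsilon$ and $L_{-1}A=-AL_{-1}$ (as $c$ is the identity), so $\epsilon$ and $L_{-1}$ each carry the $\lambda$-eigenspace of $A_{k_x}$ to the $(-\lambda)$-eigenspace; because the interval $(-\delta,\delta)$ is symmetric, the resulting $V$ is invariant under $\epsilon$ and $L_{-1}$ and is therefore itself a $(\widehat{\nu},c)$-twisted vector bundle. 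As $A$ commutes with the spectral projection $P_V$, it splits as $A|_V\oplus A|_{V^\perp}$, and the linear homotopy $A_t=(1-t)A|_V\oplus A|_{V^\perp}$ stays within odd self-adjoint Fredholm families (normalisation and compactness of $A_t^2-1$ are preserved, the perturbation being finite rank) and terminates at $A'=0_V\oplus A|_{V^\perp}$, whose kernel is exactly the twisted bundle $V$.

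Next I would identify the class. The subbundle $V^\perp$ again carries local universality, and the space of invertible Fredholm families on such a bundle is contractible (as recorded above for $\mathcal{E}_{\mathrm{univ}}$); hence the invertible family $A|_{V^\perp}$ is homotopic to the family $\gamma^\perp$ used to define $\imath$. Consequently $[A]=[A']=[0_V\oplus\gamma^\perp]=\imath([V])$, so that $\imath^{-1}[A]=[V]$. This shows that the proposed assignment is precisely the composite $\tfrac12\dim\;\circ\;\imath^{-1}$: in particular its value is constant in $k_x$ (since $V$ is a bundle over the connected base $\mathcal{B}_x$, so $\dim V$ is constant and even by Lemma \ref{lem:classification_twistd_bundle}) and is independent of the chosen representative $A'$ and of the auxiliary subbundle $V$. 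Since $\imath$ and $\tfrac12\dim:K^{0+c+\widehat{\nu}}_{\Z_2}(\mathcal{B}_x)_{\mathrm{fin}}\to\Z/2$ are both isomorphisms onto $\Z/2$, so is the composite, which is the desired isomorphism $K^{0+c+\widehat{\nu}}_{\Z_2}(\mathcal{B}_x)\cong\Z/2$.

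The genuinely technical point, and the one I expect to be the main obstacle, is the first step: extracting a \emph{continuous}, $\epsilon$- and $L_{-1}$-invariant finite-rank subbundle $V$ carrying the kernel, and checking that the straightening homotopy remains inside the class of odd self-adjoint Fredholm families with $A_{k_x}^2-1$ compact. This is exactly the equivariant, graded analogue of the standard self-adjoint index-bundle construction, and it is the equivariance together with oddness that guarantee $V$ inherits the $(\widehat{\nu},c)$-twisted structure rather than merely the underlying complex bundle structure. Once this is in place, everything else reduces to the already-established computation $K^{0+c+\widehat{\nu}}_{\Z_2}(\mathcal{B}_x)_{\mathrm{fin}}\cong\Z/2$ and the contractibility of the invertible Fredholm families, so no further hard analysis is needed.
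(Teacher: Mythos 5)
Your overall strategy --- reduce everything to Corollary \ref{cor:finiteedgeKcomputation} via the construction of $\imath$ and the standing fact that $\imath$ is an isomorphism --- is exactly the paper's strategy, and your second paragraph (identifying $[A]=\imath([V])$ through contractibility of invertible Fredholm families on the locally universal complement $V^\perp$, then realising the invariant as $\tfrac12\dim\circ\,\imath^{-1}$) is correct; it even supplies well-definedness details that the paper leaves implicit in its one-line ``in view of the construction of $\imath$''. Where you diverge is the first assertion. The paper gets it for free: since $\imath$ is surjective, $[A]=\imath([\mathcal{E}])$ for some finite-rank twisted bundle $\mathcal{E}$, and since classes in the Fredholm model \emph{are} fibrewise homotopy classes, this literally says $A$ is homotopic to $0_{\mathcal{E}}\oplus\gamma^\perp$, whose kernel family is precisely the twisted bundle $\mathcal{E}$. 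No spectral analysis is needed. You instead re-derive this by a direct equivariant index-bundle construction, and that step, as sketched, has a genuine gap.

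Two concrete problems there. First, the family $A$ is only continuous in the Atiyah--Segal compact-open topology, so the spectral projections of $A_{k_x}$ onto $(-\delta,\delta)$ are not obviously continuous in $k_x$; to get continuity one must route through the \emph{norm}-continuous compact family $A_{k_x}^2-1$ (the projection in question is the Riesz projection of $A_{k_x}^2-1$ associated to its eigenvalues in $[-1,\delta^2-1)$), a point your sketch does not address. Second, and more seriously, ``$\delta$ chosen locally and patched using compactness'' does not work as stated: admissible cutoff levels $\delta_\alpha$ on overlapping charts produce \emph{different} spectral subbundles, and these cannot simply be glued --- a convex combination of two non-spectral levels can itself be an eigenvalue, and the union of the two local subbundles has jumping rank. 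Repairing this requires a real argument (a globally continuous spectral-gap function, or a stabilisation/perturbation argument in the style of Atiyah's index bundle), carried out compatibly with $\epsilon$ and $L_{-1}$; your correct observation that symmetric spectral intervals are automatically $\epsilon$- and $L_{-1}$-invariant handles the equivariance but not the gluing. Fortunately the whole detour is avoidable: since you already assume $\imath$ is an isomorphism, replace your first step by the appeal to surjectivity above, after which your argument is complete and coincides with the paper's.
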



\subsection{Karoubi formulation of $K^{0 + c + \widehat{\nu}}_{\Z_2}(\mathcal{B}_x)$}

Karoubi's formulations are reviewed here, which play a key role in the construction of the ``index'' of our interests.

\medskip

In the present context, a \textit{(Karoubi) triple} $(\mathcal{E}, \eta_0, \eta_1)$ on $\mathcal{B}_x$ consists of a finite rank $\widehat{\nu}$-twisted vector bundle $(\mathcal{E}, L)$ on $\mathcal{B}_x$ such that $(\mathcal{E}, \eta_0, L)$ and $(\mathcal{E}, \eta_1, L)$ are $(\widehat{\nu}, c)$-twisted vector bundles. We write $\mathscr{M}^{(\widehat{\nu}, c)}_{\Z_2}(\mathcal{B}_x)$ for the monoid of isomorphism classes of triples $(\mathcal{E}, \eta_0, \eta_1)$, and $\mathscr{Z}^{(\widehat{\nu}, c)}_{\Z_2}(\mathcal{B}_x)$ for its submonoid consisting of isomorphism classes of triples $(\mathcal{E}, \eta_0, \eta_1)$ such that there is a homotopy $\eta_t$ of $\Z_2$-gradings between $\eta_0$ and $\eta_1$ and $(\mathcal{E}, \eta_t, L)$ is a $(\widehat{\nu}, c)$-twisted vector bundle for each $t \in [0, 1]$. As before, the quotient monoid gives rise to an abelian group, which we denote by
$$
\mathscr{K}^{0 + c + \widehat{\nu}}_{\Z_2}(\mathcal{B}_x)_{\mathrm{fin}}
= \mathscr{M}^{(\widehat{\nu}, c)}_{\Z_2}(\mathcal{B}_x)/
\mathscr{Z}^{(\widehat{\nu}, c)}_{\Z_2}(\mathcal{B}_x).
$$

Let $\mathcal{E}_{\mathrm{univ}} = (\mathcal{E}_{\mathrm{univ}}, \epsilon_{\mathrm{univ}}, L_{\mathrm{univ}})$ be the locally universal $(\widehat{\nu}, c)$-twisted vector bundle on $\mathcal{B}_x$. We write $\mathscr{G}^{0 + c + \widehat{\nu}}_{\Z_2}(\mathcal{E}_{\mathrm{univ}})$ for the space of $\Z_2$-gradings (self-adjoint involutions) $\eta$ on $\mathcal{E}_{\mathrm{univ}}$ such that $(\mathcal{E}_{\mathrm{univ}}, \eta, L_{\mathrm{univ}})$ are $(\widehat{\nu}, c)$-twisted vector bundles and $\eta - \epsilon_{\mathrm{univ}}$ are compact operators on fibres. We topologise $\mathscr{G}^{0 + c + \widehat{\nu}}_{\Z_2}(\mathcal{E}_{\mathrm{univ}})$ by using the operator norm topology. Considering the homotopy $\sim$ within such $\Z_2$-gradings as above, we can define an abelian group
$$
\mathscr{K}^{0 + c + \widehat{\nu}}_{\Z_2}(\mathcal{B}_x)
=
\mathscr{G}^{0 + c + \widehat{\nu}}_{\Z_2}(\mathcal{E}_{\mathrm{univ}})/\sim.
$$

The two Karoubi's formulations (finite-dimensional and infinite-dimensional) are related as follows: Given a triple $(\mathcal{E}, \eta_0, \eta_1)$, we take $\eta_1$ as a $\Z_2$-grading to embedded the $(\widehat{\nu}, c)$-twisted vector bundle $(E, \eta_1)$ into the universal bundle $(\mathcal{E}_{\mathrm{univ}}, \epsilon_{\mathrm{univ}})$. The embedding induces the orthogonal decomposition $\mathcal{E}_{\mathrm{univ}} = \mathcal{E} \oplus \mathcal{E}^\perp$, and also the decomposition of the $\Z_2$-grading $\epsilon_{\mathrm{univ}} = \eta_1 \oplus \epsilon^\perp$. Now, $\eta_0 \oplus \epsilon^\perp$ is a $\Z_2$-grading on $\mathcal{E}_{\mathrm{univ}}$, and the assignment $(E, \eta_0, \eta_1) \mapsto \eta_0 \oplus \epsilon^\perp$ induces a well-defined homomorphism
$$
\jmath : \ 
\mathscr{K}^{0 + c + \widehat{\nu}}_{\Z_2}(\mathcal{B}_x)_{\mathrm{fin}}
\longrightarrow
\mathscr{K}^{0 + c + \widehat{\nu}}_{\Z_2}(\mathcal{B}_x).
$$
It can be shown \cite{Gomi2} that $\jmath$ above is bijective.

The infinite-dimensional Karoubi formulation and the Fredholm formulation are related by a homomorphism
$$
\vartheta : \ 
K^{0 + c + \widehat{\nu}}_{\Z_2}(\mathcal{B}_x)
\longrightarrow
\mathscr{K}^{0 + c + \widehat{\nu}}_{\Z_2}(\mathcal{B}_x).
$$
This is induced from
$$
\vartheta(A) = - e^{\pi A\epsilon_{\mathrm{univ}}}\epsilon_{\mathrm{univ}}
$$
and is also bijective, see Section 4 of \cite{Gomi2}.

In terms of these formulations, the $\ZZ/2$-invariants are described as follows:

\begin{lem} \label{lem:isomorphism_in_Karoubi_formulations}
The following holds true.
\begin{itemize}
\item[(a)]
The isomorphism $\mathscr{K}^{0 + c + \widehat{\nu}}_{\Z_2}(\mathcal{B}_x) \cong \Z/2$ is induced from the assignment
$$
\eta = \{ \eta_{k_x} \}_{k_x \in \mathcal{B}_x} 
\mapsto 
\dim\big(
\mathrm{Ker}\,\frac{1 - \eta_{k_x}}{2} \cap 
\mathrm{Ker}\,\frac{1 + \epsilon_{\mathrm{univ}}}{2}
\big) \mod 2,
$$
where $k_x \in \mathcal{B}_x$ is any point.

\item[(b)]
The isomorphism $\mathscr{K}^{0 + c + \widehat{\nu}}_{\Z_2}(\mathcal{B}_x)_{\mathrm{fin}} \cong \Z/2$ is induced from the assignment
$$
(\mathcal{E}, \eta_0, \eta_1) \mapsto
\dim\big(
\mathrm{Ker}\,\frac{1 - (\eta_0)_{k_x}}{2} \cap 
\mathrm{Ker}\,\frac{1 + (\eta_1)_{k_x}}{2}
\big)
\mod 2,
$$
where $k_x \in \mathcal{B}_x$ is any point.
\end{itemize}
\end{lem}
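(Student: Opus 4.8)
The plan is to deduce both statements from the bijections between the four formulations of $K^{0 + c + \widehat{\nu}}_{\Z_2}(\mathcal{B}_x)$ that were set up above, rather than recomputing the group. Write $\epsilon = \epsilon_{\mathrm{univ}}$. For part (a) I would start from the Fredholm picture, where Corollary \ref{cor:classification_twisted_bundle_with_Clifford} already identifies the isomorphism $K^{0 + c + \widehat{\nu}}_{\Z_2}(\mathcal{B}_x) \cong \Z/2$ with $[A] \mapsto \tfrac12 \dim\mathrm{Ker}\,A_{k_x} \bmod 2$, and transport it through the bijection $\vartheta$, setting $\eta := \vartheta(A) = -e^{\pi A \epsilon}\epsilon$. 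The task then becomes the purely pointwise identity: fix $k_x$ and show
\[
\dim\big(\mathrm{Ker}\,\tfrac{1-\eta_{k_x}}{2} \cap \mathrm{Ker}\,\tfrac{1+\epsilon_{k_x}}{2}\big) = \tfrac12 \dim\mathrm{Ker}\,A_{k_x},
\]
i.e.\ that the $(+1)$-eigenspace of $\eta$ meets the $(-1)$-eigenspace of $\epsilon$ in a subspace of dimension $\tfrac12 \dim\mathrm{Ker}\,A$.

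To prove this identity I would use the joint spectral decomposition of the self-adjoint $A$ and the grading $\epsilon$. Since $A$ anticommutes with $\epsilon$, the operator $\epsilon$ interchanges the $\lambda$- and $(-\lambda)$-eigenspaces of $A$, and since $A^2 - 1$ is compact the spectrum of $A$ is discrete away from $\pm 1$; hence the fibre splits orthogonally, invariantly under both $\eta$ and $\epsilon$, into $\mathrm{Ker}\,A$, the two-dimensional blocks $\mathrm{span}(v, \epsilon v)$ with $Av = \lambda v$ and $0 < |\lambda| < 1$, and the essential part at $\lambda = \pm 1$. On $\mathrm{Ker}\,A$ one has $A\epsilon = 0$, so $e^{\pi A\epsilon} = 1$ and $\eta = -\epsilon$; thus the $(+1)$-eigenspace of $\eta$ equals the $(-1)$-eigenspace of $\epsilon$, and the odd twisted generator $L_{-1}$ preserves $\mathrm{Ker}\,A$ (because $A L_{-1} = -L_{-1} A$), is invertible, and anticommutes with $\epsilon$, so it balances the $\pm 1$-eigenspaces of $\epsilon$ on $\mathrm{Ker}\,A$, giving the contribution $\tfrac12 \dim\mathrm{Ker}\,A$. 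On each two-dimensional block a direct computation represents $A\epsilon$ by $\big(\begin{smallmatrix} 0 & \lambda \\ -\lambda & 0 \end{smallmatrix}\big)$ and $\epsilon$ by $\big(\begin{smallmatrix} 0 & 1 \\ 1 & 0 \end{smallmatrix}\big)$, so that $\eta = -e^{\pi A\epsilon}\epsilon$ contains the $(-1)$-eigenvector of $\epsilon$ in its $(+1)$-eigenspace only when $\cos\pi\lambda = 1$; for $0 < |\lambda| \le 1$ this never occurs, and at $\lambda = \pm 1$ one finds $\eta = \epsilon$ (consistent with $\eta - \epsilon$ compact), so these blocks and the essential part contribute nothing. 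Summing over blocks yields the displayed identity, and independence of the chosen $k_x$ modulo $2$ is inherited from Corollary \ref{cor:classification_twisted_bundle_with_Clifford}.

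Part (b) I would reduce to part (a) through the bijection $\jmath$, which sends a triple $(\mathcal{E}, \eta_0, \eta_1)$ to $\eta_0 \oplus \epsilon^\perp$, where embedding $(\mathcal{E}, \eta_1)$ into the universal bundle gives $\epsilon_{\mathrm{univ}} = \eta_1 \oplus \epsilon^\perp$. Applying the formula of part (a) to $\eta = \eta_0 \oplus \epsilon^\perp$ against $\epsilon_{\mathrm{univ}} = \eta_1 \oplus \epsilon^\perp$, the $\mathcal{E}^\perp$-summand contributes $\dim\big((\text{$+1$ of }\epsilon^\perp) \cap (\text{$-1$ of }\epsilon^\perp)\big) = 0$, while the $\mathcal{E}$-summand contributes exactly $\dim\big(\mathrm{Ker}\,\tfrac{1-(\eta_0)_{k_x}}{2} \cap \mathrm{Ker}\,\tfrac{1+(\eta_1)_{k_x}}{2}\big)$. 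Hence the (b)-assignment is the (a)-assignment precomposed with $\jmath$; since $\jmath$ is an isomorphism and the (a)-assignment realises $\mathscr{K}^{0 + c + \widehat{\nu}}_{\Z_2}(\mathcal{B}_x) \cong \Z/2$, the (b)-assignment realises $\mathscr{K}^{0 + c + \widehat{\nu}}_{\Z_2}(\mathcal{B}_x)_{\mathrm{fin}} \cong \Z/2$.

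The main obstacle I expect is the pointwise identity in part (a): organising the spectral bookkeeping so as to control the essential spectrum at $\lambda = \pm 1$, where infinite multiplicities may occur but the relation $\eta = \epsilon$ forces no contribution, and verifying that $L_{-1}$ genuinely balances $\epsilon$ on $\mathrm{Ker}\,A$, so that $\tfrac12 \dim\mathrm{Ker}\,A$ is the correct integer and not merely its parity.
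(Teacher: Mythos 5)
Your proposal is correct, and its overall architecture is the same as the paper's: both transport the Fredholm-picture isomorphism of Corollary \ref{cor:classification_twisted_bundle_with_Clifford} through the bijection $\vartheta$, reduce part (a) to a pointwise spectral statement about $\eta = \vartheta(A) = -e^{\pi A\epsilon_{\mathrm{univ}}}\epsilon_{\mathrm{univ}}$, and deduce part (b) from part (a) by unwinding the construction of $\jmath$, exactly as you do. Where you differ is in how the pointwise statement is proved. The paper establishes the kernel identities
$\mathrm{Ker}\,A_{k_x}\cap\mathrm{Ker}\,\frac{1+\epsilon_{\mathrm{univ}}}{2} = \mathrm{Ker}\,\frac{1-\vartheta(A_{k_x})}{2}\cap\mathrm{Ker}\,\frac{1+\epsilon_{\mathrm{univ}}}{2}$ (and its sign-reversed companion) from the single fact that $e^{\pi\im A}\psi=\psi$ forces $A\psi=0$, which is Proposition \ref{prop:resolutionidentity}, proved by an integral estimate against the resolution of the identity; that argument uses only $\mathrm{Spec}(A)\subset[-1,1]$ and never needs compactness of $A^2-1$. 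You instead exploit that compactness to get a genuine eigenspace decomposition into $\mathrm{Ker}\,A$, finite two-dimensional blocks $\mathrm{span}(v,\epsilon_{\mathrm{univ}} v)$, and the $\lambda=\pm1$ part, and you check the claim block by block with explicit $2\times 2$ matrices; this is more computational but equally valid, since each block is invariant under both $\eta$ and $\epsilon_{\mathrm{univ}}$, so the intersection of eigenspaces decomposes blockwise even across the infinite orthogonal sum. A secondary difference, in your favour: you make explicit why $\dim\big(\mathrm{Ker}\,A_{k_x}\cap\mathrm{Ker}\,\frac{1+\epsilon_{\mathrm{univ}}}{2}\big)=\frac{1}{2}\dim\mathrm{Ker}\,A_{k_x}$, namely that the odd invertible operator $L_{-1}$ preserves $\mathrm{Ker}\,A_{k_x}$ and exchanges the $\pm1$-eigenspaces of $\epsilon_{\mathrm{univ}}$ there; the paper leaves this balancing implicit in its citation of Corollary \ref{cor:classification_twisted_bundle_with_Clifford}. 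The only caveat, which applies equally to the paper's own proof, is that verifying the formula on representatives of the form $\vartheta(A)$ determines the induced map only because $\vartheta$ is a bijection on classes; neither argument checks the assignment directly on an arbitrary grading $\eta$, so your proposal is at the same level of rigor as the published proof.
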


\begin{proof}
For a Fredholm family $A = \{ A_{k_x} \}_{k_x \in \mathcal{B}_x}$ on $(\mathcal{E}_{\mathrm{univ}}, \epsilon_{\mathrm{univ}})$ representing an element of $K^{0 + c + \widehat{\nu}}_{\Z_2}(\mathcal{B}_x)$, we have
\begin{align*}
\mathrm{Ker}\,A_{k_x} \cap \mathrm{Ker}\,\frac{1 + \epsilon_{\mathrm{univ}}}{2}
&=
\mathrm{Ker}\,\frac{1 - \vartheta(A_{k_x})}{2} \cap 
\mathrm{Ker}\,\frac{1 + \epsilon_{\mathrm{univ}}}{2}, \\
\mathrm{Ker}\,A_{k_x} \cap \mathrm{Ker}\frac{1 - \epsilon_{\mathrm{univ}}}{2}
&=
\mathrm{Ker}\,\frac{1 + \vartheta(A_{k_x})}{2} \cap 
\mathrm{Ker}\,\frac{1 - \epsilon_{\mathrm{univ}}}{2}.
\end{align*}
In showing these identifications, we use the fact that $e^{iA_{k_x}}\psi = \psi$ if and only if $A_{k_x}\psi = 0$. The ``if'' part is clear, while the converse follows from Proposition \ref{prop:resolutionidentity} below. Therefore Corollary \ref{cor:classification_twisted_bundle_with_Clifford} proves (a). In view of the construction of the isomorphism $\jmath$, (b) follows from (a).
\end{proof}

\begin{prop}\label{prop:resolutionidentity}
Let $A$ be a bounded self-adjoint operator on a separable infinite-dimensional Hilbert space $\mathscr{H}$, such that
$\mathrm{Spec}(A) \subset [-1, 1]$ and $A^2 - 1$ is compact. If $v\in\mathscr{H}$ satisfies $e^{\pi \im A}v = v$, then $v \in \mathrm{Ker}\,A$.
\end{prop}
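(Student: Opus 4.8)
The plan is to invoke the spectral theorem for the bounded self-adjoint operator $A$ and reduce the claim to the elementary observation that the function $\lambda \mapsto e^{\pi\im\lambda}$ equals $1$ on the interval $[-1,1]$ only at $\lambda = 0$. The compactness of $A^2-1$ will turn out to be inessential here; the load-bearing hypothesis is the spectral localisation $\mathrm{Spec}(A)\subset[-1,1]$.

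First I would let $E$ denote the projection-valued spectral measure of $A$, supported on $\mathrm{Spec}(A)\subset[-1,1]$, and introduce the finite positive Borel measure $\mu_v(B) = \langle E(B)v, v\rangle$ on $[-1,1]$. By the continuous functional calculus, the hypothesis $e^{\pi\im A}v = v$ is equivalent to $(e^{\pi\im A}-1)v = 0$, and since $A$ is self-adjoint this gives
$$0 = \|(e^{\pi\im A}-1)v\|^2 = \int_{[-1,1]} |e^{\pi\im\lambda}-1|^2\, d\mu_v(\lambda).$$

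The key step is the analysis of the integrand. For $\lambda\in[-1,1]$ one has $e^{\pi\im\lambda}=1$ precisely when $\pi\lambda\in 2\pi\Z$, i.e.\ when $\lambda$ is an even integer; the only such point in $[-1,1]$ is $\lambda=0$. In particular the endpoints contribute $|e^{\pm\pi\im}-1|^2 = 4 > 0$. Hence $|e^{\pi\im\lambda}-1|^2$ is a nonnegative continuous function on $[-1,1]$ vanishing only at $\lambda=0$, so the vanishing of the integral forces $\mu_v$ to be concentrated at $\{0\}$, that is, $\mu_v([-1,1]\setminus\{0\}) = 0$. Consequently $\|v\|^2 = \mu_v(\{0\}) = \|E(\{0\})v\|^2$, and since $E(\{0\})$ is an orthogonal projection this equality of norms gives $v = E(\{0\})v$. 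As $E(\{0\})$ is exactly the orthogonal projection onto $\mathrm{Ker}\,A$ for a self-adjoint operator, we conclude $v\in\mathrm{Ker}\,A$, as desired.

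I expect there to be no serious obstacle beyond correctly isolating where $e^{\pi\im\lambda}=1$, which is precisely the point at which the spectral bound $\mathrm{Spec}(A)\subset[-1,1]$ is decisive: without it, the other solutions $\lambda\in 2\Z\setminus\{0\}$ would contribute and the conclusion would fail. The compactness of $A^2-1$ (which confines the essential spectrum to $\{\pm 1\}$) is not needed for the argument, though it is consistent with it and reflects the Fredholm-family context in which the proposition is applied.
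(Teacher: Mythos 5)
Your proof is correct and follows essentially the same route as the paper's: both apply the spectral theorem to write $\|(e^{\pi \im A}-1)v\|$ as an integral against the spectral measure of $v$, observe that $e^{\pi\im\lambda}=1$ on $[-1,1]$ only at $\lambda=0$ (this is exactly where $\mathrm{Spec}(A)\subset[-1,1]$ is used), and conclude that $v=E(\{0\})v\in\mathrm{Ker}\,A$. Your side remark that the compactness of $A^2-1$ is not needed is also accurate --- the paper's argument does not use it either.
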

\begin{proof}
Let $\{ E(\lambda) \}_{\lambda \in \R}$ be the resolution of the identity associated to the bounded self-adjoint operator $A$. From the hypothesis, we have
$$
0 = \lVert (1 - e^{\pi \im A})v \rVert
= \int_{[-1, 1]} 
\lvert 1 - e^{\pi \im \lambda} \rvert d \lVert E(\lambda) v \rVert.
$$
On the region of $\lambda$ such that $\lvert 1 - e^{\pi \im \lambda} \rvert = 0$, the contribution of the integrand is trivial (zero). Because $\mathrm{Spec}(A) \subset [-1, 1]$, we have $\lvert 1 - e^{\pi \im \lambda} \rvert = 0$ if and only if $\lambda = 0$. Thus, we can write 
$$
0 = \int_{[-1, 1] \backslash \{ 0 \}} 
\lvert 1 - e^{\pi \im \lambda} \rvert d \lVert E(\lambda) v \rVert.
$$
Because the integrand is non-negative, it follows that $\lVert E( [-1, 1] \backslash \{ 0 \}) v \rVert = 0$. In view of the definition of the resolution of the identity, we can identify $E( [-1, 1] \backslash \{ 0 \})$ with the orthogonal projection onto the orthogonal complement of $\mathrm{Ker}\,A$. In other words, we have $E(0)v = v$, in which $E(0) = 1 -  E( [-1, 1] \backslash \{ 0 \})$ is the orthogonal projection onto $\mathrm{Ker}\,A$. Therefore we can conclude that $v = E(0)v \in \mathrm{Ker}\,A$.
\end{proof}

To summarise, we have the following homomorphisms
$$
\begin{CD}
K^{0 + c + \widehat{\nu}}_{\Z_2}(\mathcal{B}_x)_{\mathrm{fin}}
@>{\imath}>{\cong}>
K^{0 + c + \widehat{\nu}}_{\Z_2}(\mathcal{B}_x) \\
@. @V{\cong}V{\vartheta}V \\
\mathscr{K}^{0 + c + \widehat{\nu}}_{\Z_2}(\mathcal{B}_x)_{\mathrm{fin}}
@>{\jmath}>{\cong}>
\mathscr{K}^{0 + c + \widehat{\nu}}_{\Z_2}(\mathcal{B}_x).
\end{CD}
$$
Since $\imath$ is isomorphic in the present case, the composition
$$
\jmath^{-1} \circ \vartheta \circ \imath : \
K^{0 + c + \widehat{\nu}}_{\Z_2}(\mathcal{B}_x)_{\mathrm{fin}}
\to
\mathscr{K}^{0 + c + \widehat{\nu}}_{\Z_2}(\mathcal{B}_x)_{\mathrm{fin}},
$$
is also an isomorphism. We can readily see that, given a finite rank $(\widehat{\nu}, c)$-twisted vector bundle $\mathcal{E}$ with its $\Z_2$-grading $\epsilon$, the triple representing its image under $\jmath^{-1} \circ \vartheta \circ \imath$ is $(\mathcal{E}, -\epsilon, \epsilon)$. In particular, applying this construction to the $(\widehat{\nu}, c)$-twisted vector bundle $(\mathcal{E}, \epsilon, L)$ of rank $2$ in Lemma \ref{lem:classification_twistd_bundle}, we conclude that the non-trivial element in $\mathscr{K}^{0 + c + \widehat{\nu}}_{\Z_2}(\mathcal{B}_x)_{\mathrm{fin}} \cong \ZZ/2$ is represented by the triple $(\mathcal{E}, -\epsilon, \epsilon)$ on the rank $2$ bundle $\mathcal{E}$.

\subsection{Local formula for $K^{0 + c + \widehat{\nu}}_{\Z_2}(\mathcal{B}_x)\cong\ZZ/2$}
The implication of $K^{0 + c + \widehat{\nu}}_{\Z_2}(\mathcal{B}_x)\cong K^{0 + c + \widehat{\nu}}_{\Z_2}(\mathcal{B}_x)_{\mathrm{fin}} \cong \mathscr{K}^{0 + c + \widehat{\nu}}_{\Z_2}(\mathcal{B}_x)_{\mathrm{fin}}\cong\ZZ/2$ is that there are two distinct topological phases in the $1$-dimensional gapped quantum systems described by $2$ by $2$ Hamiltonians $H(k_x)$ subject to 
\begin{equation}
H(k_x) V(k_x) = - V(k_x) H(k_x)\label{basicNSCrelation}
\end{equation}
with respect to the symmetry $V(k_x)$ such that $V(k_x)V(k_x) = e^{\im k_x}$. In particular, $H_1(k_x) = \epsilon$ and $H_2(k_x) = -\epsilon$ represent the distinct phases \cite{SSG1}, although this should be understood in the \emph{relative sense}, cf.\ the dimerised SSH model phases and \cite{Thiang2}. More precisely, $H_1$ and $H_2$ both represent the non-trivial class in $K^{0 + c + \widehat{\nu}}_{\Z_2}(\mathcal{B}_x)_{\mathrm{fin}}$, but they are not \emph{homotopic} --- the obstruction to the latter is encoded by the nontrivial Karoubi triple 
$(\mathcal{E}, -\epsilon, \epsilon)$. However, $\epsilon\oplus\epsilon$ and $-\epsilon\oplus-\epsilon$ \emph{are} homotopic because of the relation $2[\cE,\epsilon,-\epsilon]=0$ on $K$-theory classes. Thus the obstruction between $H_1$ and $H_2$ is 2-torsion; alternatively, $H_1\oplus H_1(k_x)=\epsilon\oplus\epsilon$ trivialises in $K^{0 + c + \widehat{\nu}}_{\Z_2}(\mathcal{B}_x)_{\mathrm{fin}}$ by Proposition \ref{prop:classification_twisted_bundle_with_Clifford}.

For a general $H(k_x)$, let us now provide a local integral formula for computing its $\Z/2$-invariant, based on the idea in \cite{SSG1}. Without loss of generality, we can assume that $V(k_x)$ is of the form
$$
V(k_x)
=
\left(
\begin{array}{cc}
0 & e^{\im k_x} \\
1 & 0
\end{array}
\right).
$$
By Eq.\ \eqref{basicNSCrelation}, the self-adjoint matrix $H(k_x)$ is trace free. Therefore we can express $H(k_x)$ as
\begin{equation}
H(k_x)
=
\left(
\begin{array}{cc}
a(k_x) & \overline{b(k_x)} \\
b(k_x) & - a(k_x)
\end{array}
\right)\label{generalH}
\end{equation}
by using $a : \mathcal{B}_x \to \R$ and $b : \mathcal{B}_x \to \C$. Eq.\ \eqref{basicNSCrelation} also leads to $\overline{b(k_x)} = - e^{\im k_x} b(k_x)$. Because $\det H(k_x) \neq 0$, we have $a(k_x)^2 + \lvert b(k_x) \rvert^2 \neq 0$. As a result, we can define a map
\begin{align*}
\zeta &: \R/2\pi\Z \to \mathrm{U}(1), &
\zeta(\ell) &=
\frac{a(2\ell) + e^{\im \ell}b(2\ell)}
{\lvert a(2\ell) + e^{\im \ell}b(2\ell) \rvert}.
\end{align*}
With $R/2\pi\Z$ given the involution $\ell\mapsto\ell+\pi$ and $\mathrm{U}(1)$ the complex conjugation involution, $\zeta$ is $\Z_2$-equivariant in the sense that $\zeta(\ell + \pi) = \overline{\zeta(\ell)}$. Up to homotopy, such maps are classified by the equivariant cohomology $H^1_{\Z_2}(\R/2\pi\Z; \Z(1))$, where $\Z(1)$ denotes local coefficients $\ZZ$ with involution given by negation. It is known \cite{D-G} that $H^1_{\Z_2}(\R/2\pi\Z; \Z(1)) \cong \Z/2$. To see this fact, we can appeal to a direct analysis: A continuous equivariant map $\zeta : \R/2\pi \Z \to \mathrm{U}(1)$ has the trivial winding number around the circle. Hence $\zeta$ admits an expression $\zeta(\ell) = \exp2\pi \im f(\ell)$ in terms of a continuous map $f : \R \to \R$ with the periodicity $f(\ell + 2\pi) = f(\ell)$. Note that $f$ is not unique, but $\zeta(\ell) = \exp2\pi \im f(\ell) = \exp2\pi \im f'(\ell)$ if and only if $f'(\ell) - f(\ell) = n$ for an integer $n \in \Z$. The condition $\zeta(\ell + \pi) = \overline{\zeta(\ell)}$ is equivalent to $f(\ell) + f(\ell + \pi) = m$ for an integer $m \in \Z$. Considering the Fourier expansion of $f$, we can see that $\zeta$ is equivariantly homotopic to the constant map at $e^{\pi \im m} \in \Z_2$. As a result, an invariant in $\Z/2$ detecting the equivariant homotopy class of $\zeta$ (and hence $H$) is given by
$$
\ZZ/2\ni\mu(\zeta) = \mu(H) :=
m = \frac{1}{\pi} \int_0^{2\pi}f(\ell)d\ell 
= \frac{1}{\pi} \int_0^{2\pi} \frac{1}{2\pi \im}\log\zeta(\ell) d\ell
\mod 2\Z.
$$
It should be noticed that the nature of this $\Z/2$-invariant is \textit{relative} rather than \textit{absolute}: In constructing the invariant $\mu \in \Z/2$, a number of choices, such as a presentation of $H$ in terms of $a$ and $b$, are made (we could, e.g.\ replace $a$ by $-a$ in Eq.\ \eqref{generalH}). These choices can change the value of $\mu \in \Z/2$, so that there is no absolute meaning of $\mu(H)$ as an invariant of $H$. However, once the choices are fixed, $\mu(H) - \mu(H') \in \Z/2$ is capable of detecting the difference of the equivariant homotopy classes of $H$ and $H'$.


\section{Construction of $\ZZ/2$ bulk-edge index map}\label{sec:genToeplitz}


\subsection{Topological bulk-edge Gysin map}\label{sec:Gysin}

Over $\mathcal{B}_x$ is the trivial ``Real'' line bundle $\underline{\tilde{\C}}= \mathcal{B}_x \times \C$ in the sense of Atiyah, whose involution is $(k_x, z) \mapsto (k_x, \bar{z})$. Then we have the following form of the Thom isomorphism theorem
$$
K^{n + c + \widehat{\nu}}_{\Z_2}(\mathcal{B}_x)
\cong K^{n + \widehat{\nu}}_{\Z_2}
(D(\underline{\tilde{\C}}), S(\underline{\tilde{\C}})),
$$
where $D(\underline{\tilde{\C}})$ is the unit disk bundle of $\underline{\tilde{\C}}$, and $S(\underline{\tilde{\C}})$ the unit circle bundle. As a space with $\Z_2$-action, the disk bundle $D(\underline{\tilde{\C}})$ is equivariantly homotopy equivalent to $\mathcal{B}_x$, and $S(\underline{\tilde{\C}})$ is nothing but the torus $\mathcal{B}_x \times \mathcal{B}_y$. Thus, the long exact sequence for the pair $(D(\underline{\tilde{\C}}), S(\underline{\tilde{\C}}))$ leads to the Gysin exact sequence
$$
\begin{CD}
\overbrace{K^{1 + \widehat{\nu}}_{\Z_2}
(\mathcal{B}_x \times \mathcal{B}_y)}^{\Z\oplus\Z/2} @<{\pi^*}<<
\overbrace{K^{1 + \widehat{\nu}}_{\Z_2}(\mathcal{B}_x)}^{\Z} @<<<
\overbrace{K^{1 + c + \widehat{\nu}}_{\Z_2}(\mathcal{B}_x)}^0 \\
@V{\pi_*}VV @. @AA{\pi_*}A \\
\underbrace{K^{0 + c + \widehat{\nu}}_{\Z_2}(\mathcal{B}_x)}_{\Z/2} @>>>
\underbrace{K^{0 + \widehat{\nu}}_{\Z_2}(\mathcal{B}_x)}_{\Z} @>>{\pi^*}>
\underbrace{K^{0 + \widehat{\nu}}_{\Z_2}
(\mathcal{B}_x \times \mathcal{B}_y)}_{\Z},
\end{CD}
$$
which is split due to the existence of a $\Z_2$-equivariant section of the circle bundle $\pi : \mathcal{B}_x \times \mathcal{B}_y \to \mathcal{B}_x$ (here $\cB_y$ is equivariantly collapsed under $\pi$ to one of its fixed points). Thus $K^{1 + \widehat{\nu}}_{\Z_2}
(\mathcal{B}_x \times \mathcal{B}_y)\cong\ZZ\oplus\ZZ/2$ with a free generator the pullback of a generator for $K^{1 + \widehat{\nu}}_{\Z_2}
(\mathcal{B}_x)\cong \ZZ$.

We can exhibit a generator for $K^{1 + \widehat{\nu}}_{\Z_2}
(\mathcal{B}_x)$ as follows (cf.\ VIII.D.3-4 of \cite{SSG2}). 
The proof of Corollary \ref{cor:regularbundle} also shows that $K^{0 + \widehat{\nu}}_{\Z_2}
(\mathcal{B}_x)\cong \ZZ$ is generated by the trivial bundle $\check{\cE}_{\rm reg}=\cB_x\times \CC^2$ with twisted $\ZZ_2$-action given by
$$
L_{-1} : (k_x, \vec{\psi})
\mapsto
(k_x, V(k_x) \vec{\psi})=(k_x, \begin{pmatrix} 0 & e^{\im k_x} \\ 1 & 0 \end{pmatrix} \vec{\psi})
$$
(any twisted bundle has trivial underlying complex bundle by the low dimension of $\cB_y$). Elements of $K^{1 + \widehat{\nu}}_{\Z_2}
(\mathcal{B}_x)\cong \ZZ$ are represented by automorphisms $\check{U}$ of direct sums of copies of $\check{\cE}_{\rm reg}$. The degree
\begin{align*}
K^{1 + \widehat{\nu}}_{\Z_2}(\mathcal{B}_x) &\to \Z, &
\check{U} &\mapsto \deg \det \check{U}
\end{align*}
is an isomorphism, since the automorphism $\check{U}_r$ of $\check{\cE}_{\rm reg}$ given by $\check{U}_r(k_x)=V(k_x)=\begin{pmatrix} 0 & e^{\im k_x} \\ 1 & 0 \end{pmatrix}$ already has degree $-1$. Thus $[\check{U}_r]$ generates $K^{1 + \widehat{\nu}}_{\Z_2}
(\mathcal{B}_x)$.

The pullback $\pi^*\check{\cE}_{\rm reg}$ is just $\cE_{\rm reg}\rightarrow\cB$ with the twisted action Eq.\ \eqref{regularbundle}, on which the pullback automorphism $\pi^*\check{U}_r$ is just $U_r=\begin{pmatrix} 0 & u_x \\ 1 & 0 \end{pmatrix}$ of Eq.\ \eqref{compatibleunitaries}. Thus
\begin{corollary}\label{cor:freegenerator}
$K^{1 + \widehat{\nu}}_{\Z_2}(\mathcal{B}_x \times \mathcal{B}_y)\cong\ZZ[U_r]\oplus\ZZ/2$. 
\end{corollary}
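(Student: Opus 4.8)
The plan is to read off the statement directly from the split Gysin sequence displayed above together with the explicit generators just identified. The Gysin sequence for the circle bundle $\pi:\cB_x\times\cB_y\to\cB_x$ has already been shown to split, the splitting coming from a $\ZZ_2$-equivariant section of $\pi$ (obtained by equivariantly collapsing $\cB_y$ onto one of its fixed points). The immediate consequence is that the pullback $\pi^*:K^{1+\wh{\nu}}_{\ZZ_2}(\cB_x)\to K^{1+\wh{\nu}}_{\ZZ_2}(\cB_x\times\cB_y)$ is a split injection realising $K^{1+\wh{\nu}}_{\ZZ_2}(\cB_x)\cong\ZZ$ as a direct summand, with complementary summand the $\ZZ/2$ visible in the sequence. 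Thus $K^{1+\wh{\nu}}_{\ZZ_2}(\cB_x\times\cB_y)\cong\pi^*(\ZZ)\oplus\ZZ/2$, and it remains only to name a generator of the free summand.

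First I would recall from the paragraph preceding the statement that the degree-of-determinant map $\check{U}\mapsto\deg\det\check{U}$ furnishes an isomorphism $K^{1+\wh{\nu}}_{\ZZ_2}(\cB_x)\cong\ZZ$, and that the automorphism $\check{U}_r(k_x)=V(k_x)$ of $\check{\cE}_{\rm reg}$ has $\deg\det\check{U}_r=-1$, so that $[\check{U}_r]$ is a generator. Next I would compute the pullback explicitly: since $\pi^*\check{\cE}_{\rm reg}$ is $\cE_{\rm reg}\to\cB$ carrying the twisted $\ZZ_2$-action of Eq.~\eqref{regularbundle}, the pullback automorphism $\pi^*\check{U}_r$ is the $k_y$-independent matrix function $U_r=\begin{pmatrix}0 & u_x\\ 1 & 0\end{pmatrix}$ of Eq.~\eqref{compatibleunitaries}. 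Combining these with the splitting yields $\pi^*(\ZZ)=\ZZ[U_r]$, and hence the asserted decomposition $K^{1+\wh{\nu}}_{\ZZ_2}(\cB_x\times\cB_y)\cong\ZZ[U_r]\oplus\ZZ/2$.

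There is no substantial analytic difficulty here; the statement is a bookkeeping corollary of exactness and of the explicit generators. The single point deserving care, which I regard as the crux, is that $\pi^*[\check{U}_r]$ is represented \emph{on the nose} by $U_r$ rather than merely up to homotopy, so that $[U_r]$ is literally a free generator and not just a free generator modulo the torsion summand. This is immediate because both $V(k_x)$ and $\check{U}_r(k_x)$ depend only on $k_x$, so their pullbacks along the projection $\pi$ reproduce precisely the matrices already written in Eqs.~\eqref{regularbundle} and \eqref{compatibleunitaries}; the splitting of the Gysin sequence then guarantees that this class generates the free summand.
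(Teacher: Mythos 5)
Your proposal is correct and follows essentially the same route as the paper: the split Gysin sequence (split by the $\ZZ_2$-equivariant section of $\pi$) identifies the free summand as $\pi^*K^{1+\wh{\nu}}_{\ZZ_2}(\cB_x)$, the degree-of-determinant isomorphism shows $[\check{U}_r]$ generates $K^{1+\wh{\nu}}_{\ZZ_2}(\cB_x)\cong\ZZ$, and the explicit computation $\pi^*\check{U}_r=U_r$ names the generator. Your closing observation that $\pi^*[\check{U}_r]$ is represented on the nose by $U_r$ is exactly the (implicit) content of the paper's final sentence before the corollary.
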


The topological push-forward $\pi_* : K^{1 + \widehat{\nu}}_{\Z_2}(\mathcal{B}_x \times \mathcal{B}_y) \to K^{0 + c + \widehat{\nu}}_{\Z_2}(\mathcal{B}_x)$ kills $[U_r]$, but detects the torsion part. We would like to have an explicit description of $\pi_*$ as an \emph{analytic} index map. We will further show that $U_p=\begin{pmatrix} u_y & 0 \\ 0 & \overline{u_y}\end{pmatrix}$ indeed represents the non-trivial 2-torsion element of $K^{1 + \widehat{\nu}}_{\Z_2}(\mathcal{B}_x \times \mathcal{B}_y)$.

\bigskip

\subsection{Analytic bulk-edge map}
\subsubsection*{Input and output for analytic push-forward}
The input for the push-forward map is an element of $K^{1 + \widehat{\nu}}_{\Z_2}(\mathcal{B}_x \times \mathcal{B}_y)$, specified by $V,U$ as in Eq.\ \eqref{inputVU}-\eqref{inputVUrelations}.

As the output, we consider a representative of the infinite-dimensional Karoubi formulation $\mathscr{K}^{0 + c + \widehat{\nu}}_{\Z_2}(\mathcal{B}_x)$. That is, 
\begin{itemize}
\item
a separable infinite-dimensional Hilbert space $\mathscr{H}_y$, 

\item
a $\Z_2$-grading $\epsilon$ on $\mathscr{H}_y$ such that $\mathrm{Ker}\,\frac{1 \pm \epsilon}{2}$ are both infinite-dimensional,

\item
a map $\wt{V} : \mathcal{B}_x \to {\mathrm U}(\mathscr{H}_y)$ such that 
\begin{align*}
\wt{V}(k_x) \epsilon &= - \epsilon \wt{V}(k_x), &
\wt{V}(k_x)\wt{V}(k_x) &= e^{\im k_x} 1_{\mathscr{H}_y},
\end{align*}
We require that $\wt{V}$ is continuous with respect to the compact-open topology in the sense of Atiyah and Segal.

\item
a family of self-adjoint involutions $\eta = \{ \eta(k_x) \}_{k_x \in \mathcal{B}_x}$ such that $\eta(k_x) - \epsilon$ is compact for each $k_x \in \mathcal{B}_x$ and
$$
\wt{V}(k_x) \eta(k_x)
= - \eta(k_x) \wt{V}(k_x).
$$
We assume that the family is continuous in $k_x$ with respect to the operator norm topology.

\end{itemize}

We remark that $\mathscr{H}_y\times\cB_x$ gives rise to a locally universal $(\widehat{\nu}, c)$-twisted bundle over $\cB_x$ by the $\Z_2$-grading $\epsilon$ and the twisted $\Z_2$-action $\wt{V}$.


\subsubsection{Main construction}

Suppose that the input ``bulk'' data 
\begin{align*}
V &: \mathcal{B}_x \to {\rm U}(2n), &
U &: \mathcal{B}_x \times \mathcal{B}_y \to {\rm U}(2n)
\end{align*}
such that
\begin{align*}
V(k_x) V(k_x) &= e^{\im k_x} 1_{\C^{2n}}, &
V(k_x) U(k_x, k_y) &= U(k_x, -k_y) V(k_x)
\end{align*}
are given. We construct the output ``edge'' data as follows:
\begin{enumerate}
\item
We define the separable infinite-dimensional Hilbert space $\mathscr{H}_y=L^2(\mathcal{B}_y) \otimes \C^{2n}$

\item
We define a $\Z_2$-grading $\epsilon$ on $\mathscr{H}_y$ by
\begin{align*}
\mathrm{Ker}\,\frac{1 + \epsilon}{2}
&= \widehat{\bigoplus}_{l \le -1}
\C u_y^l \otimes \C^{2n}, \\
\mathrm{Ker}\,\frac{1 - \epsilon}{2}
&= \widehat{\bigoplus}_{l \ge 0}
\C u_y^l \otimes \C^{2n}.
\end{align*}

\item
We define a twisted $\Z_2$-action as follows: Let $M_{V(k_x)}$ and $I$ be the following unitary operators (the multiplication with $V(k_x)$ and the inversion):
\begin{align*}
M_{V(k_x)} &: \mathscr{H}_y \to  \mathscr{H}_y, &
(M_{V(k_x)}\psi)(k_y) &= V(k_x) \psi(k_y), \\
I &: \mathscr{H}_y \to  \mathscr{H}_y, &
(I\psi)(k_y) &= \psi(-k_y).
\end{align*}
We then define $\wt{V} : \mathcal{B}_x \to {\rm U}(\mathscr{H}_y)$ by
$$
\wt{V}(k_x) = M_{\overline{u_y}V(k_x)}I = IM_{u_y V(k_x)}.
$$
This definition is designed for $\wt{V}$ to effect glide reflections along a specified edge, see Remark \ref{rem:choiceofglide}, and we have
\begin{align*}
\wt{V}(k_x) \epsilon &= - \epsilon \wt{V}(k_x), &
\wt{V}(k_x)\wt{V}(k_x) &= e^{\im k_x} 1_{\mathscr{H}_y}.
\end{align*}

\item
We define a family of self-adjoint involutions $\eta(k_x)$ as follows.
A direct computation shows that
$$
\wt{V}(k_x) M_{U(k_x, \cdot)} \wt{V}(k_x)^\dagger
= M_{U(k_x, \cdot)}.
$$
The continuous family of self-adjoint operators
\begin{align*}
k_x\mapsto\eta(k_x) &: \mathscr{H}_y \to \mathscr{H}_y, &
\eta(k_x) 
&= M^\dagger_{U(k_x, \cdot)} \epsilon M_{U(k_x, \cdot)}
\end{align*}
is such that $\eta(k_x) - \epsilon$ is compact (Lemma \ref{lem:compactdifference}) and the anti-commutation relation
$$
\wt{V}(k_x) \eta(k_x)
= - \eta(k_x) \wt{V}(k_x)
$$
holds.
\end{enumerate}

The construction above gives a representative $\eta$ of $\mathscr{K}^{0 + c + \widehat{\nu}}_{\Z_2}(\mathcal{B}_x)$ in the infinite-dimensional Karoubi formulation. As shown in Lemma \ref{lem:isomorphism_in_Karoubi_formulations}, the $\Z/2$-invariant for $\eta$ is given by
$$
\dim\big(
\mathrm{Ker}\,\frac{1 - \eta(k_x)}{2} \cap
\mathrm{Ker}\,\frac{1 + \epsilon}{2}
\big)
\mod 2.
$$
We call this the \emph{analytic $\ZZ/2$ invariant} for $U$ (relative to $V$).

The dimension above has an alternative description in terms of Toeplitz operators. To give it, we note that $$
\mathscr{H}_y^+ = {\rm Ker}\, \frac{1-\epsilon}{2} = \widehat{\bigoplus}_{l \ge 0} \C u_y^l \otimes \C^{2n}.
$$
is the Hardy space $\mathcal{H}^2 \subset L^2(\mathcal{B}_y) \otimes \C^{2n}$, and we let 
\begin{align*}
P_+ &: L^2(\mathcal{B}_y) \otimes \C^{2n} \to \mathcal{H}^2, &
\sum_{l \in \Z} u_y^l \otimes v &\mapsto
\sum_{l \ge 0} u_y^l \otimes v
\end{align*}
be the projection to Hardy space, and $P_-$ the projection onto the complementary subspace $\mathscr{H}_y^-$ in $\mathscr{H}_y$. Then the adjoint $P_+^\dagger : \mathcal{H}^2 \to L^2(\mathcal{B}_y)$ is the inclusion. We define the $k_x$-dependent family of Toeplitz operators associated to $U$ by
\begin{align*}
T_U(k_x) &: \mathscr{H}_y^+ \to \mathscr{H}_y^+ , &
T_U(k_x) &= P_+ M_{U(k_x, \cdot)} P_+^\dagger.
\end{align*}
Similarly, there is a family of ``lower half-plane'' Toeplitz operators 
$$T'_U(k_x): \mathscr{H}_y^- \rightarrow \mathscr{H}_y^- , \qquad\qquad T'_U(k_x)=P_-M_{U(k_x,\cdot)}P_-^\dagger.$$

\begin{proposition}\label{prop:z2invariant}
The analytic $\Z/2$-invariant of an element in $K^{1 + \widehat{\nu}}_{\Z_2}(\mathcal{B}_x \times \mathcal{B}_y)$ described by $V(k_x)$ and $U(k_x, k_y)$ can be computed as
$$
\dim \mathrm{Ker}\, T_U(k_x) \mod 2,
$$
or equivalently as  
$$\dim\mathrm{Ker}\,T'_U(k_x) \mod 2,$$
where $k_x$ is any point in $\cB_x$.
\end{proposition}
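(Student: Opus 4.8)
The plan is to unwind the $\Z/2$-invariant furnished by Lemma~\ref{lem:isomorphism_in_Karoubi_formulations}(a), identify the pertinent intersection of kernels with the kernel of the lower half-plane Toeplitz operator $T'_U(k_x)$, and then exploit the glide symmetry $\wt{V}(k_x)$ to pass to the upper half-plane operator $T_U(k_x)$. By construction the output family is $\eta(k_x)=M^\dagger_{U(k_x,\cdot)}\,\epsilon\,M_{U(k_x,\cdot)}$, and Lemma~\ref{lem:isomorphism_in_Karoubi_formulations}(a) computes its invariant as $\dim\!\big(\mathrm{Ker}\,\tfrac{1-\eta(k_x)}{2}\cap\mathrm{Ker}\,\tfrac{1+\epsilon}{2}\big)\bmod 2$. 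Writing $M\equiv M_{U(k_x,\cdot)}$, which is unitary because $U(k_x,\cdot)$ takes values in ${\rm U}(2n)$, the $+1$-eigenspace of $\eta(k_x)$ is $\mathrm{Ker}\,\tfrac{1-\eta(k_x)}{2}=M^{-1}(\mathscr{H}_y^+)$, where $\mathscr{H}_y^+=\mathrm{Ker}\,\tfrac{1-\epsilon}{2}$ is Hardy space; since $\mathrm{Ker}\,\tfrac{1+\epsilon}{2}=\mathscr{H}_y^-$, the intersection is $\{v\in\mathscr{H}_y^-:Mv\in\mathscr{H}_y^+\}$. As $T'_U(k_x)v=P_-Mv$ for $v\in\mathscr{H}_y^-$, this vanishes exactly when $Mv\in\mathscr{H}_y^+$, so the intersection is precisely $\mathrm{Ker}\,T'_U(k_x)$ and the invariant equals $\dim\mathrm{Ker}\,T'_U(k_x)\bmod 2$.

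To obtain the equivalent expression via $T_U(k_x)$, I would invoke the glide reflection operator $\wt{V}(k_x)$. By construction it anticommutes with $\epsilon$, hence interchanges $\mathscr{H}_y^+$ and $\mathscr{H}_y^-$, and it commutes with $M$ by the identity $\wt{V}(k_x)M_{U(k_x,\cdot)}\wt{V}(k_x)^\dagger=M_{U(k_x,\cdot)}$ verified in the main construction. Consequently, if $v\in\mathrm{Ker}\,T'_U(k_x)$, i.e.\ $v\in\mathscr{H}_y^-$ with $Mv\in\mathscr{H}_y^+$, then $\wt{V}(k_x)v\in\mathscr{H}_y^+$ and $M\wt{V}(k_x)v=\wt{V}(k_x)Mv\in\mathscr{H}_y^-$, so $\wt{V}(k_x)v\in\mathrm{Ker}\,T_U(k_x)$; the same computation with the roles of $\mathscr{H}_y^{\pm}$ exchanged sends $\mathrm{Ker}\,T_U(k_x)$ back into $\mathrm{Ker}\,T'_U(k_x)$. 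Because $\wt{V}(k_x)$ is unitary with $\wt{V}(k_x)^2=e^{\im k_x}$ a nonzero scalar, these two maps are mutually inverse up to a scalar, giving the honest equality $\dim\mathrm{Ker}\,T_U(k_x)=\dim\mathrm{Ker}\,T'_U(k_x)$, hence agreement of both mod~$2$ counts with the invariant.

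Finiteness of these kernels is guaranteed by Fredholmness of $T_U(k_x)$ and $T'_U(k_x)$, which follows from continuity and invertibility of the symbol $U(k_x,\cdot)$ (Theorem~\ref{thm:Toeplitz}), while $\eta(k_x)-\epsilon$ is compact by Lemma~\ref{lem:compactdifference}, so that $\eta$ is a legitimate representative in the infinite-dimensional Karoubi model and Lemma~\ref{lem:isomorphism_in_Karoubi_formulations}(a) applies; the independence of the point $k_x$ modulo $2$ is part of the content of that lemma. I expect the computation to be essentially algebraic once conventions are pinned down, and the only bookkeeping requiring care is tracking which $\epsilon$-eigenspace produces $T_U$ versus $T'_U$. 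Conceptually, the genuine content is the glide-symmetry step: the reflection $\wt{V}(k_x)$ pairs the zero modes just above the edge with those just below, which is precisely why the upper and lower half-plane counts coincide (and not merely coincide modulo~$2$).
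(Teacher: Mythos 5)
Your proof is correct and follows essentially the same route as the paper's: identify the Karoubi $\Z/2$-invariant of $\eta(k_x)=M^\dagger_{U(k_x,\cdot)}\epsilon M_{U(k_x,\cdot)}$ with a Toeplitz kernel via the decomposition $\mathscr{H}_y=\mathscr{H}_y^+\oplus\mathscr{H}_y^-$, then use the odd glide operator $\wt{V}(k_x)$, which commutes with $M_{U(k_x,\cdot)}$ and exchanges $\mathscr{H}_y^\pm$, to identify $\mathrm{Ker}\,T_U(k_x)$ with $\mathrm{Ker}\,T'_U(k_x)$. The differences are purely presentational: the paper phrases the kernel identification as a block-matrix computation rather than in eigenspace/preimage language, and compresses your explicit $\wt{V}$-intertwining argument into the phrase ``because of the twisted $\Z_2$-action.''
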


\begin{proof}
Recall that $\eta(k_x) = M^\dagger_{U(k_x, \cdot)} \epsilon M_{U(k_x, \cdot)}$. According to the decomposition $\mathscr{H}_y = \mathrm{Ker}\,\frac{1 - \epsilon}{2} \oplus \mathrm{Ker}\,\frac{1 + \epsilon}{2}$, we can express $M_{U(k_x, \cdot)}$ in the following block matrix
$$
M_{U(k_x, \cdot)}
=
\left(
\begin{array}{cc}
P_+ M_{U(k_x, \cdot)} P_+^\dagger &
P_+ M_{U(k_x, \cdot)} P_-^\dagger \\
P_- M_{U(k_x, \cdot)} P_+^\dagger &
P_- M_{U(k_x, \cdot)} P_-^\dagger
\end{array}
\right).
$$
This expression helps us to verify that
$$
\mathrm{Ker}\,\frac{1 + \eta(k_x)}{2} \cap
\mathrm{Ker}\,\frac{1 - \epsilon}{2}
= 
\mathrm{Ker}\, P_+ M_{U(k_x, \cdot)} P_+^\dagger = {\rm Ker}\, T_U(k_x),
$$
so for any $k_x \in \mathcal{B}_x$, we have
$$
\dim\big(
\mathrm{Ker}\,\frac{1 + \eta(k_x)}{2} \cap
\mathrm{Ker}\,\frac{1 - \epsilon}{2}
\big)
= \dim\mathrm{Ker}\,T_U(k_x).
$$
Because of the twisted $\Z_2$-action, we also have that the $\Z/2$ invariant for $\eta$ is
$$
\dim\big(
\mathrm{Ker}\,\frac{1 - \eta(k_x)}{2} \cap
\mathrm{Ker}\,\frac{1 + \epsilon}{2}
\big)
=
\dim\big(
\mathrm{Ker}\,\frac{1 + \eta(k_x)}{2} \cap
\mathrm{Ker}\,\frac{1 - \epsilon}{2}
\big) = \dim\mathrm{Ker}\,T'_U(k_x).
$$
and the Proposition is established.
\end{proof}

Finally, we give a short computation to verify the following, used in part 4 of the main construction.
\begin{lemma}\label{lem:compactdifference}
For each $k_x\in\cB_x$, $\eta(k_x)-\epsilon$ is a compact operator on $\mathscr{H}_y$.
\end{lemma}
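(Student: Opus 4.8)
The plan is to reduce the statement to the classical fact that the Hardy projection has compact commutators with continuous multiplication operators. First I would exploit that $U(k_x,\cdot)$ takes values in $\mathrm{U}(2n)$, so $M_{U(k_x,\cdot)}$ is \emph{unitary} and $M^\dagger_{U(k_x,\cdot)}=M^{-1}_{U(k_x,\cdot)}$. Hence
$$
\eta(k_x)-\epsilon
= M^\dagger_{U(k_x,\cdot)}\epsilon M_{U(k_x,\cdot)}-\epsilon
= M^\dagger_{U(k_x,\cdot)}\big[\epsilon, M_{U(k_x,\cdot)}\big],
$$
and since $M^\dagger_{U(k_x,\cdot)}$ is bounded, it suffices to prove that the commutator $[\epsilon, M_{U(k_x,\cdot)}]$ is compact. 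Writing $P_+$ for the projection onto $\mathscr{H}_y^+=\mathrm{Ker}\,\frac{1-\epsilon}{2}$ (the Hardy space, $l\ge 0$), we have $\epsilon=2P_+-1$, so the problem is equivalent to showing $[P_+,M_{U(k_x,\cdot)}]$ is compact. From here $k_x$ is fixed, and I abbreviate $W=U(k_x,\cdot):\cB_y\to\mathrm{U}(2n)$, a continuous matrix-valued symbol.

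Next I would treat the monomial case. For a single Laurent term $W=u_y^m\otimes A$ with $A\in M_{2n}(\C)$ a constant matrix, multiplication by $u_y^m$ shifts the Fourier index $l\mapsto l+m$ on $L^2(\cB_y)$. Consequently $P_+M_W$ and $M_WP_+$ agree on all modes except the finitely many lying within distance $|m|$ of the cutoff $l=0$, so the commutator $[P_+,M_W]$ is supported on a finite-dimensional subspace of $\mathscr{H}_y$ (of dimension at most $2n|m|$) and is therefore of \emph{finite rank}. By linearity, $[P_+,M_W]$ is finite rank for every matrix-valued Laurent polynomial symbol $W$.

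The final step is a density and approximation argument. The assignment $W\mapsto[P_+,M_W]$ is norm-continuous, with $\|[P_+,M_W]\|\le 2\|P_+\|\,\|M_W\|=2\|W\|_\infty$. Matrix-valued Laurent polynomials are dense in $C(\cB_y,M_{2n}(\C))$ in the supremum norm (apply Stone--Weierstrass, or Fejér summation, entrywise). Choosing Laurent polynomials $W_j\to W$ uniformly, I get $[P_+,M_{W_j}]\to[P_+,M_W]$ in operator norm; since each $[P_+,M_{W_j}]$ is finite rank and the compact operators form a norm-closed ideal, the limit $[P_+,M_W]$ is compact. Tracing back through the reduction gives compactness of $\eta(k_x)-\epsilon$, as claimed.

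There is no genuine obstacle here: the result is precisely the compactness of $[P_+,M_f]$ that underlies the whole Toeplitz story invoked in Section~\ref{sec:SSHsection}, and the only care required is bookkeeping of the internal $\C^{2n}$ factor in the finite-rank computation and the use of the standing continuity hypothesis on $U$. The single substantive input is the monomial computation; everything else is the standard closure argument.
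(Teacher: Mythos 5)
Your proof is correct, and while it shares its computational core with the paper's proof --- both approximate the symbol $U(k_x,\cdot)$ uniformly by matrix-valued Laurent polynomials, observe that the relevant operator is finite rank for a monomial symbol, and pass to the limit using that the compacts form a norm-closed ideal --- the surrounding reduction is genuinely different. You exploit unitarity of $M_{U(k_x,\cdot)}$ to write $\eta(k_x)-\epsilon=M^\dagger_{U(k_x,\cdot)}\bigl[\epsilon,M_{U(k_x,\cdot)}\bigr]$, so that the single Hankel-type statement ``$[P_+,M_W]$ is compact for every continuous matrix symbol $W$'' finishes the job. The paper instead decomposes $M_{U(k_x,\cdot)}$ into blocks with respect to $\epsilon$: it proves the off-diagonal blocks $P_+M_{U(k_x,\cdot)}P_-^\dagger$ and $P_-M_{U(k_x,\cdot)}P_+^\dagger$ compact by exactly your monomial-plus-density argument, and then needs a \emph{second} ingredient for the diagonal blocks, namely the standard Toeplitz fact that $T_{W^\dagger}T_W-1=T_{W^\dagger}T_W-T_{W^\dagger W}$ is compact, cited from \cite{Murphy}. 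Your route is thus more economical and self-contained: one ingredient instead of two, and no external citation. What the paper's block decomposition buys is the explicit identity, modulo compacts,
$$
\eta(k_x)-\epsilon\equiv
\begin{pmatrix}
T_{U^\dagger}(k_x)T_U(k_x)-1 & 0\\
0 & 1-T'_{U^\dagger}(k_x)T'_U(k_x)
\end{pmatrix},
$$
which makes visible the relation between $\eta(k_x)$ and the Toeplitz families $T_U(k_x)$, $T'_U(k_x)$; this is the same block computation that drives the kernel identifications in Proposition \ref{prop:z2invariant}, so the paper is also setting up structure it reuses there. Both arguments are valid proofs of the lemma.
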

\begin{proof}
The continuous function $k_y\mapsto U(k_x,\cdot)$ can be approximated by linear combinations of the exponentials $u_y^l:k_y\mapsto e^{\im l k_y}$. The multiplication operator $M_{U(k_x,\cdot)}$ is thus norm-approximated by the operator of multiplication by such linear combinations. It is easy to see that $P_+M_{u_y^l}P_-^\dagger$ takes the basis functions $u_y^{l'}$ to $u_y^{l+l'}$ for $l'=-l,\ldots,-1$ and is zero otherwise, so it is a finite-rank operator. Then $P_+M_{U(k_x,\cdot)}P_-^\dagger$ is a limit of finite-rank operators and is thus compact. A similar argument shows that $P_-M_U(k_x,\cdot)P_+^\dagger$ is compact. Working modulo the ideal of compact operators, we can rewrite
\begin{align*}
\eta(k_x)-\epsilon&=
\left(
\begin{array}{cc}
T_{U^\dagger}(k_x) &
0 \\
0 &
T'_{U^\dagger}(k_x)
\end{array}
\right)
\left(
\begin{array}{cc}
1 &
0 \\
0 &
-1
\end{array}
\right)
\left(
\begin{array}{cc}
T_U(k_x)  &
0 \\
0 &
T'_U(k_x) 
\end{array}
\right) 
-\left(
\begin{array}{cc}
1 &
0 \\
0 &
-1
\end{array}
\right) \\
&=
\left(
\begin{array}{cc}
T_{U^\dagger}(k_x)T_U(k_x)&
0 \\
0 &
-T'_{U^\dagger}(k_x)T'_U(k_x)
\end{array}
\right) -\left(
\begin{array}{cc}
1 &
0 \\
0 &
-1
\end{array}
\right) 
\end{align*}
which is compact on $\mathscr{H}_y$, because $T_{W^\dagger}T_W-1=T_{W^\dagger}T_W-T_{W^\dagger W}$ is a compact operator for any continuous symbol $W$ (e.g.\  3.5.9-10 of \cite{Murphy} generalised to matrix-valued symbols).
\end{proof}

\subsubsection*{Examples}
Thanks to Proposition \ref{prop:z2invariant} above, we can readily compute the analytic $\ZZ/2$ indices for the basic examples in our model: Let us consider $V : \mathcal{B}_x \to {\mathrm U}(2)$ given by
$$
V(k_x) =
\left(
\begin{array}{cc}
0 & e^{\im k_x} \\
1 & 0
\end{array}
\right).
$$
\begin{itemize}

\item
For $U_r(k_x, k_y) = V(k_x)$, we have $\mathrm{Ker}\,T_{U_r}(k_x) = 0$ since it has no $\cB_y$ dependence, and its $\Z/2$-index is trivial. 
\item
For $U=U_p : \mathcal{B}_x \times \mathcal{B}_y \to U(2)$ given by
$$
U_p(k_x, k_y)
=
\left(
\begin{array}{cc}
e^{\im k_y} & 0 \\
0 & e^{-\im k_y}
\end{array}
\right),
$$
we have $\mathrm{dim\, Ker} \,T_{U_p}(k_x) = 1$, and its $\Z/2$-index is non-trivial. 
\end{itemize}
Let $f^+\in {\rm Ker}\,T_U(k_x)\subset \mathscr{H}_y^+$, which is equivalent to $M_{U(k_x)}f^+\in\mathscr{H}_y^-$. Since $\wt{V}$ is an odd operator, we have 
$$M_{U(k_x)}\wt{V}(k_x)f^+=\wt{V}(k_x)M_{U(k_x)}f^+\in\mathscr{H}_y^+ $$
so that $\wt{V}(k_x)f^+\in{\rm Ker}\,T_U'(k_x)\subset \mathscr{H}_y^-$. Similarly, $\wt{V}(k_x)f^- \in {\rm Ker}\,T_U(k_x)$ for $f^-\in {\rm Ker}\,T_U(k_x)^-$. Thus we see that $\wt{V}(k_x)$ restricts to the graded Hilbert subspace ${\rm Ker}\,T_U(k_x)\oplus{\rm Ker}\,T'_U(k_x)$.

In the case $U=U_p$, the Toeplitz families $T(k_x), T'(k_x)$ do not depend on $k_x$ at all, and their kernels correspond to the dangling $\vC$ and $\vCG$ modes above and below the edge, respectively, as illustrated in Fig.\ \ref{fig:horizontaledge}. When we take the direct integral over $\cB_x$, the operator $L_{-1}=\int^\oplus_{\cB_x}\wt{V}$ becomes an odd ``nonsymmorphic chiral'' symmetry for the graded Hilbert space of zero modes $\int^\oplus_{\cB_x} {\rm Ker}\,T_U(k_x)\oplus \int^\oplus_{\cB_x}{\rm Ker}\,T'_U(k_x)\cong L^2(\cB_x)\oplus L^2(\cB_x)$. This Hilbert space of (brown) edge zero modes coincides with $\mathscr{H}_{e,\rm brown}$ of Section \ref{sec:edgezeroesheuristic}, which was constructed on heuristic grounds. $\mathscr{H}_{e,\rm brown}$, together with the grading and $L_{-1}$, is the section space of the basic $(\wh{\nu},c)$-twisted bundle over $\cB_x$ of Lemma \ref{lem:classification_twistd_bundle} (the $n=1$ case there).

\begin{remark}
For a general $U(k_x,\cdot)$ which depends on $k_x$, the dimension of the kernel of $T_U(k_x)$ may vary with $k_x$, but its mod 2 dimension is invariant.
\end{remark}

\subsubsection{Index theorem for twisted family of Toeplitz operators}\label{sec:twistedToeplitzindex}
To summarise, the input data $U, V$ give rise to a ``twisted family'' of Toeplitz operators $T_U(k_x)$, and we have 
\begin{thm}\label{thm:twistedindextheorem}
The topological push-forward map in the Gysin exact sequence
$$
\pi_* : \
K^{1 + \widehat{\nu}}_{\Z_2}(\mathcal{B}_x \times \mathcal{B}_y) 
\to K^{0 + c + \widehat{\nu}}_{\Z_2}(\mathcal{B}_x) \cong \Z/2
$$
is induced from the assignment $\pi_*'$ to the input data $U$, $V$, of the mod $2$ dimension of the kernel of the Toeplitz operator $T_U(k_x)$ associated to $U$,
$$
\dim \mathrm{Ker}\, T_U(k_x) \mod 2
$$
where $k_x \in \mathcal{B}_x$ is any point.
\end{thm}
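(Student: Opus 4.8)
The plan is to show that the two homomorphisms $\pi_*$ and $\pi_*'$ out of $K^{1+\wh{\nu}}_{\ZZ_2}(\cB_x\times\cB_y)$ coincide by checking that they agree on generators, exploiting that both source and target are small enough that a homomorphism is pinned down by its values on a generating set. Concretely, Corollary \ref{cor:freegenerator} identifies the source with $\ZZ[U_r]\oplus\ZZ/2$, while the target is $K^{0+c+\wh{\nu}}_{\ZZ_2}(\cB_x)\cong\ZZ/2$, so any homomorphism between them is determined by its values on $[U_r]$ and on the torsion generator.

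First I would argue that $\pi_*'$ is a well-defined homomorphism. The main construction assigns to the input data $(V,U)$ the family of self-adjoint involutions $\eta(k_x)=M^\dagger_{U(k_x,\cdot)}\epsilon M_{U(k_x,\cdot)}$, which by Lemma \ref{lem:compactdifference} differs from $\epsilon$ by a compact operator and therefore represents a genuine class in the Karoubi group $\mathscr{K}^{0+c+\wh{\nu}}_{\ZZ_2}(\cB_x)\cong\ZZ/2$. Since this class is a homotopy invariant of $\eta$ (hence of $U$, through the continuous dependence of $\eta$ on $U$) and is additive under direct sums $U\oplus U'\mapsto\eta\oplus\eta'$, the assignment descends to a homomorphism on $K$-theory classes; Proposition \ref{prop:z2invariant} computes its value as $\dim\mathrm{Ker}\,T_U(k_x)\bmod 2$. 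This is the step I expect to carry the real content: the naive integer $\dim\mathrm{Ker}\,T_U(k_x)$ is neither independent of $k_x$ nor homotopy invariant, and it is only its reduction to the Karoubi $\ZZ/2$-class (via Lemma \ref{lem:isomorphism_in_Karoubi_formulations}) that yields an invariant of the $K$-theory class.

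With both maps now known to be homomorphisms, I would evaluate them on generators. On the analytic side, the Examples give $\pi_*'([U_r])=0$ (because $T_{U_r}(k_x)$ has no $\cB_y$-dependence and trivial kernel) and $\pi_*'([U_p])=1$ (because $\dim\mathrm{Ker}\,T_{U_p}(k_x)=1$). On the topological side, exactness of the Gysin sequence of Section \ref{sec:Gysin} gives $\pi_*\circ\pi^*=0$; since $[U_r]=\pi^*[\check{U}_r]$, this yields $\pi_*([U_r])=0$. Exactness at $K^{0+c+\wh{\nu}}_{\ZZ_2}(\cB_x)$ together with the vanishing of every homomorphism $\ZZ/2\to\ZZ$ shows that $\pi_*$ is surjective, while exactness at the source gives $\ker\pi_*=\mathrm{im}\,\pi^*=\ZZ[U_r]$. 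Thus $\pi_*$ induces an isomorphism of the quotient $(\ZZ[U_r]\oplus\ZZ/2)/\ZZ[U_r]\cong\ZZ/2$ onto $\ZZ/2$.

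Finally I would knit these computations together. Since $\pi_*'([U_r])=0$ and $\pi_*'([U_p])=1$, the element $[U_p]$ cannot lie in $\ZZ[U_r]$ (on which $\pi_*'$ vanishes), so $[U_p]$ represents the nontrivial class $\tau$ of $(\ZZ[U_r]\oplus\ZZ/2)/\ZZ[U_r]\cong\ZZ/2$; consequently $\pi_*'(\tau)=1$, and as a by-product this confirms that $[U_p]$ is the $2$-torsion generator promised in Proposition \ref{prop:twistedK1}. On the other hand $\pi_*(\tau)=1$ by the surjectivity just established. Hence $\pi_*$ and $\pi_*'$ agree on both $[U_r]$ and $\tau$, and therefore agree as homomorphisms on all of $K^{1+\wh{\nu}}_{\ZZ_2}(\cB_x\times\cB_y)$, which is the assertion of the theorem.
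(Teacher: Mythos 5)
Your proposal is correct and follows essentially the same route as the paper's proof: establish that $\pi_*'$ is a well-defined homomorphism (via the Karoubi-class machinery and Proposition \ref{prop:z2invariant}), note that both maps kill $[U_r]$ (analytically because $T_{U_r}(k_x)$ has trivial kernel, topologically by exactness since $[U_r]=\pi^*[\check{U}_r]$), and use surjectivity of both maps (the nontrivial index of $U_p$ on one side, the Gysin sequence on the other) to force agreement on the torsion generator. Your write-up merely makes explicit what the paper compresses into ``straightforward'' and ``hence $\pi'_*$ agrees with $\pi_*$'', including the useful observation that this argument simultaneously certifies $[U_p]$ as the $2$-torsion generator.
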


\begin{proof}
It is straightforward to see that $\pi'_*$ is a well-defined homomorphism on the classes $[U]\in K^{1 + \widehat{\nu}}_{\Z_2}(\mathcal{B}_x \times \mathcal{B}_y)$. 
We saw that the free generator $U_r$ of $K^{1 + \widehat{\nu}}_{\Z_2}(\mathcal{B}_x \times \mathcal{B}_y)$ has trivial $\ZZ/2$-index, so $\pi'_*$ kills $[U_r]$ just as $\pi_*$ did. Furthermore, the computation that 
$$
U_p(k_x, k_y)
=
\left(
\begin{array}{cc}
e^{\im k_y} & 0 \\
0 & e^{-\im k_y}
\end{array}
\right),
$$
has nontrivial $\ZZ/2$-index shows that $\pi'_*$ surjective, hence $\pi'_*$ agrees with $\pi_*$.
\end{proof}

This generalises the classical index theorem for Toeplitz operators, Theorem \ref{thm:Toeplitz}, to a twisted family of Toeplitz operators parametrised by a compact 1D circle.


Since we can add a free element $U_r$ to $U_p$ without changing its $\ZZ/2$-index, we should show explicitly that $2[U_p]=[U_p\oplus U_p]=[1]$ in $K$-theory. The required homotopy is
\begin{equation}
\begin{pmatrix}
U_p & 0 \\ 0 & 1
\end{pmatrix}
\begin{pmatrix}
\cos t & -V\sin t \\
V\sin t & \cos t
\end{pmatrix}
\begin{pmatrix}
1 & 0 \\ 0 & U_p
\end{pmatrix}
\begin{pmatrix}
\cos t & V^\dagger\sin t \\
-V^\dagger\sin t & \cos t
\end{pmatrix},\;\; t\in[0,\frac{\pi}{2}].\nonumber
\end{equation}

Note that the condition
$$(V\oplus V)(k_x)U(k_x,k_y)=U(k_x,-k_y)(V\oplus V)(k_x)$$
holds throughout the homotopy. At $t=0$, we start off with the direct sum of two copies of $U_p$, but at $t=\frac{\pi}{2}$, this becomes
\begin{equation}
\begin{pmatrix}
U_pVU_pV^\dagger & 0 \\ 0 & 1
\end{pmatrix}=
\begin{pmatrix}
U_pU_p^{-1} & 0 \\ 0 & 1
\end{pmatrix}=1.\nonumber
\end{equation}

Similarly, Proposition \ref{prop:classification_twisted_bundle_with_Clifford} shows that two copies of the Hilbert space of brown (or black) edge modes for $U_p$ trivialises in $K$-theory.

\subsection{Klein bottle phase via Baum--Connes isomorphism}\label{sec:BaumConnes}
The existence of a $\ZZ/2$ bulk phase for $\pg$-symmetric Hamiltonians can also be deduced by computing the $K$-theory of the (reduced) group $C^*$-algebra $C^*(\pg)$. A concrete description of $C^*(\pg)$ may be found in \cite{Taylor} and VII.4 of \cite{Davidson}. The twisted vector bundles of Section \ref{sec:twistedvbnonsymmorphic} arising from a nonsymmorphic space group $\sG$ were classified by $K^{0+\wh{\nu}}_F(\cB)_{\rm fin}$, which also turns out to be $K^{0+\wh{\nu}}_F(\cB)$ (\cite{FM}, Prop.\ 3.14 of \cite{Gomi2}), but we can instead compute the operator $K$-theory of $C^*(\sG)$ \cite{Thiang} --- this is justified by a twisted Green--Julg theorem (\cite{Kubota} Theorem 4.10).




The Baum--Connes conjecture is verified for space groups $\sG$, so that the assembly map $\mu:K_\bullet^\sG(\underline{E}\sG)\rightarrow K_\bullet(C^*(\sG))$ is an isomorphism. Here $\underline{E}\sG$ is the universal proper $\sG$-space which can be taken to be Euclidean space $\EE^d$ since it is contractible and $\sG$ acts on it with finite isotropy. Furthermore, $\ZZ^d$ acts freely on $\EE^d$, so $K_\bullet^\sG(\EE^d)\cong K_\bullet^F(\TT^d)$, and the latter can be computed using a spectral sequence to Bredon homology groups \cite{Valette-book}. The computation of $K_\bullet(C^*(\sG))$ was carried out for all the 2D wallpaper groups in \cite{LS}, and we give an independent computation of $K_\bullet(C^*(\pg))$ in Appendix \ref{appendix:integerbec}.

The Baum--Connes isomorphism above is particularly useful for the case $\sG=\pg$ which acts \emph{freely} on $\EE^2$, so that in fact $K_\bullet^\pg(\EE^2)\cong K_\bullet(B\pg)=K_\bullet(\cK)$ where the classifying space $B\pg=E\pg/\pg=\EE^d/\pg$ is the Klein bottle $\cK$. Thus the computation of $K_\bullet(C^*(\pg))$ reduces to that of $K_\bullet(\cK)$. By low-dimensionality of $\cK$, we simply have the ordinary homology $H_\bullet(\cK)\cong K_\bullet(\cK)$ (e.g.\ Lemma 4.1 of \cite{Mislin}). For $\bullet=1$, this easily gives $\ZZ\oplus\ZZ/2$ (e.g.\ we can abelianise $\pi_1(\cK)=\pg$). The $\ZZ/2$ factor coming from the unorientable cycle (indicated by a double arrow in Fig.\ \ref{fig:pgunitcell}) corresponds to the $\ZZ/2$ ``Klein bottle'' phase constructed in this paper.

\subsubsection{A crystallographic T-duality}
We can Poincar\'{e} dualise $K_\bullet(\cK)\cong K^{\bullet+\mathfrak{o}}(\cK)\cong K_{\ZZ_2}^{\bullet+\wt{\mathfrak{o}}}(\TT^2_{\rm free})$, where $\mathfrak{o}\in H^1(\cK,\ZZ_2)$ is twisting by the orientation bundle of $\cK$ and $\wt{\mathfrak{o}}\in H^1_{\ZZ_2}(\TT^2_{\rm free},\ZZ_2)$ is the corresponding equivariant twist for the 2-torus $\TT^2_{\rm free}$ with free involution (double-covering $\cK$). Then we get the ``crystallographic T-duality'' isomorphism
$$K_{\ZZ_2}^{\bullet+\wt{\mathfrak{o}}}(\TT^2_{\rm free})\cong K^{\bullet+\wh{\nu}}_{\ZZ_2}(\cB).$$

In $\TT^2_{\rm free}$, there is an $F$-invariant circle $\TT_{\rm free}$ on which $F$ acts freely (actually two such circles). Let $\iota$ be the inclusion of $\TT_{\rm free}$ in $\TT^2_{\rm free}$, then the restriction $\breve{\mathfrak{o}}\equiv\iota^*\wt{\mathfrak{o}}$ is an equivariant twist in $H^1_{\ZZ_2}(\TT_{\rm free},\ZZ_2)$. There is in fact also a T-duality
$$K^{\bullet+\breve{\mathfrak{o}}}_{\ZZ_2}(\TT_{\rm free}) \cong K_{\ZZ_2}^{\bullet+1+c+{\wh{\nu}}}(\cB_x),$$
based on the case without $H^1$ twists in \cite{Gomi3,Schneider}. The interpretation of the left-hand-side $(\TT_{\rm free},\breve{\mathfrak{o}})$ is that $\TT_{\rm free}$ is the classifying space for the even subgroup $2\ZZ\subset\ZZ_c$ while $\breve{\mathfrak{o}}$ is a ``grading $\ZZ_2$-bundle'' which has a reversal of grading after going around $\TT_{\rm free}$ half-way (translation by the generator of $\ZZ_c$). The right-hand-side is the momentum space picture --- $\cB_x$ is the Pontryagin dual of the even subgroup $2\ZZ$ on which $\ZZ_2$ acts dually and with a graded twist by $(\wh{\nu},c)$.

We may then verify through explicit computations that the following diagram commutes (for $\bullet=1$, only the torsion subgroups survive in the vertical maps, while for $\bullet=0$ they are trivial):
\begin{equation*}
\xymatrix{
K_{\ZZ_2}^{\bullet+\wt{\mathfrak{o}}}(\TT^2_{\rm free})  \ar[d]^{\iota^*} \ar[r]^{\sim}_{T} & K^{\bullet+\wh{\nu}}_{\ZZ_2}(\cB) \ar[d]^{\pi_*} \\
K^{\bullet+\breve{\mathfrak{o}}}_{\ZZ_2}(\TT_{\rm free}) \ar[r]^{\sim}_{T} & K_{\ZZ_2}^{\bullet+1+c+{\wh{\nu}}}(\cB_x)  }   
\end{equation*}
Thus we verify that the bulk-boundary homomorphism (the Gysin map $\pi_*$ of Sec. \ref{sec:Gysin}) in the momentum space picture is T-dual to the geometrical restriction $\iota^*$ in the position space picture, as is intuitively ``obvious'' and in the same vein as the results in \cite{HMT}. In \cite{GT} we show further that ``crystallographic T-dualities'' exist for \emph{any} crystallographic space group, and just as for $\pg$, graded twists are essential in their formulation.

\section*{Outlook}
\addcontentsline{toc}{section}{\protect\numberline{}Outlook}
The motivation for Theorem \ref{thm:twistedindextheorem} came from studying $\pg$ symmetric Hamiltonians, trying to find the appropriate topological bulk and edge invariants, and realising the bulk-edge correspondence through an index theorem. The wallpaper group $\pg$ is particularly nice because it acts freely on Euclidean space. Physically, this means that the Wyckoff positions (the positioning of $\vI, \vG$ in Euclidean space) are not too important since there are no special positions to consider.

For 3D space groups, there are nine other examples of space groups that act without fixed points (the Bieberbach groups), excluding the case of $\ZZ^3$ with trivial point group. The methods of this paper should yield similar index theorems for the resulting ``twisted families'' of Toeplitz operators parametrised by a 2-torus. One important simplification that occurred in our low-dimension (1D) case is that the edge twisted vector bundles are trivial as complex vector bundles. This need not be the case in 2D and higher, and already in the case of ordinary 2D families of Toeplitz operators, the index \emph{bundle} in $K^0(X)$ is not necessarily determined from data at a single point.

We have also given a glimpse of ``crystallographic T-duality'' in the case of $\sG=\pg$, which was guided by the physical intuition of the bulk-boundary correspondence. In a subsequent work \cite{GT}, we study such new dualities for general space groups, where further interesting phenomena arise.

\section*{Acknowledgements}
G.C.T.~is supported by ARC grant DE170100149, and would like to thank G.~De Nittis, V.~Mathai and K.~Hannabuss for helpful discussions. He also acknowledges H.-H.~Lee for his kind hospitality at the Seoul National University, where part of this work was completed. 
K.G.~is supported by JSPS KAKENHI Grant Number JP15K04871, and thanks I.~Sasaki, M.~Furuta and K.~Shiozaki for useful discussions.

\appendix
\section{Appendix: Computation of $K^{\bullet+\wh{\nu}}_{\ZZ_2}(\cB)$ and $K^{\bullet + c + \widehat{\nu}}_{\Z_2}(\mathcal{B}_x)$}\label{appendix:computations}

\subsection*{Computation of $K^{\bullet+\wh{\nu}}_{\ZZ_2}(\cB)$}

Let $\mathcal{B}_y = \R/2\pi \Z$ be the circle with the involution $k_y \mapsto -k_y$, and $R(\ZZ_2)=\ZZ[t]/(1-t^2)$ be the representation ring of $\ZZ_2$ with $t$ the sign representation. The following result is known, for example, in \cite{MD-R}:

\begin{lem}
We have the following identifications of $R(\Z_2)$-modules
\begin{align*}
K^0_{\Z_2}(\mathcal{B}_y) &\cong 
\overbrace{R(\Z_2)}^{\Z^2} \oplus \overbrace{(1 - t)}^{\Z}, &
K^1_{\Z_2}(\mathcal{B}_y) &= 0.
\end{align*}
\end{lem}

\begin{proof}
To apply the Mayer--Vietoris exact sequence, let us consider the $\Z_2$-invariant subspaces $U$ and $V$ given by
\begin{align*}
U &= \{ k_y \in \mathcal{B}_y |\ -\pi/2 \le k_y \le \pi/2 \}, &
V &= \{ k_y \in \mathcal{B}_y |\ \pi/2 \le k_y \le 3\pi/2 \}.
\end{align*}
These spaces are equivariantly contractible, so that
$$
K^n_{\Z_2}(U) \cong K^n_{\Z_2}(V)
\cong K^n_{\Z_2}(\mathrm{pt}) 
\cong
\left\{
\begin{array}{ll}
R(\Z_2) & (n = 0), \\
0 & (n = 1).
\end{array}
\right.
$$
The intersection $U \cap V$ is the space consisting of two points with free action of $\Z_2$, so that
$$
K^n_{\Z_2}(U \cap V)
\cong K^n_{\Z_2}(\mathrm{pt}\sqcup\mathrm{pt}) \cong K^n(\mathrm{pt}) 
\cong
\left\{
\begin{array}{ll}
\Z & (n = 0), \\
0 & (n = 1).
\end{array}
\right.
$$
Now, the Mayer--Vietoris exact sequence is 
$$
\begin{CD}
\overbrace{K^1_{\Z_2}(U \cap V)}^0 @<<<
\overbrace{K^1_{\Z_2}(U) \oplus K^1_{\Z_2}(V)}^0 @<<<
K^1_{\Z_2}(\mathcal{B}_y) \\
@VVV @. @AAA \\
K^0_{\Z_2}(\mathcal{B}_y) @>>>
\underbrace{K^0_{\Z_2}(U)}_{R(\Z_2)} \oplus 
\underbrace{K^0_{\Z_2}(V)}_{R(\Z_2)} 
@>{\Delta}>>
\underbrace{K^0_{\Z_2}(U \cap V)}_{\Z}.
\end{CD}
$$
The homomorphism $\Delta$ is realised as $\Delta(u, v) = i_U^*u - i_V^*v$, where $i_U^*$ and $i_V^*$ are induced from the inclusions $i_U : U \cap V \to U$ and $i_V : U \cap V \to V$. In the present case, we can identify $i^*_U$ as well as $i^*_V$ with the ``dimension'' $R(\Z_2) \to \Z$ given by $f(t) \mapsto f(1)$. This is surjective, and so is $\Delta$. As a result, we get $K^1_{\Z_2}(\mathcal{B}_y) \cong \mathrm{Coker}(\Delta) = 0$. We also get $K^0_{\Z_2}(\mathcal{B}_x) \cong \mathrm{Ker}(\Delta) \cong \Z^3$. As a basis of this abelian group, we can choose
$$
\{ (1, 1), (t, t), (0, 1 - t) \} \subset R(\Z_2) \oplus R(\Z_2).
$$
The former two base elements generate the $R(\Z_2)$-module $R(\Z_2)$, whereas the latter element $(0, 1 - t)$ generates the $R(\Z_2)$-module $(1 - t)$. 
\end{proof}

Let $\mathcal{B} = \mathcal{B}_x \times \mathcal{B}_y$ be the $2$-dimensional torus $(\R/2\pi \Z) \times (\R/2\pi \Z)$ with the $\Z_2$-action $(k_x, k_y) \mapsto (k_x, -k_y)$, and $\widehat{\nu} \in Z^2(\Z_2, C(\mathcal{B}, U(1)))$ the group $2$-cocycle \eqref{dualcocycle} induced from the wallpaper group \textsf{pg}.

\begin{prop} \label{prop:compute_K_of_torus}
We have the following identifications of $R(\Z_2)$-modules
\begin{align*}
K^{0 + \widehat{\nu}}_{\Z_2}(\mathcal{B}) &\cong 
\overbrace{(1 + t)}^{\Z}, &
K^{1 + \widehat{\nu}}_{\Z_2}(\mathcal{B}) &\cong
\overbrace{(1 + t)}^{\Z} \oplus \,\Z/2.
\end{align*}
where $\Z/2$ is the unique $R(\Z_2)$-module whose underlying abelian group is $\Z/2$.
\end{prop}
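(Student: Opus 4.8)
The plan is to compute both groups simultaneously by an equivariant Mayer--Vietoris argument that reduces everything to the already-computed module $M \coloneqq K^0_{\Z_2}(\mathcal{B}_y) \cong R(\Z_2) \oplus (1-t)$, exploiting the vanishing $K^1_{\Z_2}(\mathcal{B}_y)=0$ from the preceding Lemma. Since the $\Z_2$-action on $\mathcal{B}=\mathcal{B}_x\times\mathcal{B}_y$ is trivial in the $k_x$-direction, and the cocycle $\widehat{\nu}$ depends only on $k_x$ through $\widehat{\nu}(-1,-1)(\vec{k})=e^{\im k_x}$, I would cut the $k_x$-circle into two closed arcs $A_1,A_2$ whose interiors cover $\mathcal{B}_x$ and whose intersection $A_1\cap A_2$ has two contractible components. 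Pulling back along the projection to $\mathcal{B}_x$ gives an equivariant cover $\mathcal{B}=(A_1\times\mathcal{B}_y)\cup(A_2\times\mathcal{B}_y)$, mirroring the decomposition used for $\mathcal{B}_y$ in the Lemma but now carried out in the $k_x$-direction and with the twist present.

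The first real step is to trivialize the twist on each piece. On a contractible arc $A_j$ the function $e^{\im k_x}$ admits a continuous square root $e^{\im k_x/2}$, so the restricted twist is a coboundary, and a choice of square root furnishes an isomorphism $K^{n+\widehat{\nu}}_{\Z_2}(A_j\times\mathcal{B}_y)\cong K^n_{\Z_2}(A_j\times\mathcal{B}_y)$. As $A_j$ is equivariantly contractible (trivial action on a contractible space), the latter is just $K^n_{\Z_2}(\mathcal{B}_y)$, and likewise each of the two overlap components contributes a copy of $K^n_{\Z_2}(\mathcal{B}_y)$. Feeding $K^1_{\Z_2}(\mathcal{B}_y)=0$ into the long exact Mayer--Vietoris sequence collapses it to the four-term exact sequence
\begin{equation*}
0\longrightarrow K^{0+\widehat{\nu}}_{\Z_2}(\mathcal{B})\longrightarrow M^{\oplus 2}\overset{\Delta}{\longrightarrow} M^{\oplus 2}\longrightarrow K^{1+\widehat{\nu}}_{\Z_2}(\mathcal{B})\longrightarrow 0,
\end{equation*}
so that $K^{0+\widehat{\nu}}_{\Z_2}(\mathcal{B})\cong\ker\Delta$ and $K^{1+\widehat{\nu}}_{\Z_2}(\mathcal{B})\cong\operatorname{coker}\Delta$ as $R(\Z_2)$-modules.

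Everything then hinges on the gluing map $\Delta$, which compares the two local trivializations over the two overlap components; pinning down its precise form is the step I expect to be the main obstacle. The crucial point is that the chosen square root $e^{\im k_x/2}$ is antiperiodic, returning to its negative after a full loop around the $k_x$-circle. Consequently the square-root trivializations on $A_1$ and $A_2$ can be matched on one overlap component but necessarily differ by the sign $-1$ on the other, which is exactly the monodromy of the projective $\Z_2$-representation forced by nonsymmorphicity of $\pg$ (cf.\ the Example in Section~\ref{sec:twistedvbnonsymmorphic}). Under the trivialization isomorphisms above this sign becomes multiplication by the sign representation $t\in R(\Z_2)$, so careful bookkeeping of the trivializations yields $\Delta(a,b) = (a-b,\; a-tb)$ (which slot carries the $t$ is convention-dependent and affects neither $\ker$ nor $\operatorname{coker}$).

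It then remains to read off the kernel and cokernel of $\Delta$ over $M$. For the kernel, $\Delta(a,b)=0$ forces $a=b$ and $(1-t)a=0$, so $\ker\Delta$ is the diagonal copy of the $t$-fixed submodule $M^t$; since $t$ interchanges the basis elements $1,t$ of $R(\Z_2)$ and acts as $-1$ on $(1-t)$, one has $M^t=\Z\cdot(1+t)\cong(1+t)$, giving $K^{0+\widehat{\nu}}_{\Z_2}(\mathcal{B})\cong(1+t)\cong\Z$. Dually, identifying the first summand with the second via $\Delta(e,0)=(e,e)$ shows $\operatorname{coker}\Delta\cong M/(1-t)M$; here $(1-t)$ collapses the $R(\Z_2)$-summand onto its augmentation $\Z$ (via $t\mapsto 1$) and multiplies the $(1-t)$-summand, on which $t=-1$, by $2$, so $M/(1-t)M\cong\Z\oplus\Z/2$. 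Because $t\equiv 1$ modulo $(1-t)$, the residual $t$-action on the quotient is trivial, identifying it with $(1+t)\oplus\Z/2$, the $\Z/2$ being the unique $R(\Z_2)$-module with underlying group $\Z/2$. This is precisely the asserted value of $K^{1+\widehat{\nu}}_{\Z_2}(\mathcal{B})$.
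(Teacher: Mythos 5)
Your proposal is correct and follows essentially the same route as the paper's proof: the same Mayer--Vietoris cover of $\mathcal{B}$ by two $\Z_2$-invariant pieces $A_j\times\mathcal{B}_y$, local trivialisation of $\widehat{\nu}$ with the discrepancy on one overlap component acting as the sign representation $t$, the resulting gluing map $\Delta(a,b)=(a-b,\,a-tb)$, and the identification of $\ker\Delta\cong(1+t)$ and $\operatorname{coker}\Delta\cong M/(1-t)M\cong(1+t)\oplus\Z/2$. Your explicit justification via the antiperiodic square root $e^{\im k_x/2}$ is a nice elaboration of the paper's statement that the trivialisations can be matched on $\mathcal{B}_y^+$ but differ by the sign representation on $\mathcal{B}_y^-$, but it is the same argument.
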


\begin{proof}
We cover $\mathcal{B}$ by the $\Z_2$-invariant subspaces
\begin{align*}
U &= 
\{ k_x \in \mathcal{B}_x |\ 
-\pi/2 \le k_x \le \pi/2 \} \times \mathcal{B}_y, \\
V &= 
\{ k_x \in \mathcal{B}_x |\ \pi/2 \le k_x \le 3\pi/2 \} \times \mathcal{B}_y,
\end{align*}
each of which is equivariantly homotopic to $\mathcal{B}_y$. Their intersection is identified with the disjoint union $U \cap V = \mathcal{B}_y ^+ \sqcup \mathcal{B}_y^-$ of two copies $\mathcal{B}_y^{\pm}$ of $\mathcal{B}_y$. The Mayer--Vietoris exact sequence associated to this cover is
$$
\begin{CD}
K^{1 + \widehat{\nu}|_{U \cap V}}_{\Z_2}(U \cap V) @<<<
K^{1 + \widehat{\nu}|_{U}}_{\Z_2}(U) \oplus
K^{1 + \widehat{\nu}|_{V}}_{\Z_2}(V) @<<<
K^{1 + \widehat{\nu}|_{U \cap V}}_{\Z_2}(\mathcal{B}) \\
@VVV @. @AAA \\
K^{0 + \widehat{\nu}}_{\Z_2}(\mathcal{B}) @>>>
K^{0 + \widehat{\nu}|_{U}}_{\Z_2}(U) \oplus 
K^{0 + \widehat{\nu}|_{V}}_{\Z_2}(V) @>{\Delta}>>
K^{0 + \widehat{\nu}|_{U \cap V}}_{\Z_2}(U \cap V),
\end{CD}
$$
where, for example, $\widehat{\nu}|_U \in Z^2(\Z_2, C(U, U(1)))$ is the restriction of $\widehat{\nu}$ to $U \subset \mathcal{B}$, and $\Delta$ is realised as $\Delta(u, v) = i_U^*u - i_V^*v$ by using the inclusions $i_U : U \cap V \to U$ and $i_V : U \cap V \to V$. The restricted cocycles can be trivialised, and a choice of these trivialisations induces the isomorphisms
\begin{align*}
K^{n + \widehat{\nu}|_{U}}_{\Z_2}(U)
&\cong K^{n}_{\Z_2}(U)
\cong K^n_{\Z_2}(\mathcal{B}_y), \\
K^{n + \widehat{\nu}|_{V}}_{\Z_2}(V)
&\cong K^{n}_{\Z_2}(V)
\cong K^n_{\Z_2}(\mathcal{B}_y), \\
K^{n + \widehat{\nu}|_{U \cap V}}_{\Z_2}(U \cap V)
&\cong K^{n}_{\Z_2}(U \cap V)
\cong K^n_{\Z_2}(\mathcal{B}_y^+ \sqcup \mathcal{B}_y^-).
\end{align*}
Notice, however, that there is no (global) trivialisation of $\widehat{\nu}$. At best, we can choose the (local) trivialisations on $U$, $V$ and $U \cap V$ so that the trivialisations on $U$ and $V$ agree with that on $\mathcal{B}_y^+$ and their discrepancy on $\mathcal{B}_y^-$ is the sign representation that is a group $1$-cocycle in $Z^1(\Z_2, U(1))$. This sign representation acts on $K^n_{\Z_2}(\mathcal{B}_y^-)$ as an automorphism, and is realised by multiplication by $t \in R(\Z_2)$. Taking the effects of trivialisations into account, we can identify $\Delta$ as the homomorphism
\begin{align*}
\Delta &: K^0_{\Z_2}(\mathcal{B}_x) \oplus K^0_{\Z_2}(\mathcal{B}_x)
\to K^0_{\Z_2}(\mathcal{B}_x^+) \oplus K^0_{\Z_2}(\mathcal{B}_x^-), &
\Delta(u, v) &= (u - v, u - tv).
\end{align*}
With this expression, the Mayer--Vietoris sequence is folded to
$$
0 \to K^{0 + \widehat{\nu}}_{\Z_2}(\mathcal{B}) \to
K^0_{\Z_2}(\mathcal{B}_y) \overset{\delta}{\to}
K^0_{\Z_2}(\mathcal{B}_y) \to
K^{1 + \widehat{\nu}}_{\Z_2}(\mathcal{B}) \to
0,
$$
in which $\delta(w) = (1 - t)w$. Now, we can readily see that
\begin{align*}
K^{0 + \widehat{\nu}}_{\Z_2}(\mathcal{B})
&\cong (1 + t), \\
K^{1 + \widehat{\nu}}_{\Z_2}(\mathcal{B})
&\cong R(\Z_2)/(1 - t) \oplus (1 - t)/(2 - 2t)
\cong (1 + t) \oplus \Z/2,
\end{align*}
as claimed.
\end{proof}

\subsection*{Computation of $K^{\bullet + c + \widehat{\nu}}_{\Z_2}(\mathcal{B}_x)$}
As in Section \ref{sec:finiterankK}, let $\mathcal{B}_x$ be the circle with trivial involution, $c:\ZZ_2\rightarrow\ZZ_2$ the identity map, and $\wh{\nu}$ the cocycle \eqref{Bxcocycle}.
\begin{prop} 
We have the following identifications of $R(\Z_2)$-modules
\begin{align*}
K^{0 + c + \widehat{\nu}}_{\Z_2}(\mathcal{B}_x) &\cong \Z/2, &
K^{1 + c + \widehat{\nu}}_{\Z_2}(\mathcal{B}_x) &=0
\end{align*}
where $\Z/2$ is the unique $R(\Z_2)$-module whose underlying abelian group is $\Z/2$.
\end{prop}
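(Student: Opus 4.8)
The plan is to run the Gysin exact sequence of Section \ref{sec:Gysin} in reverse: rather than reading off the unknown groups, I treat $K^{\bullet+c+\widehat{\nu}}_{\Z_2}(\mathcal{B}_x)$ as the only unknowns in a hexagon whose remaining four terms and whose two pullback maps $\pi^*$ are already pinned down, and then solve by a short diagram chase. First I would recall the Thom isomorphism $K^{n+c+\widehat{\nu}}_{\Z_2}(\mathcal{B}_x)\cong K^{n+\widehat{\nu}}_{\Z_2}(D(\underline{\tilde{\C}}),S(\underline{\tilde{\C}}))$ and the resulting period-two Gysin hexagon relating $K^{\bullet+c+\widehat{\nu}}_{\Z_2}(\mathcal{B}_x)$, $K^{\bullet+\widehat{\nu}}_{\Z_2}(\mathcal{B}_x)$ and $K^{\bullet+\widehat{\nu}}_{\Z_2}(\mathcal{B})$, where $\mathcal{B}=\mathcal{B}_x\times\mathcal{B}_y=S(\underline{\tilde{\C}})$. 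Into this I substitute the known data: $K^{\bullet+\widehat{\nu}}_{\Z_2}(\mathcal{B})$ is $(1+t)\cong\Z$ in degree $0$ and $(1+t)\oplus\Z/2$ in degree $1$ by Proposition \ref{prop:compute_K_of_torus}, while $K^{\bullet+\widehat{\nu}}_{\Z_2}(\mathcal{B}_x)\cong\Z$ in both degrees, via the $\tfrac12\dim$ and degree-of-determinant arguments of Lemma \ref{lem:trivialregularbundle} and Section \ref{sec:Gysin} applied over the circle $\mathcal{B}_x$.

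The heart of the matter is the two pullback maps. In degree $0$, $\pi^*$ sends the generator $\check{\mathcal{E}}_{\rm reg}$ to $\mathcal{E}_{\rm reg}$, a generator of $K^{0+\widehat{\nu}}_{\Z_2}(\mathcal{B})$ by Corollary \ref{cor:regularbundle}, so $\pi^*$ is an isomorphism $\Z\to\Z$ there. In degree $1$, $\pi^*[\check{U}_r]=[U_r]$ is a free generator of $K^{1+\widehat{\nu}}_{\Z_2}(\mathcal{B})\cong\Z[U_r]\oplus\Z/2$ by Corollary \ref{cor:freegenerator}, so $\pi^*$ is injective with cokernel exactly the torsion summand $\Z/2$. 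Moreover the equivariant section $s:\mathcal{B}_x\to\mathcal{B}$ obtained by collapsing $\mathcal{B}_y$ to a fixed point satisfies $s^*\pi^*=\mathrm{id}$, so $\pi^*$ is split injective in both degrees. This is the structural input I would isolate and verify carefully, since the whole computation hinges on $\pi^*$ being injective (forcing the connecting maps into $K^{\bullet+\widehat{\nu}}_{\Z_2}(\mathcal{B}_x)$ to vanish) and on its degree-$1$ cokernel being precisely $\Z/2$; everything downstream is formal once this is secured.

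With these in place the diagram chase is routine and I would carry it out as follows. Split injectivity of $\pi^*$ forces the two maps $K^{n+c+\widehat{\nu}}_{\Z_2}(\mathcal{B}_x)\to K^{n+\widehat{\nu}}_{\Z_2}(\mathcal{B}_x)$ to be zero, their images being $\ker\pi^*=0$. Exactness at the degree-$0$ node $K^{0+\widehat{\nu}}_{\Z_2}(\mathcal{B})$ then shows the push-forward out of it vanishes (its kernel is all of $\mathrm{im}\,\pi^*$), whence exactness at $K^{1+c+\widehat{\nu}}_{\Z_2}(\mathcal{B}_x)$ gives $K^{1+c+\widehat{\nu}}_{\Z_2}(\mathcal{B}_x)=0$. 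For degree $0$, exactness at $K^{0+c+\widehat{\nu}}_{\Z_2}(\mathcal{B}_x)$ identifies it with the image of the push-forward $\pi_*:K^{1+\widehat{\nu}}_{\Z_2}(\mathcal{B})\to K^{0+c+\widehat{\nu}}_{\Z_2}(\mathcal{B}_x)$, whose kernel is $\mathrm{im}\,\pi^*=\Z[U_r]$, so that $K^{0+c+\widehat{\nu}}_{\Z_2}(\mathcal{B}_x)\cong(\Z[U_r]\oplus\Z/2)/\Z[U_r]\cong\Z/2$.

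Finally, since every map in the Gysin sequence is $R(\Z_2)$-linear and $\Z/2$ carries a unique $R(\Z_2)$-module structure (the automorphism group of $\Z/2$ being trivial forces $t$ to act as the identity), the isomorphisms obtained are isomorphisms of $R(\Z_2)$-modules, which is the asserted form. I expect no serious obstacle beyond the map-identification step flagged above; the outcome is also consistent with the overbrace values recorded in Section \ref{sec:Gysin} and with the degree-$0$ computation already established by Corollary \ref{cor:finiteedgeKcomputation}, which provides a useful independent cross-check.
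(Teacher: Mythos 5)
Your proposal is correct in outline but follows a genuinely different route from the paper. The paper proves this proposition by a self-contained Mayer--Vietoris computation: cover $\mathcal{B}_x$ by two closed intervals, identify the local terms with the twisted point groups $K^{n+c}_{\Z_2}(\mathrm{pt})$ (which vanish for $n=0$ and equal $(1-t)\cong\Z$ for $n=1$, by \cite{W,Gomi3}), keep track of the sign-representation discrepancy between the local trivialisations of $\widehat{\nu}$, and reduce the sequence to
$0\to K^{1+c+\widehat{\nu}}_{\Z_2}(\mathcal{B}_x)\to(1-t)\xrightarrow{\,1-t\,}(1-t)\to K^{0+c+\widehat{\nu}}_{\Z_2}(\mathcal{B}_x)\to 0$;
since multiplication by $1-t$ on the module $(1-t)$ is multiplication by $2$, the two claimed groups follow at once. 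Your argument instead runs the Gysin hexagon of Section \ref{sec:Gysin} backwards, consuming Proposition \ref{prop:compute_K_of_torus} and the circle groups $K^{\bullet+\widehat{\nu}}_{\Z_2}(\mathcal{B}_x)$. This is attractive because it exhibits the edge group directly as the torsion quotient of the bulk group, i.e.\ as $\mathrm{coker}\,\pi^*$, making the bulk-edge mechanism transparent; the paper's computation, by contrast, is local and logically prior to (and independent of) the bulk torus computation.

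Two points need repair before your chase is airtight. First, circularity: in the paper, the hexagon of Section \ref{sec:Gysin} is displayed with the appendix values --- including the two groups you are computing --- already inserted, and Corollary \ref{cor:freegenerator} is deduced from that filled-in split sequence; so you cannot cite Corollary \ref{cor:freegenerator} as a black box. You must re-derive its content from inputs independent of the present proposition: split injectivity of $\pi^*$ from the section ($s^*\pi^*=\mathrm{id}$) gives $K^{1+\widehat{\nu}}_{\Z_2}(\mathcal{B})=\mathrm{im}\,\pi^*\oplus\ker s^*$ with $\mathrm{im}\,\pi^*\cong K^{1+\widehat{\nu}}_{\Z_2}(\mathcal{B}_x)\cong\Z$, and comparing rank and torsion subgroups with $\Z\oplus\Z/2$ from Proposition \ref{prop:compute_K_of_torus} forces $\ker s^*\cong\Z/2$, which is exactly the cokernel identification your chase needs. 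You flag this step as the one to ``verify carefully,'' but it must actually be carried out in this order. Second, the input $K^{1+\widehat{\nu}}_{\Z_2}(\mathcal{B}_x)\cong\Z$ is not established by the $\tfrac{1}{2}\dim$ and $\deg\det$ arguments you invoke: those give surjections onto $\Z$, and a surjection $G\to\Z$ does not determine $G$ (if the circle group were, say, $\Z\oplus\Z/2$, your quotient would collapse to $0$). You need either the citation to \cite{SSG2} that the paper itself relies on in Section \ref{sec:Gysin}, or an ungraded analogue of the classification in Lemma \ref{lem:classification_twistd_bundle}, or a Mayer--Vietoris computation parallel to the appendix. With these two points supplied, the diagram chase is valid, and the $R(\Z_2)$-module assertion is automatic since $\Z/2$ and $0$ carry unique module structures.
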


\begin{proof}
The computation is similar to that in the proof of Proposition \ref{prop:compute_K_of_torus}: we cover $\mathcal{B}_x$ by two closed intervals $U$ and $V$ so that $U \cap V \cong \mathrm{pt} \sqcup \mathrm{pt}$, which are invariant with respect to the trivial $\Z_2$-actions. Trivialising the twists $\widehat{\nu}|_U$, $\widehat{\nu}|_V$ and $\widehat{\nu}|_{U \cap V}$, we have
\begin{align*}
K^{n + c + \widehat{\nu}|_{U}}_{\Z_2}(U)
&\cong K^{n + c}_{\Z_2}(U)
\cong K^{n + c}_{\Z_2}(\mathrm{pt}), \\
K^{n + c + \widehat{\nu}|_{V}}_{\Z_2}(V)
&\cong K^{c + n}_{\Z_2}(V)
\cong K^{c + n}_{\Z_2}(\mathrm{pt}), \\
K^{n + c + \widehat{\nu}|_{U \cap V}}_{\Z_2}(U \cap V)
&\cong K^{c + n}_{\Z_2}(U \cap V)
\cong K^{c + n}_{\Z_2}(\mathrm{pt}) \oplus K^{c + n}_{\Z_2}(\mathrm{pt}).
\end{align*}
The $K$-theory $K^{c + n}_{\Z_2}(\mathrm{pt})$ twisted by the identity homomorphism $c : \Z_2 \to \Z_2$ is isomorphic to $K^n_{\pm}(\mathrm{pt})$ in \cite{W}, and is known to be \cite{Gomi3}
\begin{align*}
K^{c + 0}_{\Z_2}(\mathrm{pt}) 
&= 0, &
K^{c + 1}_{\Z_2}(\mathrm{pt})
\cong \overbrace{(1 - t)}^{\Z}.
\end{align*}
Taking care of the choice of the local trivialisations, we can reduce the Mayer--Vietoris exact sequence to
$$
0 \to 
K^{1 + c + \widehat{\nu}}_{\Z_2}(\mathcal{B}_x) \to
\overbrace{(1 - t)}^{\ZZ} \overset{1 - t}{\to}
\overbrace{(1 - t)}^{\ZZ} \to
K^{0 + c + \widehat{\nu}}_{\Z_2}(\mathcal{B}_x) \to
0,
$$
from which $K^{n + c + \widehat{\nu}}_{\Z_2}(\mathcal{B}_x)$ is determined immediately.
\end{proof}

\section{Appendix: Bulk-edge correspondence of integer indices for $\pg$-symmetric Hamiltonians}\label{appendix:integerbec}
There is another natural surjective index map $K^{1 + \widehat{\nu}}_{\Z_2}(\mathcal{B})=\ZZ\oplus\ZZ/2\rightarrow K^0(\mathcal{B}_y)\cong\ZZ$ which is most conveniently formulated using $C^*$-algebraic language, as briefly mentioned in Section \ref{sec:BaumConnes}. Namely, $K^{1 + \widehat{\nu}}_{\Z_2}(\mathcal{B})$ is also the operator algebraic  $K_1(C^*(\pg))$. Now $\pg\cong\ZZ_y\rtimes\ZZ_g$ where we have written $\ZZ_y\cong\ZZ$ and $\ZZ_g\cong \ZZ$ for the subgroups generated by vertical lattice translation and by glide reflection respectively, and the semidirect product is given by the nontrivial reflection action of $\ZZ_g$ on $\ZZ_y$. We can rewrite 
$$C^*(\pg)\cong C^*(\ZZ_y)\rtimes_\alpha\ZZ_g\cong C(\mathcal{B}_y)\rtimes_\alpha\ZZ_g$$
as a crossed product in which $\ZZ_g$ acts on $C(\mathcal{B}_y)$ by the reflection automorphism $\alpha$ taking $(\alpha\cdot f)(k_y)=f(-k_y)$. The Pimsner--Voiculescu (PV) exact sequence \cite{PV} for this crossed product is
\begin{equation*}
\begin{tikzcd}
  K_1(C(\mathcal{B}_y)) \arrow[r, "1-\alpha_*"] & K_1(C(\mathcal{B}_y)) \arrow[r] & K_1(C^*(\pg)) \arrow[d, "\partial^{(1)}_{\rm PV}"] \\
  K_0(C^*(\pg)) \arrow[u, "\partial^{(0)}_{\rm PV}"'] &  K_0(C(\mathcal{B}_y))  \arrow[l] &  K_0(C(\mathcal{B}_y))  \arrow[l, "1-\alpha_*"]
  \end{tikzcd}
\end{equation*}
If we write $K_1(C(\mathcal{B}_y))\cong\ZZ[u_y]$ with $u_y$ the generating unitary corresponding to vertical translation (i.e.\ winds around the Fourier transformed space $\mathcal{B}_y$ once), and $K_0(C(\mathcal{B}_y))=\ZZ[{\bf 1}]$ with ${\bf 1}$ the generating trivial projection given by the identity, then the PV sequence simplifies to
\begin{equation*}
\begin{tikzcd}
  \ZZ[u_y] \arrow[r, "1-(-1)"] & \ZZ[u_y] \arrow[r, "{(0, \text{mod}\, 2)}"] & \ZZ\oplus\ZZ/2 \arrow[d, "\partial^{(1)}_{\rm PV}"] \\
  \ZZ \arrow[u, "\partial^{(0)}_{\rm PV}"] &  \ZZ[{\bf 1}]  \arrow[l, "\cong"] &  \ZZ[{\bf 1}]  \arrow[l, "1-1=0"]
  \end{tikzcd}
\end{equation*}
in which $\partial^{(0)}_{\rm PV}$ is the zero map while the other $\ZZ$-valued connecting ``index'' map 
$$\partial^{(1)}_{\rm PV}:K^{1 + \widehat{\nu}}_{\Z_2}(\mathcal{B})=K_1(C^*(\pg))\rightarrow K_0(C(\mathcal{B}_y))=K^0(\mathcal{B}_y)$$ is surjective. The free generator of $K_1(C^*(\pg))\cong\ZZ\oplus\ZZ/2$ can be taken to be the unitary corresponding to the generating glide reflection, i.e.\ the $[U_r]$ of Section \ref{sec:tightbindingmodel} specifying the Hamiltonian $H_r$ (this has winding number $-1$ around $\mathcal{B}_x$). 

Thus the index map $\partial^{(1)}_{\rm PV}$ counts the winding around $\mathcal{B}_x$, and in our tight-binding model, we see from Fig.\ \ref{fig:Hrg} that it has the analytic interpretation of counting the number of black zero modes (per unit cell) left behind after truncating to the half-space on the right side of a vertical edge. The PV connecting homomorphism, where available, is an important ingredient in the formulation of bulk-edge correspondences of \emph{weak phases} via generalised Connes--Chern character formulae, as studied in \cite{PSB2}. From that point of view, $[U_r]$ is a ``weak'' topological phase, but as the Klein phase in our $\pg$ example shows, the notions of ``weak'' and ``strong'' topological insulators are somewhat blurred in the presence of crystallographic symmetries.

\end{document}